\newcommand{\Oh}{\ensuremath{\mathcal{O}}}
\newcommand{\RR}{\ensuremath{\mathbb{R}}}
\newcommand{\F}{{Fréchet distance }}
\newcommand{\Fm}{{Fréchet distance}}
\newcommand{\cpP}{{\ensuremath{p}}} %Complexity of the extrema of the Prefix of P: (a_1, ..., a_{\cpP})
\newcommand{\cpQ}{{\ensuremath{q}}} %Complexity of the extrema of the Prefix of Q: (b_1, ..., b_{\cpQ})
\newcommand{\X}{\ensuremath{X}}
\newcommand{\Y}{\ensuremath{Y}}
\newcommand{\pre}[1]{\ensuremath{\mathrm{pre}(#1)}}
\newcommand{\suf}[1]{\ensuremath{\mathrm{suf}(#1)}}
\newcommand{\vpre}{\ensuremath{v_\mathrm{pre}}}
\newcommand{\vsuf}{\ensuremath{v_\mathrm{suf}}}
\newcommand{\wpre}{\ensuremath{w_\mathrm{pre}}}
\newcommand{\wsuf}{\ensuremath{w_\mathrm{suf}}}
\newcommand{\ePQ}[2]{{minimal \ensuremath{#1}-matcher on \ensuremath{#2}\xspace}}
\definecolor{darkcyan}{rgb}{0.0, 0.55, 0.55}
\DeclareMathOperator*{\im}{im}
\title{Transforming Dogs on the Line: On the Fréchet Distance Under Translation or Scaling in 1D}
 \titlerunning{Transforming Dogs on the Line}
\author{Lotte Blank}{University of Bonn, Germany}{lblank@uni-bonn.de}{https://orcid.org/0000-0002-6410-8323}{Funded by the Deutsche Forschungsgemeinschaft (DFG, German Research Foundation) – 459420781 (FOR AlgoForGe)}
\author{Jacobus Conradi}{University of Bonn, Germany}{conradi@cs.uni-bonn.de}{https://orcid.org/0000-0002-8259-1187}{Funded by the iBehave Network: Sponsored by the Ministry of Culture and Science of the State of North Rhine-Westphalia.}
\author{Anne Driemel}{University of Bonn, Germany}{driemel@cs.uni-bonn.de}{https://orcid.org/0000-0002-1943-2589}{Affiliated with Lamarr Institute for Machine Learning and Artificial Intelligence.}
\author{Benedikt Kolbe}{University of Bonn, Germany}{bkolbe@uni-bonn.de}{https://orcid.org/0009-0005-0440-4912}{This work was partially supported by the Lamarr Institute for Machine Learning and Artificial Intelligence.}
\author{André Nusser}{Université Côte d'Azur, CNRS, Inria, France}{andre.nusser@cnrs.fr}{https://orcid.org/0000-0002-6349-869X}{This work was supported by the French government through the France 2030 investment plan managed by the National Research Agency (ANR), as part of the Initiative of Excellence of Université Côte d'Azur under reference number ANR-15-IDEX-01.}
\author{Marena Richter}{University of Bonn, Germany}{marenarichter@uni-bonn.de}{https://orcid.org/0009-0007-8250-266X}{}
\authorrunning{Blank, Conradi, Driemel, Kolbe, Nusser and Richter}
\keywords{Fréchet distance under translation, Fréchet distance under scaling, time series, shape matching} 
\begin{document}

\maketitle

\begin{abstract}

The Fréchet distance is a computational mainstay for comparing polygonal curves.
The Fréchet distance under translation, which is a translation invariant version, considers the similarity of two curves independent of their location in space.
It is defined as the minimum Fréchet distance that arises from allowing arbitrary translations of the input curves.
This problem and numerous variants of the Fréchet distance under some transformations have been studied, with more work concentrating on the discrete Fréchet distance, leaving a significant gap between the discrete and continuous versions of the Fréchet distance under transformations.
% Our contribution is twofold:
% First, we present an algorithm for the Fréchet distance under translation on 1-dimensional curves of complexity $n$ with a running time of $\mathcal{O}(n^{8/3} \log^3 n)$, improving the previously best known bound of $\mathcal{O}(n^4 (\log \log n)^2)$ and matching the running time of the discrete case.
% To achieve this, we develop a novel framework for the problem for 1-dimensional curves, which also applies to other scenarios and leads to our second contribution.
% We present an algorithm with the same running time of $\mathcal{O}(n^{8/3} \log^3 n)$ for the Fréchet distance under scaling for 1-dimensional curves, which constitutes the first algorithm for the Fréchet distance under scaling specifically for 1-dimensional curves. 
% Our algorithms rely on key technical insights, but are conceptually simple, essentially reducing the continuous problem to the discrete case across different length scales.
Our contribution is twofold:
First, we present an algorithm for the Fréchet distance under translation on 1-dimensional curves of complexity $n$ with a running time of $\mathcal{O}(n^{8/3} \log^3 n)$.
To achieve this, we develop a novel framework for the problem for 1-dimensional curves, which also applies to other scenarios and leads to our second contribution.
We present an algorithm with the same running time of $\mathcal{O}(n^{8/3} \log^3 n)$ for the Fréchet distance under scaling for 1-dimensional curves.
For both algorithms we match the running times of the discrete case and improve the previously best known bounds of $\tilde{\mathcal{O}}(n^4)$.
Our algorithms rely on technical insights but are conceptually simple, essentially reducing the continuous problem to the discrete case across different length scales.

%%%%%
\end{abstract}

\section{Introduction}

The Fréchet distance \cite{frechet1906} is one of the most well-studied distance measures for polygonal curves, with a long history in both applications \cite{BrakatsoulasPSW05,Cleasby2019,DBLP:conf/gis/BuchinDLN19} and algorithmic research \cite{AltG95,DBLP:journals/dcg/DriemelHW12,DBLP:journals/talg/Har-PeledR14}.
For many areas of interest, including the analysis of stock market trends, electrocardiograms or electroencephalograms, the underlying behavior to be analyzed crucially involves the behavior of 1-dimensional curves, or \emph{time series} \cite{BIRAN2023104575,NIU2013489}.
% \andre{The second citation does not include Fréchet at all, no?}\ben{Is this a problem? Its a reference showing that time series are interesting.}
One of the advantages that the Fréchet distance offers over competing distance measures such as the Hausdorff distance is that the Fréchet distance considers a joint traversal of both curves, which naturally takes the orientation and ordering of the vertices of both curves into account.
For 1-dimensional curves, the difference is particularly jarring, as representing a \emph{continuous} curve by the points lying on it essentially discards all information concerning the underlying dynamics when measuring the Hausdorff distance. 
% \andre{improve this sentence} % \lotte{Hausdorff distance does not make much sense at all in 1D.}.
Intuitively, the Fréchet distance between two curves can be thought of the shortest possible length of a rope that can connect two climbers that cooperatively travel along their respective routes from start to end.
Another well-known analogy stems from the metaphor of a person walking a dog. In this setting, the connecting rope corresponds to a dog leash and we ask for the shortest leash that can be used for the walk.

The algorithmic complexity of the computation of the Fréchet distance has been thoroughly investigated and is well understood. For curves of complexity $n$, the Fréchet distance can be computed in time $\tilde{\mathcal{O}}(n^2)$ \cite{AltG95,BuchinBMM14,DBLP:journals/corr/abs-2407-05231}, where $\tilde{\mathcal{O}}$ hides polylogarithmic factors in $n$. Furthermore, this is hard in the sense that there cannot exist an $\mathcal{O}(n^{2-\epsilon})$ time algorithm for any $\epsilon>0$, unless the Strong Exponential Time Hypothesis (SETH) fails~\cite{Bringmann14,DBLP:conf/soda/BuchinOS19}. 

% \andre{The following paragraph should point more in the direction that the continuous Fréchet distance is what one wants, but the discrete version is often easier to handle and to implement, hence it is often used as an alternativ. We should clarify that us showing the same bound for the continuous version is not just "showing it for the other version" but actually showing it for the original version!}
The Fréchet distance has been studied in many different contexts, leading to a plethora of variants, inspired both by applications as well as theoretical considerations, see~\cite{DBLP:conf/compgeom/BuchinNW22,Conradi2024, DBLP:journals/siamcomp/DriemelH13, DBLP:journals/jocg/FiltserK20}.
Each of the available variants fits into one of two classes, according to whether only the distances of vertices of the curves are considered (the discrete version), or whether the computed distance depends on all points along the straight line segments connecting them (the continuous variant).

%While the continuous Fréchet distance is the original definition of the Fréchet distance and also tends to give better practical results (e.g., no discretization artifacts), the discrete Fréchet distance is often easier to handle algorithmically as well as implementation-wise.
% \ben{changed this paragraph , sicne I dont think it matters that the original def of frechet was the continous and for clarity.}
% While the continuous Fréchet distance naturally treats the information stored in a curve holistically and tends to give better practical results (e.g., no discretization artifacts), the discrete Fréchet distance is often easier to handle algorithmically as well as implementation-wise.
While the continuous Fréchet distance naturally considers all points between two vertices and tends to give better practical results (e.g., no discretization artifacts), the discrete Fréchet distance is often easier to handle algorithmically as well as implementation-wise.
One overarching goal of Fréchet related research is to match the algorithmic results that are attained in the discrete version with those in the continuous case.
%Hence, one goal in this research direction is to match the algorithmic results that are attained in the discrete version with those in the continuous case.
% On the other hand, each of the available variants fits into one of two classes, according to whether only the distances of vertices of the curves are considered (the discrete version), or if the computed distance depends on all points along the straight line segments connecting them (the continuous variant).
% The above bounds on the computational complexity are valid for both the continuous and the discrete Fréchet distance. 

\subparagraph*{Fréchet distance under translation or scaling.}
% PLAN:
% - use cardiogram to motivate both
% - use stock markets to motivate both
% - define both measures

% An important aspect of detecting the movement patterns in different areas of application comes from shape matching of their characteristic curves, where the aim is to to extract their underlying relative movement.
An important aspect of detecting movement patterns in different application  areas is to consider the movement independent of its absolute location and scale, i.e., we want to know for which location and scale do the patterns look most similar.
Consider, for example, a cardiogram of a patient with heart arrhythmia~\cite{BIRAN2023104575}.
Intuitively, some of the shared characteristics of the observed heartbeats do not depend on their absolute location/scaling but instead on the relative movement of the wave.
Similarly, identifying a trend in the stock market does not necessarily depend on the exact values of the stocks involved.
Instead, a group of stocks may display similar relative behavior although their absolute values are far apart, in which case a natural approach is to allow arbitrary translation and rescaling of the curves to distill their concurrent behavior.
% These motivations directly lead to the idea of the Fréchet distance under transformations (e.g., translation, scaling, rotation, affine).
These motivations directly lead to the idea of the Fréchet distance under translation or scaling.
% Concretely, the Fréchet distance under a specific type of transformations is defined as the minimum Fréchet distance after applying any transformation of that type.
Concretely, the Fréchet distance under translation (resp. scaling) is the minimum Fréchet distance that we obtain by allowing an arbitrary translation (resp. scaling) of one curve.
% In such a case, the Fréchet distance under scaling, i.e., the minimum Fréchet distance that allows arbitrary scaling of one curve, is potentially the right similarity measure of choice.

%
% BELOW IS THE PREVIOUS VERSION OF THIS PARAGRAPH
% 
% An important aspect of detecting the movement patterns in different areas of application comes from shape matching of their characteristic curves, where the aim is to to extract their underlying relative movement.
% Specifically, the idea is to match one of the two curves in question by mapping it to the other as well as possible, using a fixed set of transformations.
% Accordingly, the Fréchet distance under translation is the Fréchet distance evaluated after applying the best possible translation to match the curves.
% Consider, for example, a cardiogram of a patient with  heart arrhythmia.
% Intuitively, some of the shared characteristics of the observed heartbeats do not depend on their averages but instead on the relative movement of the wave.
% Similarly, identifying a trend in the stock market does not necessarily depend on the exact values of the stocks involved.
% Instead, a group of stocks may display similar relative behavior although their absolute values are far apart, in which case a natural approach would be to rescale some of the curves to distill their concurrent behavior.
% When the allowed transformations for the curve matching are arbitrary rescalings applied to one of the two curves, we obtain the Fréchet distance under scaling. 

\subparagraph*{Related work.}

% The situation is better on the discrete side, which has seen incremental improvements of the available algorithms as well as hardness results.

%As elaborated previously, due to the simpler setting, there often are more results in the discrete setting, which is also the case for the \emph{discrete} Fréchet distance under translation.
As mentioned previously, due to the simpler setting, there are often more results in the discrete setting, as is also the case for the \emph{discrete} Fréchet distance under translation.
This problem was previously mainly considered in the Euclidean plane.
The first algorithm due to Jiang, Xu, and Zhu~\cite{JiangXZ08} builds an arrangement of size $\Oh(n^4)$ over the translation space, and then queries an arbitrary representative of each cell of this arrangement with the quadratic-time Fréchet distance algorithm, leading to an $\Oh(n^6 \log n)$ running time (where the logarithmic factors stems from the usage of parametric search).
Avraham, Kaplan, and Sharir~\cite{DBLP:journals/corr/AvrahamKS15} crucially observed that instead of re-computing the Fréchet distance for each cell of the arrangement, one can instead formulate the problem as an online dynamic reachability problem on a grid graph and provide a data structure with update time $\Oh(n)$, leading to an $\Oh(n^5 \log n)$ algorithm.
The current best-known algorithm observes that these updates are offline, i.e., they can be known in advance, and design a tailored data structure for this problem reducing the update time to $\Oh(n^{2/3} \log^2 n)$, leading to a running time of $\Oh(n^{4+\frac{2}{3}} \log^3 n)$.
\footnote{Meanwhile, this problem was also tackled from the algorithm engineering side, making use of Lipschitz optimization instead of dynamic graph algorithms \cite{BKN20}.}
This result is complemented by a lower bound of $\mathcal{O}(n^{4-o(1)})$, conditional on the Strong Exponential Time Hypothesis (SETH), for the discrete Fréchet distance under translation in the plane~\cite{DBLP:journals/talg/BringmannKN21}.

The currently best algorithm for discrete time series that is explicitly described in the literature has running time $\mathcal{O}(n^3)$~\cite{Filtser2020}, based on the solution to the decision version of the problem in~\cite{DBLP:journals/corr/AvrahamKS15}.
We note that the approach of Bringmann, Künnemann, and Nusser~\cite{DBLP:journals/talg/BringmannKN21} is also directly applicable to the 1-dimensional problem, which leads to an $\Oh(n^{{8}/{3}} \log^3 n)$ algorithm.\footnote{This is a result of the arrangement size in $\mathbb{R}$ being $n^2$, while in $\mathbb{R}^2$ it is $n^4$. Hence, the running time is reduced by a factor of $n^2$.}
% the reduction of an $n^2$ factor comes from the smaller size of the 1-dimensional arrangement.

% A more recent addition to the zoo of Fréchet variants is the Fréchet gap~\cite{filtser2015discrete,Fan2021}, which is similar to the Fréchet distance under translation in that it also identifies translated versions of the same curve. 	Generally, as one might expect, the discrete Fréchet distance and its variants allows for more efficient computations than its continuous counterpart. 

The \emph{continuous} Fréchet distance under translations was first studied in 2001~\cite{DBLP:conf/stacs/AltKW01,AltKW04,efrat2001pattern}.
Shortly after, Wenk~\cite{wenk2002phd} considered the Fréchet distance under scaling and developed the currently best published algorithms for both variants (translation or scaling) in the continuous setting in $d$-dimensional Euclidean space.
For time series, these algorithms achieve a running time of $\mathcal{O}(n^5\log n)$.
We note that for time series there is an unpublished $\tilde{\Oh}(n^4)$ algorithm for both problems that can be considered folklore.
We briefly sketch the idea here.
First, we build the arrangement of intervals separated by events at which the decision problem for the Fréchet distance under translation/scaling is subject to change.
% The crux is that in one dimension, there are only $\Oh(n^2)$ such events, as the for the edge-vertex-vertex events that appear in higher dimensions, there can be encoded in vertex-vertex events.
The crux is that in one dimension, there are only $\Oh(n^2)$ such events, as the edge-vertex-vertex events that appear in higher dimensions can be omitted as they are captured by the vertex-vertex events (see \cref{cor:CompuTranslationArrangement} and \cref{cor:CompuScalingArrangement} for more details).
Thus, computing the Fréchet distance for an arbitrary representative translation/scaling in each cell together with parametric search \cite{Cole87} yields an $\tilde{\Oh}(n^4)$ algorithm.
We want to stress that, while indeed the events in one dimension only depend on vertex-vertex alignments, this does not directly enable the usage of the grid reachability data structures of~\cite{DBLP:journals/corr/AvrahamKS15}~and~\cite{DBLP:journals/talg/BringmannKN21}, as we still have to compute the \emph{continuous} Fréchet distance for each cell.

% \andre{Removed a paragraph here; maybe should be used in our contributions instead.}
% So far, investigations into the problem for time series have only made limited use of the fact that the curves in question are one-dimensional. In fact, the known results essentially follow from the study of curves in higher dimensions, with the notable exception of~\cite{Filtser2020}, where it is shown that one can bypass the use of parametric search for discrete curves in 1D for the computation of the distance from the decision variant. This shortcoming is exactly what we address in this paper, showing that by carefully setting up technical machinery that exploits the rather stringent inherent structure of 1D curves, we essentially reduce the problem to the discrete case at multiple length scales.    
 
\subsection{Our Contributions}
%%%%%%%
% André ideas to strengthen this section:
% - highlight more that we match the discrete running time
% - first progress on continuous Fréchet distance under translation/scaling since Wenk
% - general machinery to reduce from continuous to discrete in 1-dimensional space

We present the first new result on the continuous Fréchet distance under translation and the continuous Fréchet distance under scaling since the results by Wenk~\cite{wenk2002phd} in 2002.
Concretely, we obtain the following two theorems:
\begin{restatable}{theorem}{mainresulttranslations}\label{thm:mainresulttranslations}   
%\begin{theorem}
    There exists an algorithm to compute the continuous Fréchet distance under translation between two time series of complexity $n$ in time $\Oh(n^{8/3} \log^3 n)$.
%\end{theorem}
\end{restatable}
\begin{restatable}{theorem}{mainresultscaling}\label{thm:mainresultscalings}  
%\begin{theorem}
    There exists an algorithm to compute the continuous Fréchet distance under scaling between two time series of complexity $n$ in time $\Oh(n^{8/3} \log^3 n)$.
%\end{theorem}
\end{restatable}
These results are made possible by the introduction of a novel framework for studying (continuous) time series~\cite{BD24} and the offline dynamic grid reachability data structure of~\cite{DBLP:journals/talg/BringmannKN21}.
Our approach essentially reduces (in an algorithmic sense) the continuous problems to their discrete counterparts and hence surprisingly matches the results for the two distance measures in the discrete setting.
We note that in the Euclidean plane, there still is a large gap between the discrete and continuous setting and, while there was significant process on the discrete side, there was no progress for the continuous Fréchet distance under translation or scaling.
We believe that our approach is also of independent interest and can potentially be applied to other continuous 1-dimensional Fréchet distance problems, to match the running times of the discrete setting.

\subparagraph*{The asymmetric case.} We remark that the offline dynamic grid reachability data structure of~\cite{DBLP:journals/talg/BringmannKN21} is only described for the symmetric case in which both curves have the same complexity. While we believe that this restriction can be eliminated, it would drastically complicate the exposition of our results, which is why we also make this assumption in the statements of our results.
Otherwise this restriction would not be necessary.

% TODO: we should move the techniques to the technical overview!

%%%% END NEW 
\ \\
We note that in an independent work, progress was made on the Fréchet distance under transformations in general dimensions~\cite{BBHNW25}. In this work, the authors present an $\tilde{\Oh}(n^{3k + 4/3})$ algorithm for the Fréchet distance under transformations with $k$ degrees of freedom, making use of the offline dynamic data structure of~\cite{DBLP:journals/talg/BringmannKN21}.

\subsection{Technical Overview}
On a high level, our approach for the decision problem for a given $\delta>0$ for both the translation and scaling invariant Fréchet distance can be summarized in the following three steps.
\begin{enumerate}
    \item The curves are represented by their (slightly adapted) $\delta$-signatures~\cite{DBLP:conf/soda/DriemelKS16}, which are simplifications at the given length-scale $\delta$ and encode the large scale behavior of the curve in terms of a subset of selected vertices that essentially form a discrete curve.
    Curves naturally decompose into three parts, the beginning, middle, and end, with the middle part given by the $\delta$-signature.
    % The task is to combine the decision problem for all parts in such a way that they can be efficiently processed when the transformation is updated.where re
    The task is to design and choose data structures for all parts in such a way that they can be efficiently updated and combined when the transformation changes.
    \item To process the beginning and end of the curves, we introduce another simplification, and subsequently identify the first point in the beginning of one curve that matches the entire beginning of the other, and likewise for the end parts, where we instead identify the last point that allows a matching. The crux is that we need to be able to do this efficiently for different transformations. We achieve this by introducing the structural notion of deadlocks. 
    \item We identify and precompute a set of $\Oh(n^2)$ representative transformations at which the answer to the decision problem is subject to change, and for which the answer can be updated efficiently as we sweep over them. Since each part of the curves essentially behaves like a discrete curve, the decision problem ultimately reduces to a reachability question for an object very similar to the well-known free space matrix used in the decision problem for the discrete Fréchet distance. As the best algorithm for the discrete Fréchet distance under translation, we then make use of the offline dynamic grid reachability data structure~\cite{DBLP:journals/talg/BringmannKN21} to implement updates and queries efficiently.
\end{enumerate}

\subparagraph*{A note on parametric search.}
As is common, in this work we mostly describe how we solve the decision version, to then subsequently apply parametric search \cite{Cole87}.
We note that the algorithm for the discrete Fréchet distance under translation on time series of~\cite{Filtser2020} cleverly avoids parametric search, and hence shaves a logarithmic factor that would otherwise appear when directly applying~\cite{DBLP:journals/corr/AvrahamKS15}.
One reason why we cannot take a similar approach --- ignoring the fact that this technique is used for the discrete setting --- is that the avoidance of parametric search in~\cite{Filtser2020} leads to updates that crucially depend on the results of the decider calls; the updates are therefore \emph{online}.
Hence, a combination of~\cite{Filtser2020} with the \emph{offline dynamic} data structure of~\cite{DBLP:journals/talg/BringmannKN21} seems infeasible.
% However, combining the approach of~\cite{DBLP:journals/talg/BringmannKN21} with parametric search results in an $\Oh(n^{8/3} \log^3 n)$ algorithm for the discrete Fréchet distance under translation.

%\subsection{Further Related Work}

\section{Preliminaries}\label{sec:prelims}
We denote with $[n]$ the set $\{1, \dots, n\}$.
For any two points $p_1, p_2\in \RR$, we denote with $\overline{p_1p_2}$ the directed line segment from $p_1$ to $p_2$. 
A \emph{time series} of complexity $n$ is a 1-dimensional curve formed by $n$ points $P(1),\ldots,P(n)\in\RR$ and the ordered line segments $\overline{P(i)P(i+1)}$ for $i=1,\ldots,n-1$. 
Such a time series can be viewed as a function $P:[1, n]\rightarrow\RR$, where $P(i+\alpha)=(1-\alpha)P(i)+\alpha P(i+1)$ for $i=1, \dots, n-1$ and $\alpha\in [0,1]$. Sometimes, we denote $P$ as $\langle P(1), P(2), \dots, P(n)\rangle$. The points $P(i)$ are called \emph{vertices} of $P$ and $\overline{P(i)P(i+1)}$ are called \emph{edges} of $P$ for $i=1,\dots,n-1$. For $1\leq s\leq t\leq n$, we denote by $P[s,t]$ the subcurve of $P$ obtained from restricting the domain to the interval $[s,t]$. Further, we denote with $B(P, \delta)$ the set of points having distance at most $\delta$ to a point of $P$, i.e., $B(P, \delta)\coloneqq\{x\mid \exists a\in [1, n]\text{ s.t. }|x-P(a)|\leq \delta\}$. %\ben{do we use $\delta$-range as associated to a curve or mostly just to mean 1D ball?}\lotte{I think it is always a 1D ball.}
We define $\im(P)$ to be the image of the time series~$P$, meaning that $\im(P)\coloneqq\{P(i)\mid i\in[1,n]\}$. 

%\subparagraph{Blanketing assumption:} %Throughout this paper, we assume that all time series are in \emph{general position}, meaning that $P(i)\neq P(j)$ for all $i,j\in\{1,\dots,n\}$.
%Throughout the paper, we make the assumption that the time series are in \emph{general position}. Concretely, whenever we encounter a situation where multiple unrelated vertex-vertex distances coincide or vanish, we fix a consistent arbitrary order that determines how to deal with the resulting concurrent events, that is, we treat events occurring in our analysis as only pertaining to two vertices.
\subparagraph{Degenerate point configurations.}
Our results do not depend on the assumption that the time series are in general position, as all such occurrences can be easily dealt with using arbitrary tie-breaks, or \emph{symbolic perturbation}, where necessary. Every such instance is a result of degenerate vertex-vertex distances, and we discuss more concretely how they are dealt with when they arise.

% \andre{It should be very clear where this assumption ends (i.e., that it's the end of this paragraph).}\ben{i started the next paragraph differently to hopefullyy capture this.}

% We now introduce two common ways to modify a time series.
% \begin{definition}[translated time series]
%     For a time series $P=\langle P(1), \dots, P(n)\rangle$ and a value $t\in\RR$, the \emph{translated time series $P_t$} is $\langle P(1)+t, \dots, P(n)+t\rangle$. 
% \end{definition}

% \begin{definition}[scaled time series]
%     For a time series $P=\langle P(1), \dots, P(n)\rangle$ and a value $s\in\RR$, the \emph{scaled time series $sP$} is $\langle s P(1), \dots, sP(n)\rangle$.
%     %Let $P$ be a time series defined by points $p_1,\ldots,p_n$. Let $s>0$ be given. Define the \emph{scaled time series $sP$} as the time series defined by the points $sp_1,\ldots,sp_n$.    
% \end{definition}

%\subsection{Fréchet Distance}
To define the Fréchet distance, let $P$ and $Q$ be two time series of complexity $n$ and $m$. 
Further, let $\mathcal{H}_n$ be the set of all continuous non-decreasing functions $h:[0,1]\rightarrow[n]$ with $h(0)=1$ and $h(1)=n$. 
%A \emph{traversal} $f$ of $P$ and $Q$ consists of two continuous non-decreasing functions $f_P\colon[0,1]\rightarrow[1,n]$ and $f_Q\colon[0,1]\rightarrow[1,m]$ with $f_P(0)=1$, $f_P(1)=n$, $f_Q(0)=1$, and $f_Q(1)=m$. Let $\mathcal{F}$ be the set of all traversals of $P$ and $Q$. 
The \emph{continuous Fréchet distance} is defined as
\[d_F(P,Q)=\min_{h_P\in \mathcal{H}_n, h_Q\in \mathcal{H}_m}\max_{\alpha\in[0,1]}|P(h_P(\alpha))-Q(h_Q(\alpha))|.\]
% Intuitively, the continuous Fréchet distance is the shortest possible leash that can connect a dog and a person when the person is walking along $P$ and the dog is walking along $Q$ and both cannot walk backwards.
% Similarly, we can define the discrete Fréchet distance, which represents the shortest possible leash when the person and the dog can only jump from one vertex of their time series to the next vertex in each step. \lotte{I don't think we need the definition of the discrete \F anymore. } \andre{I agree} More precisely, we define a \emph{coupling} of $P$ and $Q$ to be an ordered sequence $(i_1,j_1),\ldots,(i_k,j_k)$ of tuples in $\mathbb{N}\times\mathbb{N}$, where $(i_1,j_1)=(1,1)$, $(i_k,j_k)=(n,m)$ and $(i_{l+1},j_{l+1})\in\{(i_l,j_l+1,),(i_l+1,j_l),(i_l+1,j_l+1)\}$. Denote by $\mathcal{C}$ the set of all couplings of $P$ and $Q$. The \emph{discrete Fréchet distance} is defined as
% \[D_F(P,Q)=\min_{C\in\mathcal{C}}\max_{(i,j)\in C}|P(i)-Q(j)|.\]
% %We can also ask to find the best translation or scaling of the time series to minimize the Fréchet distance.

In this paper, we aim to determine the optimal translation or scaling of a time series that minimizes the Fréchet distance.
For a time series $P=\langle P(1), \dots, P(n)\rangle$ and a value~$t\in\RR$, the \emph{translated time series $P_t$} is $\langle P(1)+t, \dots, P(n)+t\rangle$ and the \emph{\F under translation} is defined as
\[d_F^T(P,Q)=\min_{t\in\RR}d_F(P,Q_t).\]
%For a time series $P=\langle P(1), \dots, P(n)\rangle$ and
For a value $s\in\RR$, we define the \emph{scaled time series $sP$} of $P$ to be $sP=\langle s P(1), \dots, sP(n)\rangle$ and the \emph{\F under scaling} is defined as %\jacobus{this is directional. Should we also define the bidirectional distance? similar to hausdorff}
\[d_F^S(P,Q)=\min_{s\in\RR_{\geq0}}d_F(P,sQ).\]

In contrast to the Fréchet distance under translation, the Fréchet distance under scaling is not symmetric. The above is the directed version, and one obtains the undirected version by minimizing over both directed variants. 
%Given some $\delta\geq0$, the \emph{decision problem} for all variants of the \F answers the question whether the \F of $P$ and $Q$ is at most $\delta$. The \emph{optimization problem} of the \F finds $\delta$ such that the \F of $P$ and $Q$ is equal to $\delta$.
Whenever we just refer to the \Fm, we mean the continuous Fréchet distance.

\subsection{Signatures and Coupled Visiting Orders}
A $\delta$-signature of a time series $P$ is a time series that consists of important maxima and minima of $P$. The concept of signatures was introduced by Driemel, Krivo\v{s}ija and Sohler~\cite{DBLP:conf/soda/DriemelKS16}. Later, Blank and Driemel~\cite{BD24} extended the $\delta$-signature by adding one vertex in the beginning and one in the end to obtain an extended $\delta$-signature. In this paper, we only use extended $\delta$-signatures. See \Cref{fig:example_signature} for an example.
Hence, whenever we say signature we mean extended signature. 
To define extended $\delta$-signatures, we use the notion of $\delta$-monotone time series. Examples of $\delta$-monotone time series can be found in Figure~\ref{fig:monotonecurves}.

%\lotte{I changed the definition of a $\delta$-signature such that the extreme points before the second and second last $\delta$-signature vertex are also part of the extended $\delta$-signature.}
\begin{figure}
    \centering
    \includegraphics{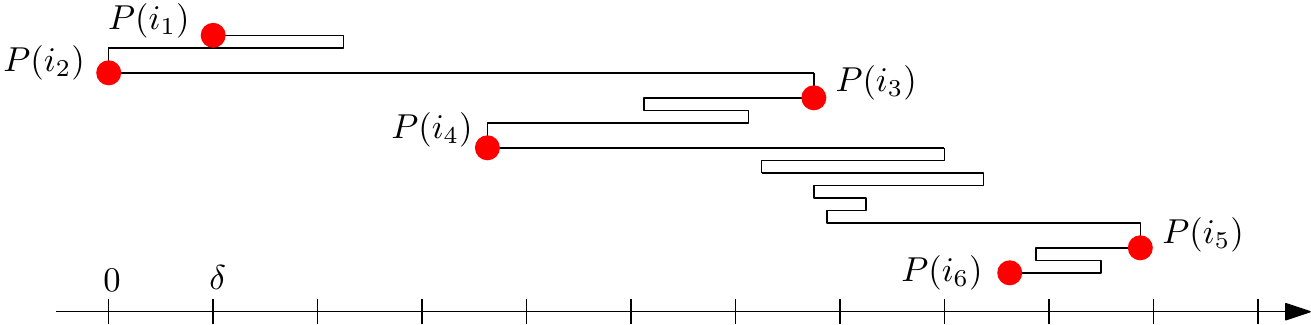}
    \caption{Throughout this paper, vertices of time series are drawn as vertical segments for clarity. The red vertices of the time series $P$ are its $\delta$-signature vertices. After linearly interpolating those red vertices, we get the extended $\delta$-signature of $P$.}
    \label{fig:example_signature}
\end{figure}

\begin{figure}
    \centering
    \begin{subfigure}[b]{0.48\textwidth}
        \centering
        \includegraphics[page=1]{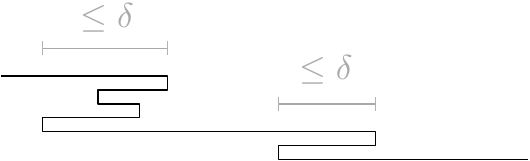}
        %\subcaption{\lotte{add}}
    \end{subfigure}
    \centering
    \begin{subfigure}[b]{0.48\textwidth}
        \centering
        \includegraphics[page=2]{Figures/delta-monotone.pdf}
        %\subcaption{\lotte{add}}
    \end{subfigure}
    \caption{The left (resp. right) time series is $\delta$-monotone increasing (resp. decreasing).} %and the right time series $\delta$-monotone decreasing.}
    \label{fig:monotonecurves}
\end{figure}
\begin{definition}
A time series $P:[1, n]\rightarrow \mathbb{R}$ is \emph{$\delta$-monotone increasing (resp. decreasing)} if it holds that $P(s)\leq P(s')+\delta$ (resp. $P(s)\geq P(s')-\delta$)  for all $s<s'\in [1, n]$.
\end{definition}
\begin{definition}[extended $\delta$-signature]
    Let $P=\langle P(1), \dotso, P(n)\rangle$ be a time series and $\delta\geq 0$.
    Then, an \emph{extended $\delta$-signature} $P'=\langle P(i_1), \dotso, P(i_t)\rangle$ with $1=i_1\leq i_2<\ldots<i_{t-1}\leq i_t=n$ of $P$ is a time series with the following properties:
    \begin{enumerate}[(a)]
        \item \emph{(non-degenerate)} For $k=2, \dotso, t-1$, it holds that $P(i_k)\notin \overline{P(i_{k-1}) P(i_{k+1})}$.
        \item \emph{($2\delta$-monotone)} For $k=1, \dotso, t-1$, $P[i_k, i_{k+1}]$ is $2\delta$-monotone increasing or decreasing.
        \item \emph{(minimum edge length)} If $t>4$, then for $k=2, \dotso, t-2$,  $|P(i_k)-P(i_{k+1})|>2\delta$.
        \item \emph{(range)} 
        \begin{itemize}
            \item For $k=2, \dotso, t-2$, it holds that $\im(P[i_{k}, i_{k+1}])= \overline{ P(i_{k}) P(i_{k+1})}$, and
            \item %if $P(i_2)$ is a local minimum (resp. maximum), then 
            $\im(P[1, i_2])\subset [P(i_2), P(i_2)+2\delta]$ or $\im(P[1, i_2])\subset [P(i_2)-2\delta, P(i_2)]$, and 
            \item %if $P(i_{t-1})$ is a local minimum (resp. maximum), then 
            $\im(P[i_{t-1}, n])\subset [P(i_{t-1}),$ $ P(i_{t-1})+2\delta]$ or $\im(P[i_{t-1}, n])\subset [P(i_{t-1})-2\delta, P(i_{t-1})]$.
        \end{itemize}
        
        % For $k=2, \dotso, t-2$, it holds that 
        % $\im(P[i_{k}, i_{k+1}])= \overline{ P(i_{k}) P(i_{k+1})}$ and $\im(P[1, i_2])\subset [P(i_2), P(i_2)+2\delta]$ (resp. $\im(P[1, i_2])\subset [P(i_2)-2\delta, P(i_2)]$) 
        % if $P(i_2)$ is a local minimum (resp. local maximum)
        
        % and $\im(P[i_{t-1}, n])\subset [P(i_{t-1}), P(i_{t-1})+2\delta]$ (resp. $\im(P[i_{t-1}, n])\subset [P(i_{t-1})-2\delta, P(i_{t-1})]$) if $P(i_{t-1})$ is a local minimum (resp. local maximum).
    \end{enumerate}
    The vertices of $P'$ are called \emph{$\delta$-signature vertices} of $P$.
\end{definition}

Driemel, Krivo\v{s}ija and Sohler~\cite{DBLP:conf/soda/DriemelKS16} showed that there always exists a $\delta$-signature and 
%, that it is unique under general position assumptions and 
that it can be computed in $\mathcal{O}(n)$ time\footnote{They proved this statement for $\delta$-signatures, but it can easily be shown for extended $\delta$-signatures as~well.}.  
Bringmann, Driemel, Nusser and Psarros~\cite{DBLP:conf/soda/BringmannDNP22} defined $\delta$-visiting orders and Blank and Driemel~\cite{BD24} used this concept to define coupled $\delta$-visiting orders (see \Cref{fig:coupled_visiting_order}), which can be used to decide whether the \F between two time series is at most $\delta$. %\Cref{fig:coupled_visiting_order} visualizes a 

\begin{definition}[coupled $\delta$-visiting order]\label{def:coupledVisitingOrder}
    Consider two time series $P=\langle P(1), \dotso, P(n)\rangle$ and $Q=\langle Q(1), \dotso, Q(m)\rangle$. 
    Then, a $\delta$-visiting order of $Q$ on $P$ for the $\delta$-signature vertices $Q(j_1), \dotso, Q(j_{s_Q})$ is a sequence of indices $x_1\leq\cdots\leq x_{s_Q}$ such that $|P(x_k)-Q(j_k)|\leq \delta$ for all $k$.  We similarly define a $\delta$-visiting order $y_1\leq\cdots\leq y_{s_P}$ of $P$ on $Q$ for the $\delta$-signature vertices $P(i_1), \dotso, P(i_{s_P})$ of $P$.
    These two $\delta$-visiting orders are said to be crossing-free if there exists no $k, l$ such that $i_k< x_l$ and $j_l<y_k$, or $x_k< i_l$ and $y_l<j_k$. In this case, the ordered sequence containing all tuples $(x_k, j_k)$ and $(i_l, y_l)$, where $k=1, \dotso, s_Q$ and $l=1, \dotso, s_P$, is called \emph{coupled $\delta$-visiting order}.
\end{definition}

\begin{figure}
    \centering
    \includegraphics{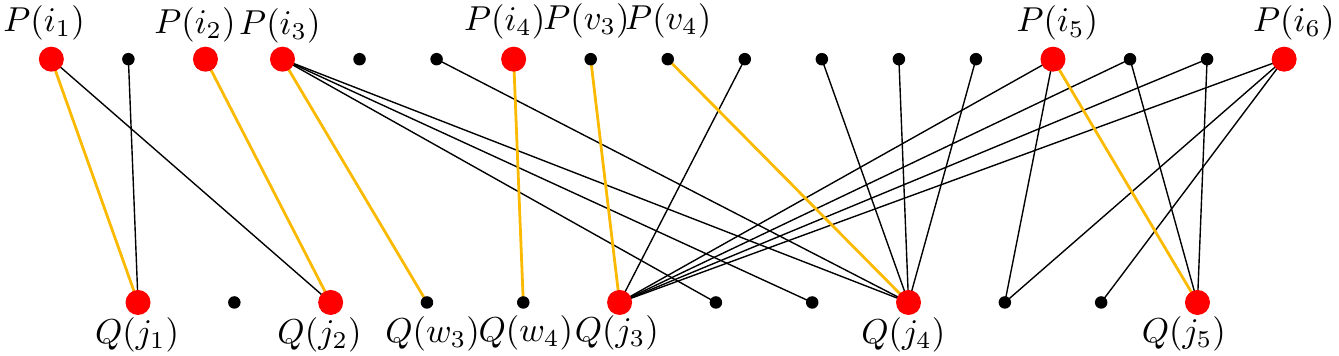}
    \caption{Visualization of a coupled $\delta$-visiting order. Edges are drawn between vertices $P(i)$ and $Q(j)$, when one is a $\delta$-signature vertex and $|P(i)-Q(j)|\leq \delta$. Then, a coupled $\delta$-visiting order consists of a subset of the drawn edges, shown as an orange bipartite graph, where no two edges cross and it contains one incident edge for every $\delta$-signature vertex.}
    \label{fig:coupled_visiting_order}
\end{figure}

\begin{restatable}[Lemma 9 of~\cite{BD24}]{lemma}{keylemma}\label{l:key_lemma}
%\begin{lemma}[Blank and Driemel~\cite{BD24}] \label{l:key_lemma}
    Let $P$, $Q$ be two time series and $\langle P(i_1), \dots, P(i_{s_P})\rangle$ and $\langle Q(j_1), \dots, Q(j_{s_Q})\rangle$ their $\delta$-signatures. Then, $d_F(P, Q)\leq \delta$ if and only if there exist a coupled $\delta$-visiting order $((v_1, w_1), (v_2, w_2), \dotso, (v_t, w_t))$ of $P$ and $Q$ such that 
    \begin{enumerate}[(i)]
        \item if $v_2=i_2$ (resp. $w_2=j_2$), % is a $\delta$-signature vertex, 
        then $\exists w^*\in[w_2, \min\{w_2+1, j_2\}]$ (resp. $v^*\in[v_2, \min\{v_2+1, i_2\}]$) such that $d_F(P[1, v_2], Q[1, w^*])\leq \delta$ (resp. $d_F(P[1, v^*], Q[1, w_2])\leq \delta$), and
        \item if $v_{t-1}=i_{s_P-1}$ (resp. $w_{t-1}=j_{s_Q-1}$)%is a $\delta$-signature vertex
        , then $\exists w^{**}\in[\max\{j_{s_Q-1}, w_{t-1}-1\}, w_{t-1}]$ (resp. $v^{**}\in[\max\{i_{s_P-1}, v_{t-1}-1\}, v_{t-1}]$) such that $d_F(P[v_{t-1}, v_t], Q[w^{**}, w_t])\leq \delta$ (resp. $d_F(P[v^{**}, v_t], Q[w_{t-1}, w_{t}])\leq \delta$).
    \end{enumerate}
    %and values ${a_p\in[v_2, v_2+1]}$, ${a_q\in[w_2, w_2+1]}$, $b_p\in [v_{t-1}-1, v_{t-1}]$, and $b_q\in[w_{t-1}-1, w_{t-1}]$ such that ${d_F(P[1, a_p], Q[1, a_q])\leq \delta}$ and $d_F(P[b_p, n], Q[b_q, m])\leq \delta$. 
%\end{lemma}
\end{restatable}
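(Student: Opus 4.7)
The plan is to prove both directions by relating the continuous joint traversal witnessing the Fréchet distance to the combinatorial certificate given by the coupled $\delta$-visiting order. The forward direction extracts visiting orders from an optimal matching, while the reverse direction glues local matchings between consecutive signature pairs into a global matching. Since the signatures capture only the large-scale behaviour of the curves, the nontrivial content of the lemma concerns how to treat the two ``short'' boundary pieces $P[1, i_2]$, $Q[1, j_2]$ and the symmetric suffixes, which is exactly what conditions (i) and (ii) encode.

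For the forward direction, assume $d_F(P,Q)\leq \delta$ and fix a witnessing pair of reparametrizations $(h_P, h_Q)$. For each signature vertex $P(i_k)$, choose $\alpha_k$ with $h_P(\alpha_k)=i_k$ and set $y_k=h_Q(\alpha_k)$; then $|P(i_k)-Q(y_k)|\leq \delta$, and the monotonicity of $h_P, h_Q$ lets us choose the $\alpha_k$ monotonically so that $y_1\leq\cdots\leq y_{s_P}$. This yields a $\delta$-visiting order of $P$ on $Q$. Symmetrically one obtains a $\delta$-visiting order $(x_l)_l$ of $Q$ on $P$. The crossing-free property is immediate because both orderings arise from the same monotone joint parametrization: the relative order of $i_k$ versus $x_l$ on $[1,n]$ must agree with the relative order of $y_k$ versus $j_l$ on $[1,m]$, ruling out the forbidden configuration. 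Interleaving the two orderings according to their $\alpha$-parameters produces the desired coupled $\delta$-visiting order, and the boundary conditions (i) and (ii) are obtained by simply restricting $(h_P, h_Q)$ to $[0,\alpha_{v_2}]$ or $[\alpha_{v_{t-1}},1]$, which witnesses the required Fréchet matching on the prefix $P[1,v_2]$ against $Q[1,w^*]$ for $w^*=h_Q(\alpha_{v_2})$ (and analogously for the suffix).

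For the reverse direction, I would build a joint traversal from the coupled visiting order by processing the pairs $(v_k, w_k)$ in order and constructing a local Fréchet matching between the consecutive subcurves $P[v_k,v_{k+1}]$ and $Q[w_k,w_{k+1}]$. The key structural observation is that for each such interval, at least one of the two subcurves is a full signature edge, and hence by definition $2\delta$-monotone with image equal to the segment between its endpoints. A standard argument shows that any $2\delta$-monotone subcurve can be synchronized within Fréchet distance $\delta$ with any curve whose endpoints are within $\delta$ of its own endpoints, which is precisely what the visiting order supplies at the pairs $(v_k, w_k)$. Concatenating these local matchings produces a valid joint traversal of the middle portion of $P$ and $Q$, and the first and last pieces are supplied directly by conditions (i) and (ii).

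The main obstacle is the boundary. Unlike the middle signature edges, the pieces $P[1, i_2]$ and $P[i_{s_P-1}, n]$ are only constrained to lie in a $2\delta$-band around the adjacent signature vertex, not to be $2\delta$-monotone themselves, so the gluing argument above cannot be applied naïvely at the endpoints. This is the reason conditions (i) and (ii) must allow $w^*$ to lie in the interval $[w_2, \min\{w_2+1, j_2\}]$ rather than forcing $w^*=w_2$: a slack of up to one index is needed to absorb the case where the signature vertex on the other curve has already been partially processed by the middle traversal. The technical heart of the proof is therefore a case analysis on whether $v_2=i_2$ or $w_2=j_2$ (and symmetrically at the other end), verifying that the boundary Fréchet matching from (i)/(ii) can be concatenated with the first local middle matching while maintaining global monotonicity of the reparametrizations.
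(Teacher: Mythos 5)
Your proposal attempts a from-scratch proof, whereas the paper does not reprove this statement at all: it is quoted as Lemma~9 of~\cite{BD24}, and the paper's proof is only a short equivalence argument showing that its boundary condition (which insists that $Q(w_2)$ is a vertex and compensates with the slack $w^*\in[w_2,\min\{w_2+1,j_2\}]$) is interchangeable with the condition of the cited lemma, where $w_2$ need not be a vertex. Measured against what a genuine re-proof would require, your sketch has concrete gaps exactly at the places where the cited lemma does the work.

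In the reverse direction, your ``key structural observation'' that between consecutive tuples of the coupled visiting order at least one of the two subcurves is a full signature edge is false: if one tuple has the form $(i_l,y_l)$ and the next the form $(x_m,j_m)$, then $P[i_l,x_m]$ and $Q[y_l,j_m]$ are both proper pieces of signature edges. What survives is only that each piece lies inside a single signature edge, hence is $2\delta$-monotone, but not range-preserving. More seriously, the ``standard argument'' you invoke --- that a $2\delta$-monotone curve is within Fréchet distance $\delta$ of \emph{any} curve whose endpoints are $\delta$-close to its own endpoints --- is false as stated: take the constant curve $0$ against a curve that rises to $100\delta$ and returns. A correct local matching statement needs hypotheses on both pieces, and proving it, together with the concatenation at the prefix and suffix that you explicitly defer as ``the technical heart,'' is precisely the nontrivial content of the cited lemma; it cannot be dispatched as a standard one-liner. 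In the forward direction, in this paper's formulation the entries of a coupled $\delta$-visiting order are vertex indices (this is why condition~(i) carries the slack up to $w_2+1$, and why the paper's proof distinguishes whether $Q(w_2)$ is a vertex); your choice $y_k=h_Q(\alpha_k)$ is in general not a vertex, and the required rounding step --- including the case where one must round up, in which your proposed witness $w^*=h_Q(\alpha_{v_2})$ would lie below the rounded $w_2$ and violate condition~(i) --- is missing. That rounding argument is, in fact, essentially the entirety of the paper's own proof.
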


 \begin{proof}
    There is a small difference to the cited lemma. Consider the case that $v_2=i_2$. The case that $w_2=j_2$ is symmetric as well as for the second last tuple.
    The statement in this lemma requires that
    \begin{compactenum}
        \item[(1)] $Q(w_2)$ is a vertex and $\exists\ w^*\in[w_2, \min\{w_2+1, j_2\}]$ such that $d_F(P[1, v_2], Q[1, w^*])\leq \delta$.
    \end{compactenum}
    The requirement in the cited lemma is that
    \begin{compactenum}
        \item[(2)] $d_F(P[1, v_2], Q[1, w_2])\leq \delta$ and $Q(w_2)$ does not need to be a vertex of $Q$.
    \end{compactenum}
    %Namely, there it is required that $d_F(P[1, v_2], Q[1, w_2])\leq \delta$. However, if $v_2=i_2$, then $w_2$ does not need to be an index; similarly if $w_2=j_2$, then $v_2$ does not need to be an index. We only consider the case that $v_2=i_2$, as the other case follows symmetrically. We show that this is equivalent to requiring that $w_2$ is an index and there exists an $w^*\in[w_2, w_2+1]$ with $d_F(P[1, i_2], Q[1, w^*])\leq \delta$. 
    We show that requiring $(1)$ or $(2)$ is equivalent. Let $w_2$ and $w^*$ be as in $(1)$. Then $w^*$ satisfies the condition of $(1)$. Now, let $w_2$ be as in $(2)$. Since $Q[1, j_2]$ is contained in some $\delta$-ball and $w_2\leq j_2$, it holds that $|P(i_2)-Q(\lfloor w_2\rfloor)|\leq \delta$ or $|P(i_2)-Q(\lceil w_2\rceil)|\leq \delta$. In the first case, $w_2=\lfloor w_2\rfloor$ and $w^*=w_2$ satisfy the conditions in $(1)$. In the second case, it holds that $\im(Q[w_2,\lceil w_2\rceil])\subset B(P(i_2), \delta)$. Hence, $w_2=\lceil w_2\rceil$ and $w^*=w_2$ satisfy the conditions in $(1)$.% In both cases, it holds that there exists a $w^*\in[w_2, w_2+1]$ such that $d_F(P[1, i_2], Q[1, w^*])\leq \delta$. Hence, the two statements are equivalent.
\end{proof}

\subsection{Grid Reachability}
\begin{definition}[Offline Dynamic Grid Reachability]
Let $G$ be the directed $n\times n$-grid graph in which every node is either activated or deactivated. We are given updates $u_1,\ldots,u_U$, which are of the form \enquote{activate node $(i,j)$} or \enquote{deactivate node $(i,j)$} in an offline manner. The task of \emph{offline dynamic grid reachibility} is to compute for all $1\leq\ell\leq U$ if $(n,n)$ can be reached by $(1,1)$ after updates $u_1,\ldots,u_{\ell}$ are performed.
\end{definition}
Our main results depend on the following theorem.
\begin{theorem}[Theorem 3.1 of \cite{DBLP:journals/talg/BringmannKN21}] \label{thm:offline_dyn_grid_reachability}
Offline Dynamic Grid Reachability can be solved in time $\Oh(n^2 + U n^{2/3} \log^2 n)$.
\end{theorem}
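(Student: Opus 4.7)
The plan is to combine a spatial block decomposition of the grid with an offline temporal reduction that turns deletions into insertions at logarithmic overhead. I would partition the $n\times n$ grid into $\Theta((n/b)^2)$ square blocks of side length $b$, with $b$ tuned later. For each block $B$, the algorithm maintains a reachability summary: a boolean matrix indexed by the $\Oh(b)$ boundary vertices of $B$ that records, for every pair of boundary vertices, whether there is a monotone path connecting them using only currently-activated vertices inside $B$. When a vertex inside $B$ flips state, the entire summary is recomputed from scratch by a standard monotone DP in time $\Oh(b^2)$. Computing all $(n/b)^2$ initial summaries takes $\Oh(n^2)$ time, accounting for the additive $\Oh(n^2)$ term in the final bound.

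Next, I would contract the grid to a skeleton DAG whose nodes are the block-boundary vertices and whose super-edges are given by the block summaries. Reachability from $(1,1)$ to $(n,n)$ in the original grid is equivalent to reachability from the top-left to the bottom-right boundary in this skeleton, and can be maintained by storing, on every skeleton node, a forward flag ``reachable from $(1,1)$'' and a backward flag ``reaches $(n,n)$''. When a block $B$ is modified, only skeleton nodes lying in block-rows and block-columns downstream of $B$ can change state; with appropriate search structures attached to each block-row, the propagation of a single update takes $\Oh((n/b)\log n)$ amortized time, so that a query after each update is essentially free once the flags are up to date.

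The main difficulty will be the handling of deactivations, since a single deletion can in principle invalidate many paths at once, whereas the block-summary machinery is most naturally incremental. I would resolve this by exploiting the offline nature of the input: build a segment tree over the $U$ update events and, for each grid vertex, insert its activation interval into the $\Oh(\log U)$ canonical tree nodes that span it. A depth-first walk of this tree, carrying a persistent version of the block summaries and of the skeleton flags, reduces the problem inside every tree node to purely incremental insertions and pays one additional $\log U$ factor. Balancing the summary-recomputation cost $\Oh(b^2)$ against the skeleton-propagation cost $\Oh((n/b)\log n)$ yields $b = \Theta(n^{1/3})$, giving a per-update cost of $\Oh(n^{2/3}\log n)$. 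Combined with the $\log U = \Theta(\log n)$ overhead of the segment tree and the one-time $\Oh(n^2)$ build of initial summaries, the total running time is the claimed $\Oh(n^2 + U n^{2/3} \log^2 n)$.
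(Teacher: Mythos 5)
This paper does not prove the statement at all: it imports it verbatim as Theorem~3.1 of \cite{DBLP:journals/talg/BringmannKN21}, so your attempt must stand on its own as a from-scratch proof, and as such it has a genuine gap at its core. The decisive step is your claim that after a flip inside a block $B$ only skeleton nodes \enquote{lying in block-rows and block-columns downstream of $B$} can change state, and that the forward/backward flags can be repaired in $\Oh((n/b)\log n)$ amortized time. Neither part is justified, and the first is false as stated: the nodes downstream of $B$ form a quadrant-shaped region of $\Theta((n/b)^2)$ blocks, i.e.\ $\Theta(n^2/b)$ skeleton boundary vertices, and a single deactivation (say, of a vertex whose removal disconnects $(1,1)$ from everything beyond it) flips the forward flag of essentially all of them, while a later activation flips them back; since activations and deactivations interleave, the flags are not monotone over the update sequence and no amortization argument is available. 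Without that bound, repairing the skeleton requires a full downstream sweep, which even with compressed block summaries costs $\Omega(n^2/b)$ per update; balancing this against the $\Oh(b^2)$ block rebuild gives at best roughly $\Oh(n^{4/3})$ per update, which is worse than the known \emph{online} $\Oh(n)$ bound of \cite{DBLP:journals/corr/AvrahamKS15} and far from the claimed $\Oh(n^{2/3}\log^2 n)$. This propagation bound is exactly the hard content of the theorem, and it is the one thing your sketch assumes rather than proves.

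Two smaller issues feed into the same problem. Computing all-pairs boundary reachability of a $b\times b$ block is not a \enquote{standard monotone DP in time $\Oh(b^2)$}: there are $\Theta(b)$ sources, the naive DP costs $\Theta(b^3)$, and both an $\Oh(b^2)$ rebuild and any hope of composing block summaries in $o(b^2)$ time along the skeleton require the non-crossing (Monge-type) structure of boundary-to-boundary reachability in planar grid pieces, which you never state or exploit --- an explicit boolean matrix per block does not compose cheaply. Likewise, the segment tree over the $U$ updates, with a DFS carrying \enquote{a persistent version of the block summaries and of the skeleton flags}, hides an unpaid cost: persistence or rollback of a $\Theta(n^2)$-size state across $\Theta(\log U)$ levels is only affordable if each insertion changes few memory cells, which is again precisely the unproven propagation bound. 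So while the ingredients you name (spatial blocking, compressed boundary information, offline time decomposition) are of the right flavor, the quantitative heart of Theorem~3.1 --- why offline knowledge buys a per-update cost of $n^{2/3}\operatorname{polylog} n$ rather than $n$ or worse --- is missing, and the proof as outlined does not establish the stated running time.
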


\section{The Static Algorithm}
First, we discuss how the \emph{modified free-space matrix} can be used to decide whether the \F between two time series is at most a given threshold $\delta$. \Cref{l:key_lemma} implies that, except for the beginning and the end, it is enough to look at pairs $(P(i), Q(j))$ of vertices, where one of the two vertices is a $\delta$-signature vertex.

\begin{definition}[prefix and suffix]
    Let $P$ be a time series of complexity $n$ and let $P(i_2)$ denote the second $\delta$-signature vertex of~$P$, and $P(i_{s_P-1})$ denote the penultimate $\delta$-signature vertex of~$P$. Then $P[1,i_2]$ is the prefix $\pre{P}$ and $P[i_{s_P-1},n]$ in reverse order defines the suffix $\suf{P}$ of $P$. That is, the time series $P[i_{s_P-1},n]$ is $\widehat{\suf{P}}$, where $\widehat{P}$ denotes the time series defined by the vertices of $P$ in reverse order.
%    Let $P$ be a time series of complexity $n$ and let $P(i_2)$ denote the second $\delta$-signature vertex of~$P$, and $P(i_{s_P-1})$ denote the penultimate $\delta$-signature vertex of~$P$. Then $P[1,i_2]$ is the prefix $\pre{P}$ of $P$ and the time series defined by the vertices of $P[i_{s_P-1},n]$ in reversed order is the suffix $\suf{P}$ of $P$.
\end{definition}

Observe that by the definition of the $\delta$-signature vertices the time series $\pre{P}$ and $\suf{P}$ are each contained in some ball of radius $\delta$.

\begin{definition}[minimal matcher]
    Let $P$ and $Q$ be two time series of complexity $n$ and $m$ respectively. The \emph{\ePQ{P}{Q}} is the smallest $w\in[m]$ such that
    \begin{compactenum}
        \item $|P(n)-Q(w)|\leq\delta$, and
        \item there is a $w^*\in[w,\min(w+1,m)]$ such that $\mathrm{d}_F(P,Q[1,w^*])\leq\delta$. 
    \end{compactenum}
    If no such value exists, we set it to $\infty$.
\end{definition}
\begin{figure}
    \centering
    \includegraphics{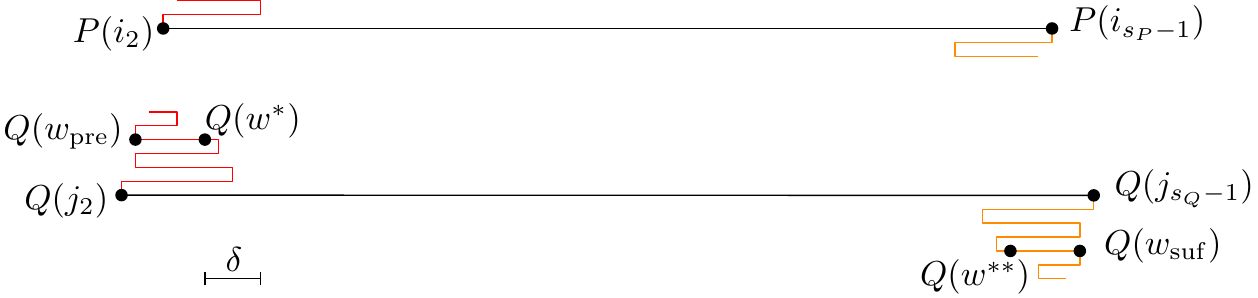}
    \caption{Minimal $\pre{P}$-matcher on $\pre{Q}$, $\wpre$, and \ePQ{\suf{P}}{\suf{Q}}, $\wsuf$.}
    \label{fig:w_pre}
\end{figure}

We denote by $\wpre$ the \ePQ{\pre{P}}{\pre{Q}} and by $\vpre$ the \ePQ{\pre{Q}}{\pre{P}}. For the suffix, we denote by $\wsuf$ the index of the vertex in $Q$ corresponding to the \ePQ{\suf{P}}{\suf{Q}} (and similarly $\vsuf$ when the roles of $P$ and $Q$ are swapped). See \Cref{fig:w_pre} for an example.

Due to the next observation, we will only discuss how to process the prefix in \Cref{sec:prefSuff}, \Cref{sec:PrefTranslation} and \Cref{sec:PrefScaling}. 
\begin{observation}\label{obs:ConnectionSufPre}
    Let %$\wsuf$ be the \ePQ{\suf{P}}{\suf{Q}} and 
    $\widehat{\wpre}$ be the \ePQ{\pre{\widehat{P}}}{\pre{\widehat{Q}}}. If $m$ is the complexity of $Q$, then $\wsuf=m-(\widehat{\wpre}-1)$.
\end{observation}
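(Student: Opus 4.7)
The plan is to establish the observation by exploiting the symmetry between taking a suffix and taking a prefix of the reversed curve. First, I would unfold the definitions: by the definition of $\suf{\cdot}$, the time series $\suf{P}$ is exactly $P[i_{s_P-1}, n]$ written in reverse vertex order, which is the same as $\widehat{P}[1, n-i_{s_P-1}+1]$. Since signatures are preserved under reversal (each of the four defining properties \emph{(non-degenerate)}, \emph{($2\delta$-monotone)}, \emph{(minimum edge length)}, and \emph{(range)} is invariant under reversing the vertex order), the second signature vertex of $\widehat{P}$ corresponds to the penultimate signature vertex of $P$, and hence $\pre{\widehat{P}} = \suf{P}$ as time series. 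The same reasoning gives $\pre{\widehat{Q}} = \suf{Q}$.

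Next, I would observe that the notion of \emph{\ePQ{\cdot}{\cdot}} depends only on the two input time series and on $\delta$. Since $\pre{\widehat{P}} = \suf{P}$ and $\pre{\widehat{Q}} = \suf{Q}$, the minimal $\pre{\widehat{P}}$-matcher on $\pre{\widehat{Q}}$ coincides, as an index into $\suf{Q}$ (equivalently, $\pre{\widehat{Q}}$), with the minimal $\suf{P}$-matcher on $\suf{Q}$. In particular, the index $\widehat{\wpre}$ into $\pre{\widehat{Q}}$ and the index of the $\suf{P}$-matcher into $\suf{Q}$ are the same integer.

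The final step is a change of coordinates. The vertex $\widehat{Q}(k)$ coincides with $Q(m-k+1)$ for every $k \in [m]$, so position $\widehat{\wpre}$ in $\pre{\widehat{Q}}$ corresponds to the vertex of $Q$ with index $m - \widehat{\wpre} + 1 = m - (\widehat{\wpre} - 1)$. By definition, this latter value is exactly $\wsuf$, yielding the claimed identity.

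The main thing to be careful about is that the definition of minimal matcher involves an asymmetric condition about the \emph{last} vertex $P(n)$ and a sub-time-series $Q[1, w^*]$ starting at the \emph{first} vertex; reversal swaps these roles cleanly, because both the first vertex of $\pre{\widehat{P}}$ and the last vertex of $\suf{P}$ correspond to the penultimate signature vertex $P(i_{s_P-1})$ of $P$ (and similarly on the $Q$ side), so no off-by-one adjustments are needed beyond the single $+1$ coming from the index reversal.
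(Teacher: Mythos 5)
Your argument is correct and matches the reasoning the paper leaves implicit for this observation: since every defining property of an extended $\delta$-signature is invariant under reversal, $\pre{\widehat{P}}=\suf{P}$ and $\pre{\widehat{Q}}=\suf{Q}$, so the two minimal matchers coincide as indices, and the identity $\widehat{Q}(k)=Q(m-k+1)$ gives $\wsuf=m-(\widehat{\wpre}-1)$. Nothing is missing.
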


With these notions, we define the modified free-space matrix. See \Cref{fig:modified_free_space_matrix} for an example. 

\begin{definition}[Modified Free-Space Matrix]\label{def:mod_free_space_matrix} Let $P$ and $Q$ be two time series of complexity $n$ and $m$. Further, let $\langle P(i_1), \dotso, P(i_{s_P})\rangle$ and $\langle Q(j_1), \dotso, Q(j_{s_Q})\rangle$ be their $\delta$-signatures. 
We construct a matrix $M_{\delta}\in \{0, 1\}^{n\times m}$, where the entry $M_{\delta}(i,j)=1$ if
\begin{enumerate}[a)]
    \item  $P(i)$ and $Q(j)$ are both not $\delta$-signature vertices, or
    \item $|P(i)-Q(j)|\leq \delta$ and $(i, j)\notin [1, i_2]\times [1, j_2]\cup [i_{s_P-1}, n]\times [j_{s_Q-1}, m]$, or
    \item $|P(i)-Q(j)|\leq \delta$ and $(i, j)\in \{(1, 1), (n,m)\}$, or
    \item $|P(i)-Q(j)|\leq \delta$ and $(i, j)=(i_2, j_2)$ and $d_F(\pre{P}, \pre{Q})\leq \delta$, or
    \item $|P(i)-Q(j)|\leq \delta$ and $(i, j)=(i_{s_P-1}, j_{s_Q-1})$ and $d_F(\suf{P}, \suf{Q})\leq \delta$, or
    \item $(i, j)\in \{(i_2, \wpre), (\vpre, j_2), (i_{s_P-1}, \wsuf), (\vsuf, j_{s_Q-1})\}$. 
\end{enumerate}
Otherwise, the entry $M_{\delta}(i, j)=0$.
Further, we say that an entry $M_{\delta}(i, j)$ is reachable if there exists a traversal $(1, 1)=(a_1, b_1), (a_2, b_2), \dotso, (a_k, b_k)=(i, j)$ such that $M_{\delta}(a_l, b_l)=1$ and $(a_l, b_l)\in\{(a_{l-1}+1, b_{l-1}),(a_{l-1}, b_{l-1}+1),(a_{l-1}+1, b_{l-1}+1)\}$ for all $l=2, \dotso, k$.
\end{definition}

\begin{figure}
    \centering
    \includegraphics[page=1]{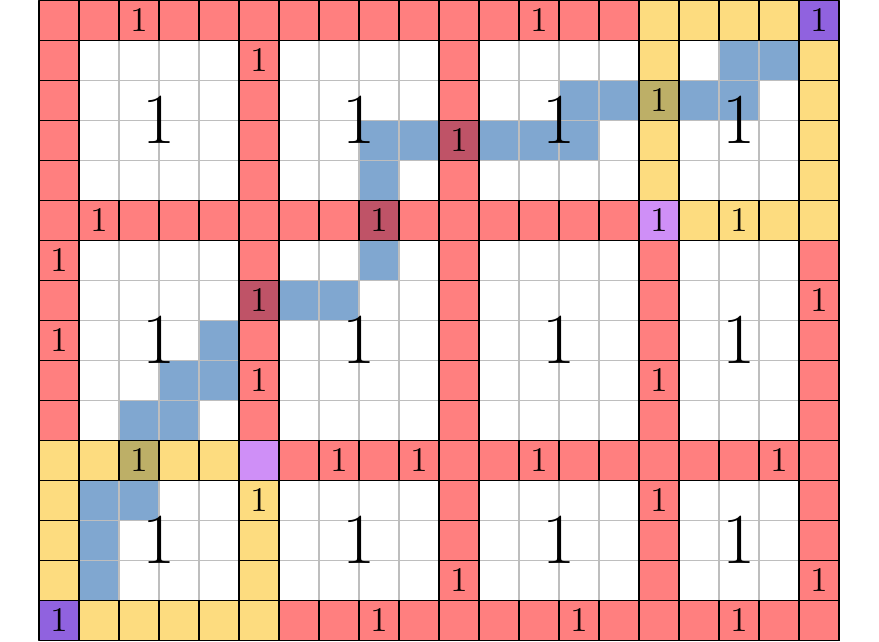}
    \caption{Example of a Modified Free-Space Matrix $M_{\delta}$. 
    The colored columns (resp. rows) correspond to the $\delta$-signature vertices of $P$ (resp. $Q$). The white entries are all $1$ by \Cref{def:mod_free_space_matrix}~a), the red entries are defined by~b), the purple entries by~c) and~d) and~e), and the yellow entries by~f). 
    The traversal drawn in blue uses only 1-entries of $M_\delta$. Hence, $M_\delta(n, m)$ is reachable.}
    \label{fig:modified_free_space_matrix}
\end{figure}
The importance of the modified free-space matrix is that it can be used to answer the decision problem of whether or not $d_F(P, Q)\leq \delta$ essentially in the same way as it is done for the discrete Fréchet distance using the free-space matrix. 
The next lemma follows mainly from a result of~\cite{BD24}. %For completeness, a proof is contained in~\Cref{app:static}.

\begin{restatable}{lemma}{dfModMatEquiv}\label{lem:dfModMatEquiv}
    It holds that $d_F(P, Q)\leq \delta$ if and only if $M_{\delta}(n, m)$ is reachable. This can be checked in $\mathcal{O}(nm)$ time. 
\end{restatable}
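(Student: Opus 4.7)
The plan is to establish the equivalence via the key lemma (\cref{l:key_lemma}) and then argue the complexity. The correctness proof decomposes the matrix into a ``middle'' region (handled by case (b)) and two boundary regions (handled by (c)-(f)), reflecting the decomposition of a time series into prefix, signature, and suffix.

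\textbf{From $d_F(P,Q)\leq\delta$ to a traversal.} Suppose $d_F(P,Q)\le\delta$. By \cref{l:key_lemma}, there is a coupled $\delta$-visiting order $((v_1,w_1),\ldots,(v_t,w_t))$ in which every pair has one coordinate at a signature index and satisfies $|P(v_k)-Q(w_k)|\le\delta$, so each such pair sits on a 1-entry by case (b) (or (c) at the corners). Between two consecutive pairs $(v_k,w_k)$ and $(v_{k+1},w_{k+1})$, all indices strictly between them that are not signature vertices give 1-entries by case (a); thus we can monotonically interpolate. The crossing-free property of the coupled visiting order ensures that the sequence of pairs is monotone non-decreasing in both coordinates, so the interpolated path is a valid traversal. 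It remains to reach $(v_2,w_2)$ from $(1,1)$ and to leave the last middle pair toward $(n,m)$. For the prefix, there are two cases. If neither $v_2=i_2$ nor $w_2=j_2$, then $(v_2,w_2)$ lies outside $[1,i_2]\times[1,j_2]$ and case (b) already gives a 1-entry there; the path from $(1,1)$ can stay inside the non-signature region (case (a)). If $v_2=i_2$, the key lemma gives $w^*$ such that $d_F(P[1,v_2],Q[1,w^*])\le\delta$; by definition $\wpre\leq w^*\le w_2$, so $(i_2,\wpre)$ is a 1-entry by (f), and we can reach it and then move right to $(v_2,w_2)$. The subcase $w_2=j_2$ is symmetric and uses $(\vpre,j_2)$. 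The entry $(i_2,j_2)$ via case (d) is only needed when $d_F(\pre{P},\pre{Q})\le\delta$, which we do not require in general. The suffix case is symmetric and handled via (e), (f).

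\textbf{From a traversal to $d_F(P,Q)\leq\delta$.} Conversely, given a monotone traversal through 1-entries from $(1,1)$ to $(n,m)$, extract the subsequence of pairs lying on a signature row or column. By case (b) (and (c) at the corners), every such pair satisfies $|P(v_k)-Q(w_k)|\le\delta$, and the monotonicity of the traversal yields a crossing-free coupled $\delta$-visiting order. To apply \cref{l:key_lemma}, we must verify conditions (i) and (ii). If the traversal enters the middle at $(i_2,j_2)$ via a 1-entry enabled by case (d), then $d_F(\pre{P},\pre{Q})\le\delta$ and (i) holds trivially with $w^*=j_2$. Otherwise it must pass through an entry certified by (f), say $(i_2,\wpre)$, and the definition of $\wpre$ directly supplies the required $w^*\in[\wpre,\min(\wpre+1,j_2)]$ with $d_F(\pre{P},Q[1,w^*])\le\delta$; the symmetric option uses $(\vpre,j_2)$. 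The suffix argument is analogous via $\wsuf$, $\vsuf$, and case (e). Thus all hypotheses of \cref{l:key_lemma} are met and $d_F(P,Q)\leq\delta$.

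\textbf{Running time.} Computing all entries of $M_\delta$ takes $\mathcal{O}(nm)$: cases (a), (b), (c) are pointwise; the $\delta$-signatures are computed in linear time; $d_F(\pre{P},\pre{Q})$ and $d_F(\suf{P},\suf{Q})$ needed for (d), (e) use the classical quadratic-time Fréchet algorithm on curves of total complexity $\mathcal{O}(n+m)$, which is dominated by $\mathcal{O}(nm)$; and the four minimal matchers for (f) can be computed in $\mathcal{O}(nm)$ time via the same dynamic programming table. Once $M_\delta$ is built, reachability of $(n,m)$ from $(1,1)$ under the allowed steps is solved by the standard $\mathcal{O}(nm)$ dynamic programming on the matrix.

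The main obstacle is the careful boundary case analysis: the prefix (and, symmetrically, suffix) admits several structurally different routes through the matrix, and one must check that cases (d), (e), (f) jointly cover exactly the situations demanded by conditions (i) and (ii) of \cref{l:key_lemma}, without introducing spurious traversals that do not correspond to a valid coupling.
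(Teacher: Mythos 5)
Your overall strategy is the same as the paper's (apply \cref{l:key_lemma} in both directions, exit the prefix/suffix rectangles through the special entries (c)--(f), then do the standard $\Oh(nm)$ reachability DP), and your backward direction and running-time argument are fine. However, the forward direction contains a step that fails as written. You assert that every pair of the coupled $\delta$-visiting order ``sits on a 1-entry by case (b)'', and in the case $v_2=i_2$ you route the traversal through $(i_2,\wpre)$ and then ``move right to $(v_2,w_2)$''. But case (b) explicitly excludes the rectangle $[1,i_2]\times[1,j_2]$: if $\wpre<w_2<j_2$ (which is possible, since the visiting order provided by \cref{l:key_lemma} need not match $P(i_2)$ to the minimal index), then $(i_2,w_2)$ itself and every intermediate entry $(i_2,w)$ with $\wpre<w\le w_2$ lies on the signature row $i_2$ inside the excluded rectangle and is therefore a 0-entry regardless of the vertex distances --- this exclusion is precisely the point of the modification. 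So the path you construct is not a traversal through 1-entries. The paper's proof sidesteps exactly this by \emph{replacing} the second tuple of the visiting order by $(i_2,\wpre)$ (and the penultimate tuple by the suffix matcher) and then interpolating from $(i_2,\wpre)$ directly to the next tuple $(v_3,w_3)$; the point $(v_2,w_2)$ is never visited. The same repair is needed at the suffix, and a similar caution applies to tuples matching $P(1)$ or $Q(1)$ (e.g.\ a tuple $(1,y_1)$ with $1<y_1<j_2$ also lies in the excluded rectangle), so the blanket ``case (b)'' claim cannot stand.

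Two smaller points: the inequality ``$\wpre\le w^*\le w_2$'' is backwards --- \cref{l:key_lemma} gives $w^*\in[w_2,\min\{w_2+1,j_2\}]$, so $w^*\ge w_2$; what you actually need (and what does hold) is $\wpre\le w_2$, because $w_2$ itself satisfies both conditions in the definition of the minimal matcher, with $w^*$ as witness. In the backward direction, note also that for an entry of type (f) such as $(i_2,\wpre)$ the bound $|P(i_2)-Q(\wpre)|\le\delta$ comes from condition 1 in the definition of the minimal matcher, not from case (b); with that reading your extraction argument matches the paper's.
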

\begin{proof}
    We use \Cref{l:key_lemma} to prove correctness. If $d_F(P, Q)\leq \delta$, then there exists a coupled $\delta$-visiting order $V=((v_1, w_1), \dots, (v_t, w_t))$ such that the requirements of \Cref{l:key_lemma} are fulfilled. If $Q(w_2)$ (resp. $P(v_2)$) is not the second $\delta$-signature vertex of $Q$ (resp. $P$), then $V$ is still a coupled $\delta$-visiting order with the properties of \Cref{l:key_lemma} after setting $w_2$ to $\wpre$ (resp. $v_2$ to $\vpre$). 
    Similarly, we redefine $w_{t-1}$ or $v_{t-1}$ depending on whether $P(w_{t-1})$ or $Q(v_{t-1})$ is not the second last $\delta$-signature vertex, maintaining a coupled $\delta$-visiting order with the properties of \Cref{l:key_lemma}. The resulting sequence defines a traversal from $M_{\delta}(1, 1)$ to $M_{\delta}(n,m)$ in the modified free-space matrix in the following way. For two consecutive tuples $(v_k, w_k)$, $(v_{k+1}, w_{k+1})$, we add \[ (v_k+1, w_k), (v_k+2, w_k), \dots, (v_{k+1}, w_k), (v_{k+1}, w_k+1), (v_{k+1}, w_k+2), \dots, (v_{k+1}, w_{k+1})\] in between them to the traversal. For all those new added tuples, it holds that their corresponding entry in the modified free-space matrix is $1$, because none of them corresponds to a $\delta$-signature vertex, by  definition of a coupled $\delta$-visiting order. In this way, we constructed a traversal from $M_{\delta}(1, 1)$ to $M_{\delta}(n, m)$.
    Hence, $M_{\delta}(n, m)$ is reachable. 

    Now, let $M_{\delta}(n, m)$ be reachable. Then, there exists a traversal $(1, 1)=(a_1, b_1), \dotso, $ $(a_t, b_t)=(n, m)$ such that $M_{\delta}(a_l, b_l)=1$ and $(a_l, b_l)\in\{{(a_{l-1}+1}, b_{l-1}), (a_{l-1}, {b_{l-1}+1}),$ $ ({a_{l-1}+1}, b_{l-1}+1)\}$ for all $l=2, \dotso, t$. Deleting all the tuples $(a_k, b_k)$ from this sequence, where neither $P(a_k)$ nor $Q(b_k)$ is a $\delta$-signature vertex, leads to a coupled $\delta$-visiting order, by definition. Further, it satisfies the properties of \Cref{l:key_lemma} by the definition of the modified free-space matrix.
    Therefore, by \Cref{l:key_lemma} it holds that $d_F(P, Q)\leq \delta$, which concludes the proof of correctness. 
    The running time follows by using dynamic programming and \Cref{lem:searchprefixend} and \Cref{lem:(i_2j_2)}.
\end{proof}

\subsection{Prefix and Suffix}\label{sec:prefSuff}
This section is dedicated to identifying the \ePQ{\pre{P}}{\pre{Q}}. We often use the fact that \pre{P} and $\pre{Q}$ are each contained in some $\delta$-ball. %We start by finding, for two time series $P$ and $Q$, each contained in some $\delta$-ball and ending at a global extremum, the first index $w^*$ of a point on~$Q$ such that $d_F(P, Q[1, w^*])\leq \delta$. 
We will show that the vital behavior of such constricted time series can be captured by their extreme points.
\begin{definition}[extreme point sequence]
Let $P$ be a time series of complexity $n$, such that $P(n)$ is a global extremum of $P$ and $P$ is contained in some $\delta$-ball. We define the extreme point sequence $1=a_1< a_2 < \cdots< a_{\cpP}=n$ to be a 
%the unique 
sequence of indices, such that 
    \begin{itemize}
        \item (range-preserving) $\im(P[1, a_{k+1}])= \overline{P(a_k) P(a_{k+1})}$ for all $k$, and
        \item (extreme point) $P(a_k)=\min(P[1, a_k])$ or $P(a_k)=\max(P[1, a_k])$.
    \end{itemize}
\end{definition}

See \Cref{fig:extreme_point_sequence} and~\ref{fig:Prefix_Example} for an example of both an extreme point sequence and their associated preliminary assignments, defined next. 
\begin{figure}
    \centering
    \includegraphics[page=3]{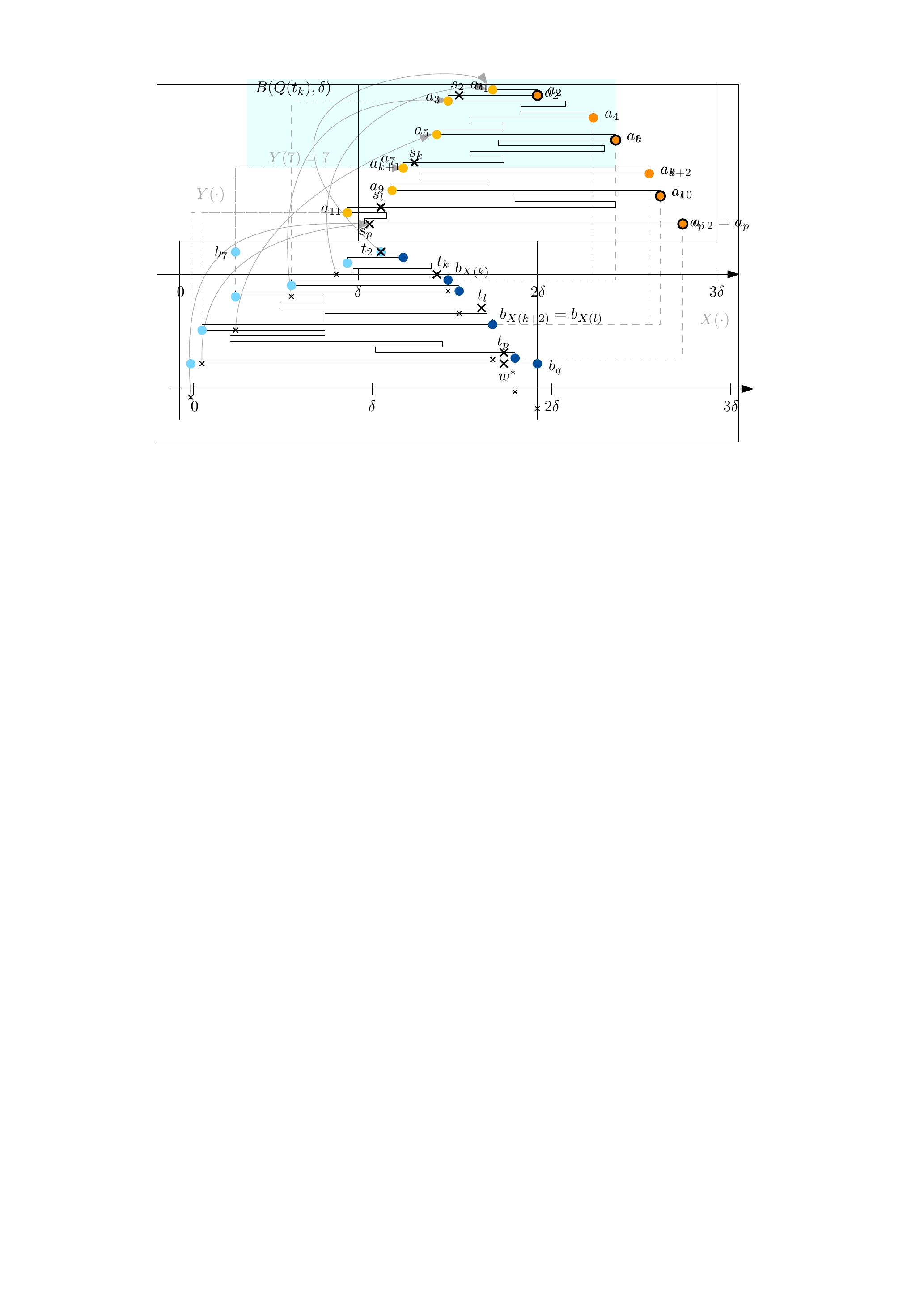}
    \caption{An extreme point sequence of the depicted time series $P$ is $1=a_1<a_2<\dots<a_{\cpP}$ and the preliminary assignment $\Y(7)$ of $Q$ on $P$ is $7$, since $Q(a_7)$ is the first vertex on $P$ in $B(Q(b_7), \delta)$.}
    \label{fig:extreme_point_sequence}
\end{figure}

\begin{definition}[preliminary assignment]
    Let $P$ and $Q$ be two time series such that the image of each time series is contained in some ball of radius $\delta$. Let $a_1<\ldots<a_{\cpP}$ and $b_1<\ldots<b_{\cpQ}$ be extreme point sequences of time series $P$ and $Q$ respectively. Define the \emph{preliminary assignment}  $\X(k)$ of $P$ on $Q$ for every $k\leq \cpP$ to be the smallest index such that $|P(a_k)-Q(b_{\X(k)})|\leq \delta$. If no such index exists, we set $\X(k)=\infty$. Let similarly $\Y(l)$ be the preliminary assignment of $Q$ on $P$. We say that $\X(k)$ and $\Y(l)$ form a \emph{deadlock}, if $l<{\X(k)}$ and $k<{\Y(l)}$. 
\end{definition}

Before we show how the preliminary assignment can be used to find the \ePQ{\pre{P}}{\pre{Q}}, we state some of its structural properties.

\begin{restatable}{lemma}{preliminaryAssignmentStructure}\label{lem:preliminaryAssignmentStructure}
%\begin{lemma}\label{lem:preliminaryAssignmentStructure}
    Let $P$ and $Q$ be two time series such that the image of each time series is contained in some ball of radius $\delta$.
    Let $a_1<\ldots<a_{\cpP}$ and $b_1<\ldots<b_{\cpQ}$ be extreme point sequences of time series $P$ and $Q$.  
    If $\X(1)=1$, then the following holds.
    \begin{enumerate}[i)]
        \item For any $3\leq k\leq p$ it holds that $\X(k)\geq \X(k-2)$.
        \item It holds that $\X(2k)=1$ for all $k$ or $\X(2k-1)=1$ for all $k$.
        \item If $k<p$ and $\X(k)\neq 1$, then for any index $l\geq \X(k)$ there is a $t\in [b_{l-1},b_l]$ such that $\im(P[1,a_{k+1}])\subset B(Q(t),\delta)$. 
        \item If $X(k)>1$ and $Y(l)$ form a deadlock, then $X(k)$ and $Y(X(k)-1)$ form a deadlock.
    \end{enumerate}
%\end{lemma}
\end{restatable}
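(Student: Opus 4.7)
The plan rests on two structural observations about an extreme-point sequence contained in a ball of radius $\delta$: each same-parity subsequence $P(a_1), P(a_3), \ldots$ or $P(a_2), P(a_4), \ldots$ consists of extrema of the same type (all new minima or all new maxima) and is strictly monotone in value, and the total range of $P$-values is at most $2\delta$; both facts hold for $Q$ as well. Since $X(1) = 1$ is symmetric in $P$ and $Q$, we also have $Y(1) = 1$, so (ii) applied to $Y$ is available for free. I would prove (i) by contradiction: set $l := X(k)$, assume $l < X(k-2)$, and WLOG take $P(a_k), P(a_{k-2})$ to both be maxima, so $P(a_{k-2}) < P(a_k)$. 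The hypothesis $|P(a_k) - Q(b_l)| \leq \delta$ rules out $Q(b_l) < P(a_{k-2}) - \delta$ (which would force $Q(b_l) < P(a_k) - \delta$), so $|P(a_{k-2}) - Q(b_l)| > \delta$ forces $Q(b_l) > P(a_{k-2}) + \delta$. From $X(1) = 1$ and $P(a_1) \leq P(a_{k-2})$ one obtains $Q(b_1) \leq P(a_{k-2}) + \delta < Q(b_l)$; for $l \geq 2$, range-preserving places $Q(b_1) \in \overline{Q(b_{l-1})Q(b_l)}$, yielding $Q(b_{l-1}) \leq Q(b_1) \leq P(a_{k-2}) + \delta$, and the diameter bound $Q(b_l) - Q(b_{l-1}) \leq 2\delta$ then forces $Q(b_{l-1}) > P(a_{k-2}) - \delta$. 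So $Q(b_{l-1})$ witnesses $X(k-2) \leq l-1$, contradicting minimality; the edge case $l = 1$ is immediate from $Q(b_1) \leq P(a_{k-2}) + \delta$ clashing with $Q(b_1) > P(a_{k-2}) + \delta$.

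For (ii), I would assume both disjuncts fail and pick smallest $k_1, k_2$ with $X(2k_1) > 1$ and $X(2k_2-1) > 1$; since $X(1) = 1$ we have $k_2 \geq 2$. By minimality, $X(2k_1 - 2) = 1$ (or use $X(1) = 1$ in its place when $k_1 = 1$) and $X(2k_2 - 3) = 1$. The same sign analysis as in (i) pins $Q(b_1)$ into the $\delta$-neighbourhoods of $P(a_{2k_1-2})$ and $P(a_{2k_2-3})$ while forcing it strictly outside those of $P(a_{2k_1})$ and $P(a_{2k_2-1})$. Because these last two extrema have opposite index parities, they are of opposite types and lie on opposite sides of their same-parity predecessors, so $Q(b_1)$ must lie on opposing sides of their values. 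This yields $|P(a_{2k_1}) - P(a_{2k_2-1})| > 2\delta$, contradicting $P$'s diameter bound.

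For (iii), the case $l > X(k)$ reduces to $l = X(k)$ by range-preserving of $Q$, since $\overline{Q(b_{l-1})Q(b_l)} \supseteq \overline{Q(b_{X(k)-1})Q(b_{X(k)})}$. For $l = X(k)$, set $J' := [\max(P(a_k), P(a_{k+1})) - \delta, \min(P(a_k), P(a_{k+1})) + \delta]$; this is non-empty by the diameter bound and is exactly the set of centres of $\delta$-balls containing $\overline{P(a_k) P(a_{k+1})} = \im(P[1, a_{k+1}])$. The required $t$ exists by the intermediate value theorem as soon as $\overline{Q(b_{X(k)-1}) Q(b_{X(k)})} \cap J' \neq \emptyset$, which is automatic unless both endpoints of this $Q$-segment lie strictly on the same side of $J'$. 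In that potentially bad case, monotonicity of $Q$'s same-type extrema propagates the bound back to $Q(b_1)$, and together with $X(1) = 1$ and $P$'s diameter bound one deduces $Q(b_1) \in [P(a_k) - \delta, P(a_k) + \delta]$, contradicting the hypothesis $X(k) \neq 1$.

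For (iv), the plan is to combine (i) and (ii) applied to $Y$. Since $Y(l) > k \geq 1$, the non-$1$ parity class of $Y$ given by (ii) must contain $l$. The claim then reduces to showing that $X(k) - 1$ lies in this same parity class, because (i) applied to $Y$ along that class gives $Y(X(k) - 1) \geq Y(l) > k$, which is exactly the new deadlock. This parity matching will be the main obstacle: I would argue it by relating the type of $Q(b_{X(k)})$ (which matches the type of $P(a_k)$, being the first $Q$-extreme inside $[P(a_k) - \delta, P(a_k) + \delta]$) to that of $Q(b_l)$, which the deadlock forces outside all $\delta$-neighbourhoods of $P(a_1), \ldots, P(a_k)$; the diameter bound on $Q$ then prevents $Q(b_l)$ from being of the same type as $P(a_k)$, so $X(k)$ and $l$ have opposite parities and $X(k) - 1$ shares parity with $l$.
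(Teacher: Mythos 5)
Your proposal is correct in substance, and for parts (i)--(iii) it uses the same structural ingredients as the paper (range-preservation, monotonicity of same-type extrema, and the $2\delta$ diameter bound), just organized as explicit interval arithmetic: your ``admissible centres'' interval $J'$ plus the bad-case contradiction in (iii) is a clean equivalent of the paper's argument, which instead invokes the alternation from (ii) to get $\X(k+1)=1$ and deduces $\im(P[1,a_{k+1}])\subset B(\overline{Q(b_{\X(k+1)})\,Q(b_{\X(k)})},\delta)$ before extracting $t$. The genuine divergence is in (iv). The paper's proof is a one-liner: since $|P(a_k)-Q(b_{\X(k)})|\leq\delta$ one has $\Y(\X(k))\leq k$, so the monotone chain $\Y(l)\leq \Y(l+2)\leq\cdots$ (part (i) applied to $\Y$, available because $\X(1)=1$ gives $\Y(1)=1$) cannot terminate at $\X(k)$, hence $l$ has the parity of $\X(k)-1$ and $k<\Y(l)\leq \Y(\X(k)-1)$. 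You instead derive the parity geometrically, via extreme-point types: $Q(b_{\X(k)})$ has the type of $P(a_k)$, while the deadlock forces $Q(b_l)$ to the opposite type. This route does work, but two caveats: first, your stated reason for the second step (``the diameter bound on $Q$'') is not the ingredient that closes it --- in the main case one gets $P(a_1)+\delta< Q(b_l)< P(a_k)-\delta$, and the contradiction comes from the diameter bound on $P$ together with $\Y(1)=1$ and monotonicity of $Q$'s same-type extrema; second, converting ``opposite types'' into ``opposite parities'' leans on the (generically true, degeneracy-sensitive) fact that types alternate with index parity along the extreme point sequence, which the paper's purely combinatorial argument never needs. So your plan buys a more geometric picture at the cost of extra case analysis that the simple observation $\Y(\X(k))\leq k$ makes unnecessary.
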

\begin{proof}
    Let $\X(k-2)>1$, since otherwise $i)$ trivially holds. This implies that $a_{k-2}\not\in B(b_1,\delta)$, which in turn implies the same for $a_k$, since $|a_k-b_1|\geq |a_{k-2}-b_1|$, by definition of the extreme point sequence and $a_1\in B(b_1,\delta)$. As $Q$ is range-preserving, $\X(k)\geq \X(k-2)$, which implies $i)$.

   For $ii)$, it suffices to note that $P[1, a_p]$ is contained in a ball of radius $\delta$ and $X(1)=1$.
    To prove $iii)$, let $k$ be the first index such that $\X(k)\neq 1$, that is, $a_k$ lies outside $B_\delta(b_1)$. Then $a_{k+1}$ must lie on the other side of $a_1$ than $a_k$, but at most a distance $2\delta$ away from $a_k$. As $\X(1)=1$, i.e., $a_1\in B(b_1,\delta)$, it holds that $a_{k+1}$ also lies inside $B_\delta(b_1)$, implying $\X(k+1)=1$. Now, $a_{k+2}$ must lie at least as far from $a_1$ as from $a_k$ and thus again lies outside $B_\delta(b_1)$, and so on for higher values of $k$.
    
    We thus have that $\max\{\X(k),\X(k+1)\}=\X(k)\leq l$. This in turn implies that \[\im(P[1,a_{k+1}])\subset \overline{P(a_k)\,P(a_{k+1})}\subset B(\overline{Q(b_{\X(k)})\,Q(b_{\X(k+1)})}, \delta)\subset B(\overline{Q(b_{l-1})\,Q(b_l)}, \delta).\]
    By definition, $P[1,a_{k+1}]$ is contained in a $2\delta$-range. Hence, there exists a value $t\in[b_{l-1},b_l]$ such that $\im(P[1,a_{k+1}])\subset B(Q(t), \delta)$, proving the claim.

    To prove $iv)$, let $\X(k)$ and $\Y(l)$ form a deadlock. Then, $l<\X(k)$ and $k<\Y(l)$. 
    By the symmetric argument of $i)$ for $\Y$, it holds that $k<\Y(l)\leq \Y(l+2)\leq \Y(l+4)\leq \cdots \Y(l')$ for $l'=\X(k)-1$ or $ l'= \X(k)$. By the definition of $X$ and $Y$, it holds that $Y(X(k))\leq k$. Hence, $k<\Y(\X(k)-1)$. As $\X(k)-1<\X(k)$, it holds that $X(k)$ and $Y(X(k)-1)$ form a deadlock.
\end{proof}

The importance of deadlocks is summarized in the following pivotal lemma.

\begin{restatable}{lemma}{startAndEnd}\label{l:startAndEnd}
%\begin{lemma}\label{l:startAndEnd}
    Let $P$ and $Q$ be time series such that the image of each time series is contained in some ball of radius $\delta$.
    Let $a_1<\ldots<a_{\cpP}$ and $b_1<\ldots<b_{\cpQ}$ be extreme point sequences of $P$ and $Q$. %, respectively.
    For any $w^*\leq b_{\cpQ}$, it holds that ${d_F(P[1, a_{\cpP}], Q[1, w^*])\leq \delta}$ if and only if
    \begin{enumerate} [(i)]
        \item $|P(1)-Q(1)|\leq \delta$ and $|P(a_{\cpP})-Q(w^*)|\leq \delta$, 
        \item $\im(P[1, a_{\cpP}])\subset B(Q[1, w^*], \delta)$, 
        \item $\im(Q[1, w^*])\subset B(P[1, a_{\cpP}], \delta)$, and
        \item $\X(k)$ and $\Y(l)$ do not form a deadlock for all $k=1, \dotso, \cpP$ and $l=1, \dotso, \cpQ$. 
    \end{enumerate}   
%\end{lemma}
\end{restatable}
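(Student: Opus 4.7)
The plan is to prove the two directions of the biconditional separately, leveraging the structural properties of extreme point sequences together with \Cref{lem:preliminaryAssignmentStructure}.

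For the forward direction (assume $d_F(P[1,a_\cpP], Q[1,w^*]) \leq \delta$ and derive (i)--(iv)), the first three conditions follow immediately from the definition of the Fréchet distance: (i) because matched endpoints must be within $\delta$, while (ii) and (iii) hold because every point of one curve must be matched to a point of the other within $\delta$. For (iv), I would argue by contradiction. Suppose $X(k)$ and $Y(l)$ form a deadlock, so $l < X(k)$ and $k < Y(l)$. Let $\phi_P, \phi_Q$ be the witnessing monotone reparametrizations and pick parameters $s_P, s_Q$ with $\phi_P(s_P) = a_k$ and $\phi_Q(s_Q) = b_l$. Splitting on whether $s_P \leq s_Q$ or $s_P > s_Q$ yields either a point of $Q[1,b_l]$ within $\delta$ of $P(a_k)$, or a point of $P[1,a_k]$ within $\delta$ of $Q(b_l)$. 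I would then combine this with the range-preserving property, which forces $\im(Q[1,b_l]) = \overline{Q(b_{l-1}) Q(b_l)}$ (and symmetrically for $P$), together with the strict alternation of extreme points, to derive a contradiction with either $l < X(k)$ or $k < Y(l)$ via case analysis on the directions (max/min) of the involved extremes.

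For the backward direction (assume (i)--(iv) and construct a matching realizing $d_F \leq \delta$), I would build an explicit monotone matching by interleaved induction over the extreme point sequences of $P$ and $Q$. To advance $\phi_P$ to $a_k$, I pair it with $Q(b_{X(k)})$, using \Cref{lem:preliminaryAssignmentStructure}(iii) to refine to a $Q$-position within the segment $[b_{X(k)-1}, b_{X(k)}]$ that covers the preceding $P$-subcurve within $\delta$; the symmetric step advances $\phi_Q$ to $b_l$. The absence of deadlocks, combined with the monotonicity of $X$ and $Y$ along the even and odd sub-sequences (\Cref{lem:preliminaryAssignmentStructure}(i)--(ii)), ensures the combined assignment is monotone. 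Between consecutive matched extremes, one routes $\phi_P$ and $\phi_Q$ linearly, and the range-preserving property on both sides, together with (ii) and (iii), guarantees that the distance stays within $\delta$ throughout each segment.

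The main obstacle will be the forward direction's treatment of (iv): since $X$ and $Y$ are defined only in terms of extreme points, one must rule out the possibility that $P(a_k)$ is close to $Q$ only at a non-extreme point in $[1, b_l]$ (and symmetrically). The key leverage should come from the strict alternation of new maxima and minima together with range-preserving, which traps any such interior witness on the wrong side of the nested structure and thereby forces a contradiction with \Cref{lem:preliminaryAssignmentStructure}(iv). The backward direction is more mechanical but requires care at the endpoint $w^*$, which can be assumed to lie within some segment $[b_{l-1}, b_l]$ in order to glue the tail of the constructed matching into a valid reparametrization of $Q[1, w^*]$.
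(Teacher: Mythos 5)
Your proposal follows essentially the same route as the paper in both directions: the forward direction rules out a deadlock by arguing that any feasible traversal would have to match $P(a_k)$ to a point of $Q[1,b_l]$ or $Q(b_l)$ to a point of $P[1,a_k]$, and the backward direction builds an explicit traversal from the preliminary assignments using \cref{lem:preliminaryAssignmentStructure}~(iii) and deadlock-freeness, which is exactly the paper's construction via transition indices and the intermediate points it calls $t_k$ and $s_k$.

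Two places are under-specified. First, in the forward direction, eliminating an ``interior witness'' is not a consequence of alternation plus range-preservation alone: what makes it work is the hypothesis that $Q$ lies in a ball of radius $\delta$, so $\im(Q[1,b_l])=\overline{Q(b_{l-1})\,Q(b_l)}$ is a segment of length at most $2\delta$, whence $B(\im(Q[1,b_l]),\delta)=B(Q(b_{l-1}),\delta)\cup B(Q(b_l),\delta)$; closeness of $P(a_k)$ to \emph{any} point of $Q[1,b_l]$ then forces closeness to one of these two extreme points, contradicting $l<X(k)$ (symmetrically for $Y$). Without this diameter bound the step you describe would fail, since a point can be within $\delta$ of the interior of a long segment while being far from both endpoints. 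Second, in the backward direction the paper does not interleave the two assignments symmetrically: it advances only along the transition indices of $X$, matches each intermediate piece of $Q$ to a single point $s_k$ of $P$ obtained from deadlock-freeness via \cref{lem:preliminaryAssignmentStructure} with the roles of $P$ and $Q$ swapped, and it treats separately the degenerate case $X\equiv 1$ as well as the tail at $w^*$ (which may lie strictly inside the last segment of $Q$). Your symmetric interleaving can presumably be made monotone, but establishing that monotonicity is precisely where the work lies; you anticipate the $w^*$ issue, but the $X\equiv 1$ case still needs its own short argument.
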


Observe that if $\X(1)$ and $\Y(1)$ do not form a deadlock then already $|P(1)-Q(1)|\leq \delta$.

\begin{proof}
    We begin by showing that $d_F(P[1, a_{\cpP}], Q[1, w^*])\leq\delta$ implies Condition (i)-(iv). Conditions (i)-(iii) hold by definition.
    Next, observe that if $\X(k)$ and $\Y(l)$ form a deadlock, then $|P(a_k)-Q(t)|>\delta$ for all $t\leq b_l$. However, in this case any traversal of $P$ and $Q$ would have to match $Q(b_l)$ with some $P(s)$ with $s<a_k$, since the part of $Q$ before $Q(b_l)$ has to be traversed before reaching $P(a_k)$ as $d_F(P[1, a_{\cpP}], Q[1, w^*])\leq\delta$. As $\X(k)$ and $\Y(l)$ form a deadlock, also $|Q(b_l)-P(s)|>\delta$ for all $s\leq a_k$. Therefore, any traversal induces a cost greater than $\delta$. Thus $\X(k)$ and $\Y(l)$ cannot form a deadlock, implying Condition (iv).

    We now turn to proving that Conditions (i)-(iv) imply that $d_F(P[1, a_\cpP], Q[1, w^*])\leq\delta$. 
    The remainder of this proof consists of identifying points, via the preliminary assignments of $P$ and $Q$, that induce a traversal of $P$ and $Q$ with cost $\leq\delta$. 
    
    Observe that Conditions~(ii) and~(iii) imply that $\X(k)<\infty$ for all $k\leq t$ and $\Y(l)<\infty$ for all $b_l\leq w^*$.
    For now, we assume $\X\not\equiv 1$. Let $\mathcal{K}=\{1\leq k\leq \cpP-2\mid \X(k)\neq \X(k+2)\}\cup \{\cpP-1\leq k \leq \cpP \mid \X(k)\neq 1\}$ be the set of \textit{transition indices}, at which $\X(k)$ is about to change when ignoring all 1-entries. To construct a matching between $P[1, a_p]$ and $Q[1, w^*]$, we define values $t_k$ and $s_k$ for all transition indices $k\in \mathcal{K}$ with the following properties. See \Cref{fig:Prefix_Example} and \Cref{fig:beginning}.
    \begin{claim}\label{c:def_sk_and_tk}
    There exist values $\{s_k\}_{k\in\mathcal{K}}\in[1, a_\cpP]$ and $\{t_k\}_{k\in\mathcal{K}}\in[1, w^*]$ such that
        \begin{enumerate}[a)]
        \item If $k<\cpP-1$, then there is a $t_k\in[b_{\X(k)-1},b_{\X(k)}]$ such that $\im(P[1,a_{k+1}])\subset B(Q(t_k), \delta)$ and $t_k$ is minimal in the sense that $\im(Q[b_{\X(k)-1},t_k])\subset\overline{Q(b_{\X(k)-1})\,Q(t_k)}$.
        \item If $k\geq \cpP-1$, then there is a $t_k\in[b_{\X(k)-1},\min\{w^*, b_{\X(k)}\}]$ such that $\im(P[1,a_\cpP])\subset B(Q(t_k), \delta)$.
        \item If $k<\cpP-1$, then there is an $s_k\in[a_k,a_{k+1}]$ such that $\im(Q[t_k,t_l])\subset B(P(s_k), \delta)$, where $l$ is the transition index after $k$.
        \item If $k\geq \cpP-1$, then there is an $s_k\in [a_{\cpP-1}, a_{\cpP}]$ such that $\im(Q[t_k,w^*])\subset B(P(s_k), \delta)$ and $\im(P[s_k,a_\cpP])\subset B(Q(w^*), \delta)$.
    \end{enumerate}
    \end{claim}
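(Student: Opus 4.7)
The plan is to prove parts (a)--(d) in sequence, invoking Lemma~\ref{lem:preliminaryAssignmentStructure} iii) and its symmetric version obtained by swapping the roles of $P$ and $Q$. The existence of the $t_k$'s and $s_k$'s is essentially forced by the structural properties of $\X$ and $\Y$ together with the no-deadlock hypothesis (iv): each transition in the preliminary assignment must be witnessed by a point of the opposite time series that simultaneously covers all values encountered so far.

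For (a), since $k<\cpP-1$ is a transition index, we have $\X(k)\neq\X(k+2)$, and Lemma~\ref{lem:preliminaryAssignmentStructure} i)-ii) then force $\X(k)\neq 1$. Applying Lemma~\ref{lem:preliminaryAssignmentStructure} iii) with $l=\X(k)$ thus yields some $t\in[b_{\X(k)-1},b_{\X(k)}]$ with $\im(P[1,a_{k+1}])\subset B(Q(t),\delta)$. Between consecutive extreme points of $Q$, the subcurve $Q[b_{\X(k)-1},b_{\X(k)}]$ is monotone by the range-preserving property of the extreme point sequence, so letting $t_k$ be the smallest such $t$ on this segment automatically satisfies the monotonicity condition $\im(Q[b_{\X(k)-1},t_k])\subset\overline{Q(b_{\X(k)-1})\,Q(t_k)}$. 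For (b), the same argument produces a candidate in $[b_{\X(k)-1},b_{\X(k)}]$; if $b_{\X(k)}\leq w^*$ we are done, and otherwise conditions (i) and (ii) of Lemma~\ref{l:startAndEnd}, together with the fact that $\im(P)$ lies in a $\delta$-ball of range $2\delta$, show that either the candidate already lies below $w^*$ or that $t_k=w^*$ itself covers $\im(P[1,a_\cpP])$.

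For (c) and (d), I would apply the symmetric version of Lemma~\ref{lem:preliminaryAssignmentStructure} iii), with $P$ and $Q$ swapped, to locate $s_k$. The crucial point is that between two consecutive transition indices $k<l$ of $\X$, the no-deadlock condition (iv) forces the relevant values of $\Y$ on $[b_{\X(k)},b_{\X(l)-1}]$ to all lie in $\{k,k+1\}$: a higher $\Y$-value on this range would, together with $\X(k)$, form a deadlock, contradicting (iv). Hence the symmetric lemma produces an $s_k\in[a_k,a_{k+1}]$ such that $\im(Q[t_k,t_l])\subset B(P(s_k),\delta)$, using the minimality of $t_k$ from (a) to pin down the left endpoint. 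Part (d) is analogous but additionally requires $\im(P[s_k,a_\cpP])\subset B(Q(w^*),\delta)$; this follows from condition (i) together with the monotonicity of $P[a_{\cpP-1},a_\cpP]$ between two consecutive extreme points and the fact that $P$ fits in a $\delta$-ball.

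The main obstacle lies in (c) and (d): translating the abstract no-deadlock condition into the sharp covering statements required for $\im(Q[t_k,t_l])\subset B(P(s_k),\delta)$ with $s_k\in[a_k,a_{k+1}]$. One must interleave the transition structure of $\X$ with that of $\Y$, using parts i)-ii) of Lemma~\ref{lem:preliminaryAssignmentStructure} on both $\X$ and $\Y$ to ensure that the $s_k$ produced by the symmetric lemma lies in the correct index range, and that the chosen $t_k$'s produce compatible covering intervals. A secondary technical point is the handling of $k=\cpP$ in (b) and (d), where Lemma~\ref{lem:preliminaryAssignmentStructure} iii) does not apply directly; here one argues separately, using conditions (i)-(iii) of Lemma~\ref{l:startAndEnd} and the $\delta$-ball containment of $P$, that $w^*$ itself serves as a valid $t_\cpP$ and that $a_\cpP$ (or $a_{\cpP-1}$) provides a valid $s_\cpP$.
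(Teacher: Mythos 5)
Your high-level toolkit matches the paper's (Lemma~\ref{lem:preliminaryAssignmentStructure}~iii) and its $P$/$Q$-swapped version, the no-deadlock hypothesis, Conditions (i)--(iii), and $2\delta$-range arguments), but the key step for part~c) is wrong as stated. You argue that a large value of $\Y$ at an index in $[\X(k),\X(l)-1]$ would ``together with $\X(k)$ form a deadlock''; it cannot, because a deadlock between $\X(k)$ and $\Y(l')$ requires $l'<\X(k)$, and your indices satisfy $l'\geq\X(k)$. The paper's argument instead pairs $\Y(\X(l)-1)$ and $\Y(\X(l)-2)$ with $\X(k+2)=\X(l)$: since $\X(k+2)=\X(l)$, neither of these $\Y$-values can equal $k+2$, and the no-deadlock condition (applied to $\X(k+2)$, not $\X(k)$) then forces $\max(\Y(\X(l)-2),\Y(\X(l)-1))\leq k+1$, which is what lets the swapped Lemma~\ref{lem:preliminaryAssignmentStructure}~iii) place $s_k$ in $[a_k,a_{k+1}]$. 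Your stronger claim that these $\Y$-values ``all lie in $\{k,k+1\}$'' is also false (they may be smaller). Moreover, you do not address why $Q(t_l)\in B(P(s_k),\delta)$ (the paper gets this from $P(s_k)\in\im(P[1,a_{l+1}])\subset B(Q(t_l),\delta)$), and the containment $\im(Q[t_k,t_l])\subset B(P(s_k),\delta)$ rests on the minimality of $t_l$, not of $t_k$ as you write.

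A second recurring gap is your appeal to ``monotonicity between consecutive extreme points.'' The range-preserving property only confines $Q[b_{j},b_{j+1}]$ (resp.\ $P[a_{\cpP-1},a_\cpP]$) to the segment between the two extreme points; it does not make the piece monotone. In a) the minimal $t_k$ must instead be obtained by a first-crossing argument (using that $Q(b_{\X(k)-1})\notin B(P(a_k),\delta)$), and in d) the containment $\im(P[s_k,a_\cpP])\subset B(Q(w^*),\delta)$ requires choosing $s_k$ \emph{maximal} so that $\im(P[s_k,a_\cpP])=\overline{P(s_k)\,P(a_\cpP)}$ with both endpoints within $\delta$ of $Q(w^*)$; the $2\delta$-range of $P$ alone does not prevent the curve from dipping back toward $P(a_{\cpP-1})$ after $s_k$. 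Finally, in b) the hard case $w^*<b_{\X(k)}$ is not resolved by your dichotomy ``either the candidate lies below $w^*$ or $t_k=w^*$ covers'': the valid $t_k$ may be strictly between $b_{\X(k)-1}$ and $w^*$ and distinct from both; the paper constructs it by showing (via Condition (ii), the extremality of $Q(b_{\X(k)-1})$ in $Q[1,w^*]$, and the $2\delta$-range of $P$) that the second extremum of $Q[1,w^*]$ lies after $b_{\X(k)-1}$, which yields a suitable intermediate point.
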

    
    Utilizing $a)$-$d)$, we construct a traversal of $P$ and $Q$ as follows: Let $k$ be the smallest transition index. Match $P[1,s_k]$ to $Q(t_k)$ with $t_k=1$ via $a)$. Now let $k$ be the last handled transition index and let $l$ be the next transition index after $k$. First match $Q[t_k,t_l]$ to $P(s_k)$ via $c)$ (lightblue in \Cref{fig:beginning}) and then match $P[s_k,s_l]$ to $Q(t_l)$ via $a)$ (yellow in \Cref{fig:beginning}). Lastly, let $k$ be the last transition index. So far, the constructed matching matches $Q[1,t_k]$ to $P[1,s_k]$. We now match $Q(t_k,w^*)$ to $P(s_k)$ and lastly $P[s_k,a_\cpP]$ to $Q(w^*)$ via $d)$ (darkblue and orange in \Cref{fig:beginning}). 

    If instead $\X\equiv 1$, then by Condition $(iii)$ and the range-preservedness of $P$ there must be a point $s$ in $[a_{\cpP-1},a_{\cpP}]$, such that $\im(Q[1,w^*])\subset B(P(s),\delta)$. Choosing a maximal such $s$ also ensures that $\im(P[s,a_{\cpP}])\subset B(Q(w^*),\delta)$, since there are no deadlocks. However, then there is a traversal of $P$ and $Q$, by first matching $P[1,s]$ to $Q(1)$, then matching $Q[1,w^*]$ to $P(s)$, and lastly matching $P[s,a_{\cpP}]$ to $Q(w^*)$, which concludes the proof.
\end{proof}

\begin{proof}[Proof of Claim \ref{c:def_sk_and_tk}]
    \ 
    \begin{enumerate}[a)]
    \item The existence of $t_k$ in a) is an immediate consequence of \Cref{lem:preliminaryAssignmentStructure} iii). Further, by \Cref{lem:preliminaryAssignmentStructure} i) and since $k$ is a transition index, it holds that $t_k\leq b_{\X(k)}\leq b_{\X(k+2)-1}$. Since $\im(P[1, a_p])\subset B(Q[1, w^*], \delta)$, it follows that $b_{\X(k+2)-1}<w^*$.
    
    \item If $w^*\geq b_{\X(k)}$, the existence of $t_k$ in b) follows from \Cref{lem:preliminaryAssignmentStructure} iii). 
    So, assume that $w^*<b_{\X(k)}$. It holds by definition that $Q(b_{\X(k)-1})$ is either the maximum or the minimum of $Q[1, w^*]$. Further, no point on $Q[1, b_{\X(k)-1}]$ has distance at most $\delta$ to $P(a_k)$, by the definition of $\X(k)$. Hence, by Condition (ii) of \Cref{l:startAndEnd} and since $Q$ is contained in a $\delta$-ball, the other global maximum or minimum of $Q[1, w^*]$ lies on $Q[b_{\X(k)-1}, w^*]$. Therefore, there exists a point $t\in[b_{\X(k)-1}, w^*]$ such that $\im(Q[1,w^*])\subset\overline{Q(b_{\X(k)-1})\,Q(t)}$. Again by Condition (ii) of \Cref{l:startAndEnd}, $\im(P[1,a_\cpP])\subset B(Q[1,w^*],\delta)=B(\overline{Q(b_{\X(k)-1})\,Q(t)}, \delta)$. As $P[1,a_\cpP]$ is contained in a $2\delta$-range, there exists a value $t_k$ such that $b_{\X(k)-1}\leq t_k\leq t\leq w^*$ and $\im(P[1,a_\cpP])\subset B(Q(t_k), \delta)$.

    \item Let $l$ be the transition index following $k$, so that $\X(k+2)=\X(l)\neq \X(k)$. By the minimal choice of $t_l$ and the range-preservedness of $Q$, it suffices to show that there is a $s_k$ such that $Q[1,b_{\X(l)-1}]$ and $Q(t_l)$ lie in $B(P(s_k),\delta)$, as this already implies that $\im(Q[t_k,t_l])\subset \im(Q[1,t_l])\subset B(P(s_k),\delta)$.
    
    As $\X(k+2)=\X(l)$, neither $\Y(\X(l)-1)$ nor $\Y(\X(l)-2)$ can equal $k+2$. As $\Y$ and $\X$ do not form any deadlocks, in fact $k+1\geq \max(\Y(\X(l)-2),\Y(\X(l)-1))$. Hence, by \Cref{lem:preliminaryAssignmentStructure} (with the roles of $P$ and $Q$ reversed) there is a point $s_k\in[a_k,a_{k+1}]$, such that $Q[1,b_{\X(l)-1}]$ lies in  $B(P(s_k),\delta)$. As $P(s_k)\in \im(P[1,a_{k+1}])\subset \im(P[1,a_{l+1}])$, in particular, $P(s_k)\in B(Q(t_l),\delta)$, and as such $Q(t_l)\in B(P(s_k),\delta)$. Thus, $s_k$ is the sought-after point.
    
    \item By Condition~(iii) of \Cref{l:startAndEnd} and because $Q[1,w^*]$ is contained in a $2\delta$-range there is a point $s\in[a_{\cpP-1},a_\cpP]$ such that $\im(Q[t_k,w^*])\subset B(P(s), \delta)$. By choosing $s_k$ to be the maximal such  $s$ it follows that $s_k\geq s_l$ for any transition index $l$. 
    The second property follows from the maximality of $s_k$, as then $\im(P[s_k,a_\cpP])=\overline{P(s_k)\,P(a_\cpP)}$, and both $P(s_k)$ and $P(a_\cpP)$ have distance at most $\delta$ to $Q(w^*)$, by definition and Condition~(i), respectively.\qedhere
    \end{enumerate}
\end{proof}

\begin{figure}
    \centering
    \includegraphics[page=2]{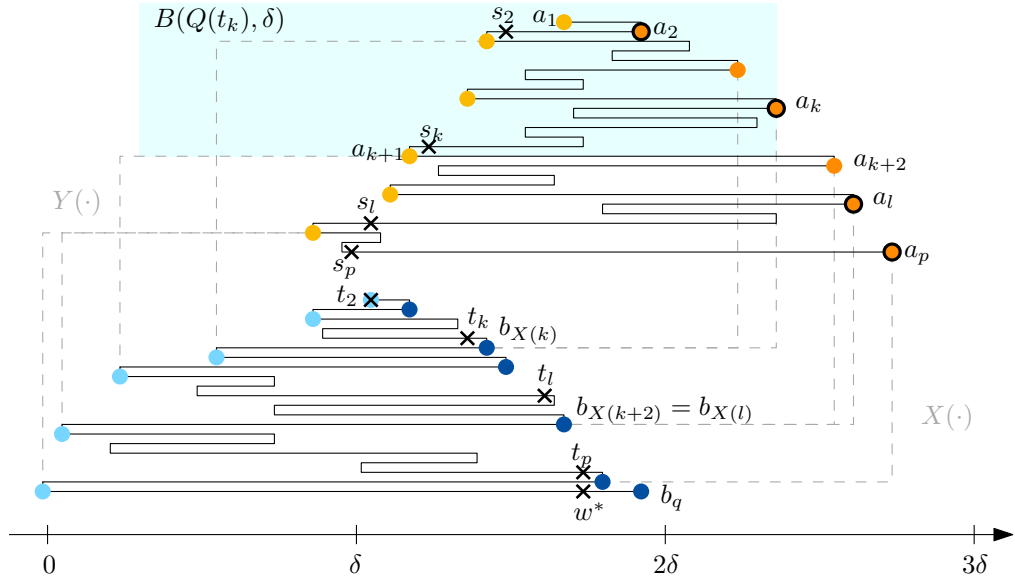}
    \caption{The gray dashed lines mark the preliminary assignments $X(\cdot)$ and $Y(\cdot)$. 
    The points defined in the proof of \cref{l:startAndEnd} are marked. The circled points are the transition indices.
    }
    \label{fig:Prefix_Example}
\end{figure}

\begin{figure}
    \centering
    \includegraphics{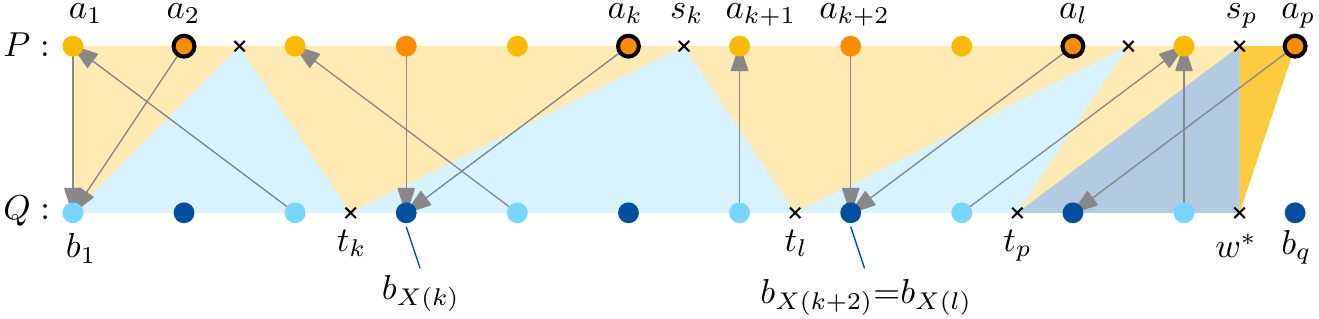}
    \caption{The transition indices are circled in black.
    The arrows link $a_k$ and $\X(k)$ as well as $b_k$ and $\Y(k)$ and for all $a_k$ (resp. $b_k$) without an outgoing arrow it holds $\X(k)=1$ (resp. $\Y(k)=1$). The traversal described in the proof of \cref{l:startAndEnd} is visualized by the triangles.}
    \label{fig:beginning}
\end{figure}
We use the next lemma to decide whether a \ePQ{\pre{P}}{\pre{Q}} exists.
%With the help of \Cref{l:startAndEnd}, we can show the following lemma, which can be used to decide whether a \ePQ{\pre{P}}{\pre{Q}} exists.

\begin{restatable}{lemma}{wip}\label{lem:wip}

%\begin{lemma}\label{lem:wip}
    Let $P$ and $Q$ be two time series and $P(i_2)$ be the second $\delta$-signature vertex~of~$P$ and $Q(j_2)$ of $Q$.
    There exists a \ePQ{\pre{P}}{\pre{Q}} if and only if $\X(k)< \infty$, and $\X(k)$ and $\Y(l)$ do not form a deadlock for all $k\in[\cpP]$ and $l\in[\cpQ]$.
    If it exists, it is 
    \[\min\{w=1, \dotso, j_2 \mid |P(i_2)-Q(w)|\leq \delta,\  \im(\pre{P})\subset B(Q([1, w+1], \delta)\}.\]
%\end{lemma}
\end{restatable}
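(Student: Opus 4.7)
The plan is to reduce the characterization to \Cref{l:startAndEnd}, which governs when $d_F(\pre{P},\pre{Q}[1,w^*])\leq\delta$ in terms of the preliminary assignments $X$ and $Y$ on the extreme-point sequences of $\pre{P}$ and $\pre{Q}$. Throughout, $\pre{P}$ and $\pre{Q}$ are each contained in a $2\delta$-range with $P(i_2)$ and $Q(j_2)$ sitting at the respective boundaries by the range property of signatures, so \Cref{l:startAndEnd} applies directly.

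For the forward direction, suppose the minimal $\pre{P}$-matcher $w$ on $\pre{Q}$ exists and let $w^*\in[w,\min(w+1,j_2)]$ witness $d_F(\pre{P},\pre{Q}[1,w^*])\leq\delta$. Applying \Cref{l:startAndEnd} to $\pre{P}$ and $\pre{Q}$ with this $w^*$ directly supplies the no-deadlock property (condition (iv)) and the inclusion $\im(\pre{P})\subset B(\pre{Q}[1,w^*],\delta)\subset B(\pre{Q},\delta)$ (condition (ii)). Since $\pre{Q}$ lies in a $2\delta$-range, its consecutive extreme points differ by at most $2\delta$, so any point within $\delta$ of $\pre{Q}$ is within $\delta$ of some extreme point $Q(b_l)$; this is exactly $X(k)<\infty$ for all $k$.

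For the backward direction, assume $X(k)<\infty$ and no deadlocks form, let $W$ denote the candidate set on the right-hand side of the formula, and set $w_0:=\min W$. I proceed in three steps: (a) show $W\neq\emptyset$; (b) verify that $w_0$ is a valid matcher; (c) show minimality. For (a), \Cref{lem:preliminaryAssignmentStructure}(iii) applied to $k=\cpP-1$ yields a point $t$ on $\pre{Q}$ with $\im(\pre{P})\subset B(Q(t),\delta)$; rounding $t$ to an adjacent integer and using that consecutive $\pre{Q}$-vertices are within $2\delta$ places this integer in $W$, with the degenerate regime $X\equiv 1$ handled by directly picking $w=1$. For (b), I apply the backward direction of \Cref{l:startAndEnd} to $\pre{P}$ and $\pre{Q}[1,w^*]$, setting $w^*=w_0+1$ if $|P(i_2)-Q(w_0+1)|\leq\delta$, and otherwise choosing a real $w^*\in[w_0,w_0+1)$ with $|P(i_2)-Q(w^*)|=\delta$. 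Condition (i) then follows by construction together with the remark after \Cref{l:startAndEnd} that no deadlock forces $|P(1)-Q(1)|\leq\delta$; condition (iv) is inherited from the full no-deadlock hypothesis; condition (ii) follows from the defining property of $w_0\in W$. For (c), any matcher $w'$ with associated $w'^*$ satisfies $\im(\pre{P})\subset B(\pre{Q}[1,w'^*],\delta)\subset B(Q[1,w'+1],\delta)$ via condition (ii) of \Cref{l:startAndEnd}, placing $w'\in W$ and hence $w'\geq w_0$.

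The most delicate step is (b) in the case $|P(i_2)-Q(w_0+1)|>\delta$, where the requirements of conditions (i) and (ii) pull $w^*$ in opposite directions. The resolution is a 1D argument exploiting that $P(i_2)$ sits at the boundary of $\pre{P}$'s $2\delta$-range, so $\pre{P}$ lies entirely on one side of $P(i_2)$; this rules out any point of $\pre{P}$ falling into the ``slice'' $B(\pre{Q}[1,w^*],\delta)\setminus B(\pre{Q}[1,w_0],\delta)$. A symmetric argument, using the $2\delta$-range of $\pre{Q}$ together with the no-deadlock hypothesis, is needed to establish condition (iii) of \Cref{l:startAndEnd}.
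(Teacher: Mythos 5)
Your overall reduction to \Cref{l:startAndEnd} is the right frame, your forward direction is fine (including the neat derivation of $\X(k)<\infty$ from condition (ii) plus the $2\delta$-range of $\pre{Q}$), and your minimality step (c) is clean. The problem is step (b): to certify that $w_0=\min W$ is actually a matcher you must verify \emph{all} of (i)--(iv) of \Cref{l:startAndEnd} for your witness $w^*$, and you dispose of condition (iii), namely $\im(Q[1,w^*])\subset B(\pre{P},\delta)$, with a single sentence (``a symmetric argument \dots is needed''). This is not a symmetric repetition of anything you have established and it does not follow from $w_0\in W$ or from the minimality of $w_0$: membership in $W$ constrains only $|P(i_2)-Q(w_0)|$ and the coverage of $\pre{P}$ by $Q[1,w_0+1]$, and says nothing about $Q[1,w^*]$ staying near $\pre{P}$. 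Ruling out that $Q$ escapes $B(\pre{P},\delta)$ before $w^*$ is exactly where the no-deadlock hypothesis has to do real work: in the paper this is the core of the argument, which constructs a real parameter $w\in[b_{\X_{\max}-1},b_{\X_{\max}}]$ via \Cref{lem:preliminaryAssignmentStructure}(iii), uses no-deadlock to conclude $\max(\Y(\X_{\max}-2),\Y(\X_{\max}-1))\leq \cpP$ so that $Q(b_{\X_{\max}-2}),Q(b_{\X_{\max}-1})\in B(\pre{P},\delta)$, and only then assembles $\im(Q[1,w])\subset B(\pre{P},\delta)$, finally rounding. A correct completion of your route needs an argument of this type (any excursion of $Q[1,w^*]$ beyond $B(\pre{P},\delta)$ forces an extreme point $Q(b_l)$ with $\Y(l)=\infty$ and, via range-preservation and $w_0\leq b_{\X_{\max}}$, an index $l<\X_{\max}$, contradicting no-deadlock); as written, the crux of the backward direction is missing.

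Two smaller points. In step (a), invoking \Cref{lem:preliminaryAssignmentStructure}(iii) at $k=\cpP-1$ requires $\X(\cpP-1)\neq 1$, and your fallback only covers $\X\equiv 1$, so the sub-case $\X(\cpP-1)=1<\X(\cpP)$ is not licensed by the lemma as stated (the paper's own write-up is similarly loose here, so I weight this lightly). In the delicate case of (b), the slice you describe, $B(\pre{Q}[1,w^*],\delta)\setminus B(\pre{Q}[1,w_0],\delta)$, is not the relevant one (it should compare $w^*$ with $w_0+1$), and the ``$\pre{P}$ lies on one side of $P(i_2)$'' argument only handles the exit on the side away from $\pre{P}$; the other exit direction is saved by the different observation that then $Q(w^*)=P(i_2)\pm\delta$ alone covers all of $\pre{P}$. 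These are repairable, but condition (iii) is a genuine gap.
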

\begin{proof}
    If there exists $k$, $l$ such that $\X(k)$ and $\Y(l)$ form a deadlock, then by \cref{l:startAndEnd} there cannot exist such a value $w^*\in [1, j_2]$ such that $d_F(\pre{P}, Q[1, w^*])\leq \delta$. Hence, a \ePQ{\pre{P}}{\pre{Q}} does not exist.

    Now, assume that $\X(k)$ and $\Y(l)$ do not form a deadlock for all $k=1, \dotso, \cpP$ and $l=1, \dotso, \cpQ$.
    Let $\X_{\max}=\max\{\X(\cpP-1), \X(\cpP)\}$. If $X_{\max}=1$, then $\im(\pre{P})\subset B(P(1), \delta)$. Hence, the \ePQ{\pre{P}}{\pre{Q}} is $P(1)$. 
    Otherwise by \Cref{lem:preliminaryAssignmentStructure} (iii), there exists a value $t\in [b_{\X_{\max}-1}, b_{\X_{\max}}]$ such that $\im(\pre{P})\subset B(Q(t), \delta)$. Let $w$ be one of these values such that in addition
    $\im(Q[b_{\X_{\max}-1}, w])\subset \overline{Q(b_{\X_{\max}-1})Q(w)}$. Note that $w\leq j_2$. Since by assumption there does not exist a deadlock, $\max(\Y(\X_{\max}-2), \Y(\X_{\max}-1))\leq \cpP$. Therefore, $Q(w)$, $Q(b_{\X_{\max}-1})$, and $Q(b_{\X_{\max}-2})$ lie in $B(\pre{P}, \delta)$ and \[\im(Q[1, w])\subset \overline{Q(b_{\X_{\max}-2})Q(b_{\X_{\max}-1})}\cup\overline{Q(b_{\X_{\max}-1})Q(w)}\subset B(\pre{P}, \delta).\] 
    Hence, all conditions of \cref{l:startAndEnd} are satisfied for $w$ and thus $d_F(\pre{P}, Q[1, w])\leq \delta$.
    As a result, $w^*=\min\{w\in[1, j_2]\mid d_F(\pre{P}, Q[1, w])\leq \delta\}$ is well-defined. It holds that $\im(Q[1, w^*])\subset \im(Q[1, w])\subset B(\pre{P}, \delta)$. Hence, 
    by \cref{l:startAndEnd} it holds that \[w^*=\min\{w\in[1, j_2]\mid |P(i_2)-Q(w)|\leq \delta,\ \im(\pre{P})\subset B(Q[1, w], \delta)\}.\] If $|P(i_2)-Q(\lfloor w^*\rfloor)|\leq \delta$, we define $\wpre=\lfloor w^*\rfloor$. 
    Then, $Q(\wpre)$ is the \ePQ{\pre{P}}{\pre{Q}}, since $w^*$ was defined as the minimum. Otherwise, we set $\wpre=\lceil w^*\rceil$. Since $\pre{Q}$ is contained in a $2\delta$-range, $|P(i_2)-Q(w^*)|\leq \delta$, and $|P(i_2)-Q(\lfloor w^*\rfloor)|> \delta$, it holds that $|P(i_2)-Q(\lceil w^*\rceil)|\leq \delta$. 
    Therefore, we have that $\im(Q[1, \wpre])\subset \im(Q[1, w^*])\cup \overline{Q(w^*) Q(\wpre)}\subset B(\pre{P}, \delta)$. Hence, all conditions of \cref{l:startAndEnd} are fulfilled and thus $d_F(\pre{P}, Q[1, \wpre])\leq \delta$, so that $\wpre$ is the \ePQ{\pre{P}}{\pre{Q}}. 
    Further, by the definition of $w^*$, it holds that \[\wpre=\min\{w\in\{1, \dotso, j_2\}|\ |P(i_2)-Q(w)|\leq \delta,\  \im(\pre{P})\subset B(Q([1, w+1], \delta)\}.\qedhere\] 
\end{proof}

Finally, we are ready to show how the \ePQ{\pre{P}}{\pre{Q}} can be computed.

\begin{restatable}{lemma}{searchprefixend}\label{lem:searchprefixend}
%\begin{lemma}\label{lem:searchprefixend}
    Let $P$ and $Q$ be time series and let $\im (Q[1, j])$ be given for every $j=1,\dots, q$. Further, assume the preliminary assignments $X$ and $Y$ of $\pre{P}$ and $\pre{Q}$ do not form a deadlock. Then we can compute the \ePQ{\pre{P}}{\pre{Q}} in $\mathcal{O}(\log n)$ time.
%\end{lemma}
\end{restatable}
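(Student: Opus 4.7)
The plan is to invoke the characterisation proved in \Cref{lem:wip}, which states that
\[
\wpre \;=\; \min\{w\in\{1,\dots,j_2\}\mid |P(i_2)-Q(w)|\leq\delta\text{ and }\im(\pre{P})\subset B(Q[1,w+1],\delta)\},
\]
and to show that this minimum can be located using a single $\mathcal{O}(\log n)$ binary search followed by a constant-time adjustment. The existence of the minimum is guaranteed by the assumption that $X$ and $Y$ produce no deadlocks (by \Cref{lem:wip}).

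The key structural observation is that condition~(b) is monotone in $w$: as $w$ grows, $\im(Q[1,w+1])$ only gets larger and hence so does $B(Q[1,w+1],\delta)$. Thus the set of $w$ satisfying (b) is a suffix of $\{1,\dots,j_2\}$, and I will binary search for its smallest element $w_b$. Each check in the binary search is $\mathcal{O}(1)$: in one dimension, $\im(Q[1,w+1])$ is a precomputed interval $[m_{w+1},M_{w+1}]$ by hypothesis, its $\delta$-neighborhood is $[m_{w+1}-\delta,M_{w+1}+\delta]$, and $\im(\pre{P})$ is likewise an interval (since $\pre{P}$ sits in a $2\delta$-range around $P(i_2)$); containment then reduces to comparing two pairs of endpoints. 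This yields the $\mathcal{O}(\log n)$ time bound for finding $w_b$.

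The remaining task is to decide whether $\wpre = w_b$ or $\wpre = w_b+1$. To see that these are the only two candidates, I will follow the argument already given in the proof of \Cref{lem:wip}: the minimal real parameter $w^*$ with $d_F(\pre{P},Q[1,w^*])\leq\delta$ satisfies $w^*\in[w_b,w_b+1]$ by the monotonicity of (b), and in the proof of \Cref{lem:wip} it is shown that the $2\delta$-range containment of $\pre{Q}$ forces at least one of $Q(\lfloor w^*\rfloor)$, $Q(\lceil w^*\rceil)$ to lie within $\delta$ of $P(i_2)$. Consequently $\wpre \in\{w_b,w_b+1\}$, and a single constant-time evaluation of $|P(i_2)-Q(w_b)|$ against $\delta$ decides the matter: return $w_b$ if the inequality holds, otherwise return $w_b+1$.

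The main (minor) obstacle is making sure condition~(b) is indeed checkable in $\mathcal{O}(1)$: this uses both the precomputed intervals $\im(Q[1,j])$ supplied by hypothesis and the fact that $\im(\pre{P})$ is a single interval, so that only four scalar comparisons are needed per binary-search step. With this in place, the binary search yields $w_b$ in $\mathcal{O}(\log n)$ time, and the constant-time adjustment recovers $\wpre$, giving the overall $\mathcal{O}(\log n)$ bound.
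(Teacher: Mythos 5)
Your route differs from the paper's (the paper, after the binary search for the containment condition, locates the first vertex of $Q$ inside $B(P(i_2),\delta)$ with two further binary searches exploiting range-preservation), and it can only work if your central claim $\wpre\in\{w_b,w_b+1\}$ is proved. As written, this claim is not justified. You assert $w^*\in[w_b,w_b+1]$ ``by the monotonicity of (b)'', but monotonicity of the containment condition only gives the lower bound: it forces any parameter with $d_F(\pre{P},Q[1,\cdot])\leq\delta$ to lie at or after the containment threshold, hence essentially $w^*\geq w_b$. It says nothing about the upper bound $w^*\leq w_b+1$, because $w^*$ must in addition satisfy $|P(i_2)-Q(w^*)|\leq\delta$ (and condition (iii) of \Cref{l:startAndEnd}), and a priori the first parameter past $w_b$ meeting the distance condition could lie many vertices later. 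Put differently, \Cref{lem:wip} only gives $\wpre=\min\{w\geq w_b,\ w\leq j_2 : |P(i_2)-Q(w)|\leq\delta\}$, and nothing you cite bounds this minimum by $w_b+1$; the step of the proof of \Cref{lem:wip} that you invoke (at least one of $Q(\lfloor w^*\rfloor),Q(\lceil w^*\rceil)$ lies within $\delta$ of $P(i_2)$) presupposes exactly the localisation of $w^*$ that is missing.

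The dichotomy is in fact true, but it needs a short geometric argument that you have to supply: write $B(Q[1,w+1],\delta)\supset\im(\pre{P})$ as two threshold conditions (the running minimum of $Q$ must drop to within $\delta$ of the far endpoint of $\im(\pre{P})$, and the running maximum must rise to within $\delta$ of $P(i_2)$, or symmetrically); since $P(i_2)$ is an endpoint of $\im(\pre{P})$, $\im(\pre{P})$ has length at most $2\delta$, and $\pre{Q}$ lies in a $2\delta$-range, a case analysis on which of the two thresholds is attained last (at vertex $w_b+1$) shows that $Q(w_b)$ or $Q(w_b+1)$ is within $\delta$ of $P(i_2)$, which yields $\wpre\in\{w_b,w_b+1\}$ and validates your $\Oh(1)$ tie-break. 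With that argument added, your single-binary-search variant is a legitimate alternative to the paper's proof. One further caveat: under the stated hypotheses a minimal matcher need not exist ($X(k)<\infty$ is also required in \Cref{lem:wip}, and the containment may only be achievable using the edge beyond $Q(j_2)$), in which case your rule would wrongly return $w_b+1$, possibly exceeding $j_2$; you should verify conditions (a) and (b) at the returned candidate (and that it is at most $j_2$) and output $\infty$ otherwise.
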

\begin{proof}
    Let $P(i_2)$ be the last vertex of $\pre{P}$ %second $\delta$-signature of $P$
    and $Q(j_2)$ be the last vertex of $\pre{Q}$.  
    Using~\cref{lem:wip}, we can compute the \ePQ{\pre{P}}{\pre{Q}} as 
    \[\wpre=\min\{w\in\{1, \dotso, j_2\}|\ |P(i_2)-Q(w)|\leq \delta, \im(\pre{P})\subset B(Q([1, w+1], \delta)\}.\] 
    
    The second requirement is monotone, so that binary search can be used to identify the minimum $w\in \{1,\dotso, j_2\}$ that satisfies it. To further compute $\wpre$, observe that since $Q$ is range-preserving, the first edge of $Q$ that intersects $B(P(i_2),\delta)$ can also be found using binary search. Then, observe that among the edges intersecting $B(P(i_2),\delta)$, if the starting point of an edge is inside $B(P(i_2),\delta)$, then this is also true for proceeding edges since $Q$ is range-preserving, so that the first such starting point can again be found using binary search. 
\end{proof}

The following lemma shows how to decide for two time series $P$ and $Q$ whether the entry $M_{\delta}(i_2, j_2)$ of the modified free-space matrix is $0$ or $1$. 
\begin{restatable}{lemma}{itwojtwo}\label{lem:(i_2j_2)}
%\begin{lemma}\label{lem:(i_2j_2)}
    Let $P$ and $Q$ be two time series and let extreme point sequences of $\pre{P}$ and $\pre{Q}$ be given.
    If the preliminary assignments of $\pre{P}$ and $\pre{Q}$ do not form a deadlock,
    then we can decide whether $d_F(\pre{P}, \pre{Q})\leq \delta$ in $\Oh(1)$ time. Otherwise, it holds that $d_F(\pre{P}, \pre{Q})> \delta$.
%\end{lemma}
\end{restatable}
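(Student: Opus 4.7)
The plan is to apply Lemma~\ref{l:startAndEnd} with the choice $w^* = b_\cpQ$, which identifies $Q[1, w^*]$ with the full prefix $\pre{Q}$. The ``otherwise'' branch of the claim is then immediate: a deadlock violates Condition~(iv) of Lemma~\ref{l:startAndEnd}, which by the contrapositive forces $d_F(\pre{P}, \pre{Q}) > \delta$.

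For the main branch, I would verify each of Conditions~(i)--(iii) of Lemma~\ref{l:startAndEnd} in constant time; Condition~(iv) is given by hypothesis. Condition~(i) is a pair of direct numerical comparisons, namely $|P(1) - Q(1)| \leq \delta$ and $|P(i_2) - Q(j_2)| \leq \delta$ (note that $P(i_2)$ and $Q(j_2)$ are the last vertices of $\pre{P}$ and $\pre{Q}$, and hence equal $P(a_\cpP)$ and $Q(b_\cpQ)$ respectively). Both checks are clearly $O(1)$.

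For Conditions~(ii) and~(iii), the key observation is that the range-preserving property of an extreme point sequence collapses the image of a prefix to a single line segment: iterating $\im(P[1, a_{k+1}]) = \overline{P(a_k)\,P(a_{k+1})}$ all the way to $k = \cpP - 1$ yields $\im(\pre{P}) = \overline{P(a_{\cpP-1})\,P(a_\cpP)}$, and analogously $\im(\pre{Q}) = \overline{Q(b_{\cpQ-1})\,Q(b_\cpQ)}$. Since the extreme point sequences are given, the endpoints of both segments are directly available, and both containments then reduce to 1D interval containment checks of the form $\overline{P(a_{\cpP-1})\,P(a_\cpP)} \subset B(\overline{Q(b_{\cpQ-1})\,Q(b_\cpQ)}, \delta)$, each doable in $O(1)$.

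The only (minor) obstacle is recognizing that the range-preserving property collapses the image of each prefix to a single segment, so that no loop over the extreme points is required; once this is observed, all four conditions of Lemma~\ref{l:startAndEnd} can be evaluated in constant time, completing the proof.
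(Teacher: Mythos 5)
Your proposal is correct and follows essentially the same route as the paper: both apply Lemma~\ref{l:startAndEnd} (with $w^*=b_\cpQ$), use the hypothesis to dispose of Condition~(iv) and the deadlock case, check Condition~(i) directly, and use the range-preserving property of the extreme point sequences to collapse $\im(\pre{P})$ and $\im(\pre{Q})$ to the segments $\overline{P(a_{\cpP-1})\,P(a_\cpP)}$ and $\overline{Q(b_{\cpQ-1})\,Q(b_\cpQ)}$, reducing Conditions~(ii) and~(iii) to constant-time interval containment tests.
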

\begin{proof}
    We prove this lemma using~\Cref{l:startAndEnd}. Property (i) can be checked in constant time. By the definition of the extreme point sequence, it holds that $\im(\pre{P})= \overline{P(a_{\cpP-1}) P(a_\cpP)}$ and $\im(\pre{Q})= \overline{Q(b_{\cpQ-1})Q(b_\cpQ)}$. Therefore, we can check in constant time whether Properties~(ii) and~(iii) hold.
\end{proof}

The results of this section show how, using deadlocks, we can fill in the modified free-space matrix. In the following, we show that deadlocks allow efficient processing of updates of the transformations, which we crucially exploit for our algorithms.   

\section{1D Continuous Fréchet Distance Under Translation}\label{sec:translations}
We first describe the reduction of the 1D continuous Fréchet distance under translation to an offline dynamic grid graph reachability problem. Subsequently we describe how we solve this problem for two time series $P$ and $Q$ both of complexity $n$ in time $\Oh(n^{8/3} \log^3 n)$.

\subsection{\texorpdfstring{\boldmath The $\Oh(n^2)$ Events}{The O(n\textasciicircum 2) Events}}\label{sec:ArrangTrans}
In this section, we show how to compute a set of $\Oh(n^2)$ translation values such that the \F under translation between $P$ and $Q$ is at most $\delta$ if and only if there exists a translation value $t$ in this set such that the \F between $P$ and $Q_t$ is at most~$\delta$. We start by observing that it is sufficient to consider vertex-vertex events. 

\begin{restatable}{lemma}{pointpointtranslationevents}\label{lem:pointpointtranslationevents}
%\begin{lemma}\label{lem:pointpointtranslationevents}
    Let $M(P,Q)=(m_{i,j})_{(i,j)\in[n]\times[n]}$ be the matrix where $m_{i,j}=\mathbbm{1}_{|P(i)-Q(j)|\leq\delta}$. Let $t,t'\in\RR$ be given. If $M(P,Q_t)=M(P,Q_{t'})$, then $d_F(P,Q_t)\leq\delta\iff d_F(P,Q_{t'})\leq\delta$.
%\end{lemma}
\end{restatable}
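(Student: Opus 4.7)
The plan is to invoke the characterization $d_F(P,Q)\leq \delta$ iff $M_\delta(n,n)$ is reachable (\Cref{lem:dfModMatEquiv}) and show that the modified free-space matrix associated to $(P,Q_t)$ depends on the translation $t$ only through the vertex-vertex matrix $M(P,Q_t)$. Combining these two facts immediately yields the claim.

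Observe first that the $\delta$-signature indices of $Q_t$ coincide with those of $Q$ for every $t$, since each defining property of an extended $\delta$-signature ($2\delta$-monotonicity, non-degeneracy, minimum edge length, range condition) is invariant under translation. In particular, the indices $j_2$ and $j_{s_Q-1}$, the prefix $\pre{Q_t}=Q_t[1,j_2]$ and suffix $\suf{Q_t}=Q_t[j_{s_Q-1},n]$, and their extreme-point sequences, are the same for $t$ and $t'$.

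Next I would inspect each of the six conditions in \Cref{def:mod_free_space_matrix}. Items (a)--(c) depend only on which indices are $\delta$-signature vertices (translation-invariant) and on entries of $M(P,Q_t)$, hence coincide for $t$ and $t'$. For (d), the extra condition $d_F(\pre P,\pre{Q_t})\leq \delta$ is decided via \Cref{lem:(i_2j_2)} and \Cref{l:startAndEnd} by the preliminary assignments $\X$ and $\Y$ of the extreme points of $\pre P$ and $\pre{Q_t}$; these assignments are determined by vertex-vertex distance comparisons at extreme-point indices, which are read off from $M(P,Q_t)$. Item (e) is symmetric.

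The main obstacle is item (f), which depends on the minimal matcher values $\wpre,\vpre,\wsuf,\vsuf$. By \Cref{lem:wip}, $\wpre$ is the smallest $w\leq j_2$ with $|P(i_2)-Q_t(w)|\leq\delta$ and $\im(\pre P)\subset B(Q_t[1,w+1],\delta)$. The former condition is precisely the entry $M(i_2,w)$. For the latter I would exploit the 1D structure: the image of any 1-dimensional piecewise-linear time series is a connected interval whose endpoints are attained at vertices. Hence $B(Q_t[1,w+1],\delta)$ is a single interval of the form $[\min Q_t[1,w+1]-\delta,\,\max Q_t[1,w+1]+\delta]$, and $\im(\pre P)=[\min \pre P,\max \pre P]$. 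The containment thus reduces to two inequalities between vertex values of $P$ and $Q_t$, and its truth can flip only at vertex-vertex events $|P(a)-Q_t(b)|=\delta$, which are exactly the transitions of $M$. Since $M(P,Q_t)=M(P,Q_{t'})$, no such transition is crossed between $t$ and $t'$, so $\wpre$ is identical under both translations; the same reasoning covers $\vpre,\wsuf,\vsuf$. Consequently, the entire modified free-space matrix is identical for $(P,Q_t)$ and $(P,Q_{t'})$, and \Cref{lem:dfModMatEquiv} yields the equivalence.
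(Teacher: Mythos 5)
Your overall route is the same as the paper's: reduce via \Cref{lem:dfModMatEquiv} to showing that the modified free-space matrices of $(P,Q_t)$ and $(P,Q_{t'})$ coincide, observe that the $\delta$-signature indices are translation-invariant, and check the entry types (a)--(f), with (f) handled through \Cref{lem:wip}. The gap is in your key step for (f): from $M(P,Q_t)=M(P,Q_{t'})$ you conclude that \enquote{no such transition is crossed between $t$ and $t'$}, i.e.\ that the matrix stays constant on the whole segment of translations between $t$ and $t'$. That implication is false in general: for a fixed pair $(a,b)$ the set of translations $s$ with $|P(a)-Q_s(b)|\le\delta$ is a closed interval of length $2\delta$, and two translations with equal matrices may lie on opposite sides of such an interval (the entry is $0$ at both endpoints but $1$ in between). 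In that situation the one-sided comparisons to which you reduce the containment $\im(\pre{P})\subset B(Q_s[1,w+1],\delta)$, e.g.\ $\min \im(\pre{P})\ge \min Q_s[1,w+1]-\delta$, can genuinely flip between $t$ and $t'$ even though the corresponding $0/1$ entry does not, so the claim that $\wpre$ is unchanged does not follow as written. The same issue touches your treatment of (d)/(e), since conditions (i)--(iii) of \Cref{l:startAndEnd} also involve such containments and not only the deadlock predicate.

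The gap is repairable, and it is instructive how the paper avoids it: there, \Cref{lem:wip} is used to argue that $\wpre$ is a function of the matrix values at $t$ alone, so nothing about intermediate translations is needed --- existence is governed by the preliminary assignments and deadlocks, which are read off from entries at extreme-point indices, and because $\pre{Q}$ lies in a $2\delta$-range, $B(Q_t[1,w+1],\delta)$ is exactly the union of the two $\delta$-balls around its minimum and maximum vertices, so the containment is again a Boolean combination of entries of $M$. Alternatively, your sweep-style argument can be patched: if $M(P,Q_t)=M(P,Q_{t'})$ has at least one $1$-entry, then $t$ and $t'$ lie in a common interval of length $2\delta$, hence $|t-t'|\le 2\delta$ and no length-$2\delta$ interval of any other entry can lie strictly between them, so indeed no transition is crossed; if instead the matrix is identically $0$, then in particular $|P(1)-Q_t(1)|>\delta$ and $|P(1)-Q_{t'}(1)|>\delta$, so both Fréchet distances exceed $\delta$ and the equivalence holds trivially. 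Without one of these two additions, the step as stated fails.
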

\begin{proof}
    By \Cref{lem:dfModMatEquiv}, it suffices to show that if $M(P,Q_t)=M(P,Q_{t'})$ then the modified free-space matrices of $P$ and $Q_t$, and $P$ and $Q_{t'}$ agree. Observe that these modified free-space matrices do agree if the following conditions hold
    \begin{compactenum}
        \item $\forall(i,j)\in[n]\times[n]:M(P,Q_t)_{i,j}= M(P,Q_{t'}){i,j}$, \label{equivTrans:item1}
        \item the set of indices
        defining the $\delta$-signature vertices of $Q_t$ and $Q_{t'}$ agree, \label{equivTrans:item2}
        \item the \ePQ{\pre{P}}{\pre{Q_t}} is the \ePQ{\pre{P}}{\pre{Q_{t'}}}, \label{equivTrans:item3}
        \item the \ePQ{\pre{Q_t}}{\pre{P}} is the \ePQ{\pre{Q_{t'}}}{\pre{P}}, \label{equivTrans:item4}
        \item the \ePQ{\suf{P}}{\suf{Q_t}} is the \ePQ{\suf{P}}{\suf{Q_{t'}}}, and
        \item the \ePQ{\suf{Q_t}}{\suf{P}} is the \ePQ{\suf{Q_{t'}}}{\suf{P}}.\label{equivTrans:item6}
    \end{compactenum}
    Observe that \Cref{equivTrans:item1} holds as $M(P,Q_t)=M(P,Q_{t'})$. Further \Cref{equivTrans:item2} holds as by definition the extended $\delta$-signature is translation-invariant. \Cref{equivTrans:item3} holds as by \Cref{lem:wip} the \ePQ{\pre{P}}{\pre{Q_t}} only depends on $M(\pre{P},\pre{Q_t})$ and hence it agrees with the \ePQ{\pre{P}}{\pre{Q_{t'}}} as $M(\pre{P},\pre{Q_t})=M(\pre{P},\pre{Q_{t'}})$ as they are submatrices of $M(P,Q_t)$ and $M(P,Q_{t'})$ respectively. Similarly \Cref{equivTrans:item4} to \Cref{equivTrans:item6} hold implying the claim.
\end{proof}

\begin{observation}\label{obs:arrangement}
    Let $P$ and $Q$ be two time series of complexity $n$. There are at most $\Oh(n^2)$ many values $t$ such that there are values $i, j\in[n]$ such that $|P(i)-Q_t(j)|=\delta$. These values are precisely the interval boundaries of intervals of the form $\left[P(i)-Q(j)-\delta,P(i)-Q(j)+\delta\right]$.
\end{observation}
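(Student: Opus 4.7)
The plan is a direct unfolding of the definition of a translation. First, I would rewrite the equation $|P(i) - Q_t(j)| = \delta$ in terms of $t$. Since by definition $Q_t(j) = Q(j) + t$, this becomes $|P(i) - Q(j) - t| = \delta$, which has exactly the two solutions
\[
t = P(i) - Q(j) - \delta \quad \text{and} \quad t = P(i) - Q(j) + \delta.
\]
Hence each pair $(i,j) \in [n] \times [n]$ contributes at most two values of $t$ to the set in question.

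Next I would bound the total count. Ranging over all $n^2$ pairs $(i,j)$ yields at most $2n^2 = \Oh(n^2)$ candidate $t$-values (some of which may coincide, but this only decreases the total), establishing the first part of the statement.

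For the second part of the statement, I would observe that the inequality $|P(i) - Q_t(j)| \leq \delta$ is equivalent to $t \in [P(i) - Q(j) - \delta,\, P(i) - Q(j) + \delta]$, so the $t$-values at which $|P(i) - Q_t(j)| = \delta$ holds are precisely the endpoints of these intervals, matching the statement.

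There is essentially no technical obstacle here; the observation is a one-line algebraic manipulation. Its role in the paper is structural: combined with \Cref{lem:pointpointtranslationevents}, it implies that only $\Oh(n^2)$ translation values need to be considered when sweeping over $t$, and these values can be enumerated explicitly as interval endpoints — which is exactly the input required for applying the offline dynamic grid reachability data structure of \Cref{thm:offline_dyn_grid_reachability} in the subsequent sections.
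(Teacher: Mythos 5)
Your proof is correct and matches the paper's intent: the paper states this as an unproven observation precisely because it follows from the one-line algebraic unfolding $|P(i)-Q(j)-t|=\delta \iff t = P(i)-Q(j)\pm\delta$ that you give, together with the count of $2n^2$ pairs. Nothing further is needed.
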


We briefly discuss how to deal with degenerate cases, that is, cases where it is not true that for every $t\in\RR$ exactly two values ${i\neq j\in[n]}$ exist such that $|P(i)-Q_t(j)|=\delta$, in \Cref{app:symbpert}.

\begin{restatable}[translation representatives]{corollary}{CompuTranslationArrangement}\label{cor:CompuTranslationArrangement}
%\begin{corollary}\label{cor:CompuTranslationArrangement}
    There exist a sorted set $\mathfrak{T}\subset\RR$ containing $\Oh(n^2)$ points and
    %of size $\Oh(n^2)$ 
    computable in $\Oh(n^2\log n)$ time with the following properties.
    \begin{itemize}
        \item It holds that $d_F^T(P,Q)\leq\delta$ if and only if $\exists t\in\mathfrak{T}$ such that $d_F(P,Q_t)\leq\delta$. 
        \item For two consecutive $t,t'$ in $\mathfrak{T}$, there exist only one pair of indices $k, l\in [n]$ such that $|P(k)-Q_t(l)|\leq \delta$ and $|P(k)-Q_{t'}(l)|>\delta$, or the other way round. Further, for the set of all pairs of consecutive $t, t'\in \mathfrak{T}$, those indices can be computed in $\Oh(n^2 \log n)$ time. 
    \end{itemize}
%\end{corollary}
\end{restatable}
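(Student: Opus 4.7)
The plan is to take $\mathfrak{T}$ to be a set of representatives, one from each maximal open interval on which the matrix $M(P,Q_t)$ is constant. By \Cref{obs:arrangement}, $M(P,Q_t)$ can change only when some quantity $|P(i)-Q_t(j)|$ crosses $\delta$, which happens precisely at the $2n^2$ event values of the form $P(i)-Q(j)\pm\delta$ for $(i,j)\in[n]\times[n]$. Sorting these events as $e_1<e_2<\cdots<e_N$ with $N\in\Oh(n^2)$ partitions $\RR$ into $N+1$ open intervals on each of which $M(P,Q_t)$ is constant.

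Concretely, the algorithm enumerates the $2n^2$ event values in $\Oh(n^2)$ time, tagging each event with its originating pair $(i,j)$ and its sign (activation or deactivation of that entry), then sorts them in $\Oh(n^2\log n)$ time. From each open interval of the resulting arrangement one representative is selected (for instance the midpoint for bounded intervals and arbitrary points in the two unbounded ones), and these are collected into the sorted set $\mathfrak{T}$, yielding $|\mathfrak{T}|\in\Oh(n^2)$.

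The two properties are then verified as follows. For the first, suppose $d_F^T(P,Q)\leq\delta$ and let $t^*$ realise this bound; then $t^*$ lies in some interval of the arrangement, and if $t\in\mathfrak{T}$ is its representative, we have $M(P,Q_t)=M(P,Q_{t^*})$, so \Cref{lem:pointpointtranslationevents} gives $d_F(P,Q_t)\leq\delta$, and the converse is trivial. For the second, under the general-position assumption that exactly one pair $(i,j)$ satisfies $|P(i)-Q_{e_k}(j)|=\delta$ at each event, crossing $e_k$ flips exactly the one entry $M_{i,j}$, so the matrices of the representatives of two consecutive intervals differ in exactly that entry. Since each event already carries the label of its pair, the indices $(k,l)$ associated with each consecutive pair of $\mathfrak{T}$ are read off directly from the sorted event list in $\Oh(n^2)$ time in total, comfortably within the $\Oh(n^2\log n)$ budget.

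The main obstacle is handling degeneracies, namely events that coincide at the same translation value, which would cause several entries of $M$ to flip simultaneously and violate the one-entry-change condition between consecutive representatives. This is resolved by the symbolic perturbation scheme alluded to in the preliminaries and deferred to \Cref{app:symbpert}: ties among event values are broken by an infinitesimal perturbation of the vertex coordinates of $P$ and $Q$, ensuring distinct events and hence single-entry changes, while preserving the answer to the decision problem for the given $\delta$.
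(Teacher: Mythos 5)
Your proposal is correct and takes essentially the same route as the paper: sort the $\Oh(n^2)$ event values $P(i)-Q(j)\pm\delta$ from \Cref{obs:arrangement}, take one representative per cell of the resulting arrangement together with the originating index pair of each event, and conclude the equivalence via \Cref{lem:pointpointtranslationevents}, with degeneracies deferred to symbolic perturbation. The only slight deviation is that you describe the perturbation as acting on the vertex coordinates of $P$ and $Q$, whereas \Cref{app:symbpert} perturbs the interval boundaries symbolically while keeping the curves (and hence the decision answers) fixed, which is the safer formulation; this does not affect the substance of your argument.
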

We call the set $\mathfrak{T}$ of \cref{cor:CompuTranslationArrangement} $\Oh(n^2)$ points the \emph{translation representatives}.

\begin{proof}
    \Cref{obs:arrangement} implies that there is an arrangement of $\RR$ such that only one inter-point distance namely that of $k$ and $l$ is such that the arrangement boundary is induced by $|P(k)-Q_t(l)|=\delta$. 
    As this arrangement and a point from each interval in the arrangement can be computed in $\Oh(n^2\log n)$ the claim follows by \Cref{lem:pointpointtranslationevents}.
\end{proof}

\subsection{Dealing with degenerate cases using symbolic perturbation}\label{app:symbpert}

If the instance under consideration is degenerate, then in the set of intervals of the form $\left[P(i)-Q(j)-\delta,P(i)-Q(j)+\delta\right]$ there are at least two whose interval boundaries coincide in at least one point, by \Cref{obs:arrangement}. Let $\{[a_1,b_1],\ldots\}$ be this set of closed intervals. We introduce symbolic perturbation on every interval boundary $a_i$ and $b_i$ resulting in $\hat{a}_i$ and $\hat{b_i}$ and a total order such that, in particular, any right boundary $\hat{b}_i$ of such an interval is \emph{strictly} larger than a left boundary $\hat{a}_i$ of another interval if and only if $b_i\geq a_i$. Then there are $\Oh(n^2)$ (symbolically perturbed) values in $\RR$ such that between two successive values the membership in the symbolically perturbed intervals differs by exactly one. Hence, these symbolically perturbed values are such that for any $\hat{t}\in \RR$ exactly two values $i\neq j\in[n]$ exist such that $|P(i)-Q_{\hat{t}}(j)|=\delta$. 
Furthermore, for any value $t\in\RR$ with some interval membership in intervals $\{[a_i,b_i],\ldots\}$ there is a symbolically perturbed version of $\hat{t}$ that has the same membership in perturbed versions of the intervals $\{[\hat{a}_i,\hat{b}_i],\ldots\}$.

\subsection{Prefix and Suffix under Translation}\label{sec:PrefTranslation} 
In \cref{sec:prefSuff}, we showed that the preliminary assignment can be used to determine the \ePQ{\pre{P}}{\pre{Q}}. In particular, the \ePQ{\pre{P}}{\pre{Q}} can be computed in $\mathcal{O}(\log n)$ time, if we know whether there does exists a deadlock or not. Therefore, we define the translated preliminary assignment and show how to keep track of whether a deadlock exists while translating the time series $Q$ with the values in $\mathfrak{T}$.

\begin{definition}[translated preliminary assignment]
    Let $P$ and $Q$ be time series and let $t\in\RR$. Define the \emph{translated preliminary assignment} $\X_t$ to be the preliminary assignment of $\pre{P}$ on $\pre{Q_t}$. Let similarly $\Y_t$ be the preliminary assignment of $\pre{Q_t}$ on $\pre{P}$. We say that $\X_t(k)$ and $\Y_t(l)$ form a \emph{deadlock}, if $l<{\X_t(k)}$ and $k<{\Y_t(l)}$. 
\end{definition}

Given two time series $P$ and $Q$ together with extreme point sequences $a_1,\ldots,a_p$ and $b_1,\ldots,b_q$ of $\pre{P}$ and $\pre{Q}$, we show that we can compute the set $\mathcal{T}\subset\mathfrak{T}$ of translations for which $X_t$ and $Y_t$ do not form a deadlock in total time $\Oh(n^2\log n)$. 

\begin{observation}
    For every $t\in \mathbb{R}$, the extreme point sequences of $Q_t$ and $Q$ coincide.
\end{observation}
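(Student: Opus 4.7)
The plan is to observe that both defining properties of an extreme point sequence (range-preservation and the extreme-point condition), as well as the preconditions on the underlying time series (being contained in a $\delta$-ball and ending at a global extremum), depend only on relative differences and on $\min$/$\max$ comparisons between vertex values. Since translating every vertex by the same constant $t$ preserves all pairwise differences and all order relations, each of these properties transfers between $Q$ and $Q_t$ verbatim.

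More concretely, I would first note that for any indices $s < s'$ and any $t \in \mathbb{R}$, we have $Q_t(s) - Q_t(s') = Q(s) - Q(s')$, so $Q$ is contained in a ball of radius $\delta$ iff $Q_t$ is, and $Q(n)$ realizes $\min Q$ (resp.\ $\max Q$) iff $Q_t(n)$ realizes $\min Q_t$ (resp.\ $\max Q_t$). Hence the extreme point sequence is well-defined for $Q_t$ whenever it is for $Q$, and over the same domain. Next, fix any candidate sequence $1 = a_1 < a_2 < \cdots < a_p = n$ of indices. The extreme-point condition $Q(a_k) = \min(Q[1,a_k])$ or $Q(a_k) = \max(Q[1,a_k])$ is invariant under the uniform shift by $t$, since $\min$ and $\max$ commute with translation: $\min(Q_t[1,a_k]) = \min(Q[1,a_k]) + t$ and likewise for $\max$.

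For the range-preserving condition, I would argue that $\im(Q_t[1, a_{k+1}]) = \im(Q[1, a_{k+1}]) + t$ as sets, and similarly $\overline{Q_t(a_k)\,Q_t(a_{k+1})} = \overline{Q(a_k)\,Q(a_{k+1})} + t$, so the identity $\im(Q[1, a_{k+1}]) = \overline{Q(a_k)\,Q(a_{k+1})}$ holds for $Q$ if and only if it holds for $Q_t$. Combining these two invariances, the sequence $a_1, \ldots, a_p$ is an extreme point sequence of $Q$ exactly when it is an extreme point sequence of $Q_t$.

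There is no real obstacle here; this is essentially a sanity check that the purely order-theoretic notion of an extreme point sequence is translation-invariant. The only point worth a short sentence is noting that the precondition (being in a $\delta$-ball with $Q(n)$ a global extremum) is itself preserved, so we are indeed comparing the same combinatorial object on both sides.
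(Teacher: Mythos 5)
Your argument is correct and is exactly the translation-invariance reasoning the paper implicitly relies on: the observation is stated without proof there, since adding the constant $t$ preserves all pairwise differences, the $\min$/$\max$ comparisons, and the range condition, so the same index sequence qualifies for $Q$ and $Q_t$. Nothing is missing; your write-up simply spells out what the paper leaves implicit.
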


\begin{observation}
    $\mathcal{T}\subset \{t\in\RR|X_t(1) = 1\} = \{t\in\RR|Y_t(1) = 1\}$.
\end{observation}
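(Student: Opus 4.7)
The plan is to unwind the definition of the translated preliminary assignment together with the definition of a deadlock, exploiting that by convention the extreme point sequences of $\pre{P}$ and $\pre{Q_t}$ begin with $a_1=1$ and $b_1=1$. Under this convention, $\X_t(1)$ is by definition the smallest index $k$ with $|P(a_1)-Q_t(b_k)|\leq\delta$, so $\X_t(1)=1$ is equivalent to $|P(1)-Q_t(1)|\leq\delta$. Applying the same reasoning with the roles of $\pre{P}$ and $\pre{Q_t}$ exchanged gives that $\Y_t(1)=1$ is equivalent to the very same inequality. This symmetric reformulation of the two conditions immediately settles the equality $\{t\in\RR\mid \X_t(1)=1\}=\{t\in\RR\mid \Y_t(1)=1\}$, disposing of one half of the statement.

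For the inclusion $\mathcal{T}\subset\{t\in\RR\mid \X_t(1)=1\}$ I would argue by contrapositive. Suppose $\X_t(1)\neq 1$. Then either $\X_t(1)$ is an integer strictly greater than $1$ or $\X_t(1)=\infty$, and in both cases $1<\X_t(1)$. By the equivalence obtained above, also $1<\Y_t(1)$. Instantiating the definition of a deadlock from \cref{l:startAndEnd} with $k=l=1$, the two required inequalities $l<\X_t(k)$ and $k<\Y_t(l)$ reduce exactly to $1<\X_t(1)$ and $1<\Y_t(1)$, which both hold. Hence $\X_t(1)$ and $\Y_t(1)$ form a deadlock, so $t\notin\mathcal{T}$, which is the desired contrapositive.

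I do not anticipate any real obstacle: the statement is essentially a sanity check, and the whole argument hinges on the observation that the pair $(k,l)=(1,1)$ in the deadlock definition is already forced whenever neither $\X_t(1)$ nor $\Y_t(1)$ equals $1$. The only tiny subtlety is gracefully handling the convention $\X_t(1)=\infty$, but since $\infty>1$ this slots into the same argument without any case distinction.
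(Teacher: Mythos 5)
Your proof is correct and is exactly the routine unwinding the paper intends for this (unproved) observation: since the extreme point sequences start with $a_1=b_1=1$, each of $X_t(1)=1$ and $Y_t(1)=1$ is equivalent to $|P(1)-Q_t(1)|\leq\delta$, and otherwise the pair $(k,l)=(1,1)$ forms a deadlock, so $t\notin\mathcal{T}$. One trivial slip: the deadlock condition $l<X_t(k)$, $k<Y_t(l)$ comes from the definition of the (translated) preliminary assignment, not from \cref{l:startAndEnd}.
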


The next observation and next lemma show that the preliminary assignments of two consecutive translations of the translation representatives differ only slightly.

\begin{observation}\label{obs:oneStepTrans}
    Let two consecutive translation representatives $t,t'$ be given. Then, $X_t$ and $X_{t'}$ (resp. $Y_t$ and $Y_{t'}$) agree everywhere except at the index that participates in the point-point distance inducing the arrangement boundary separating $t$ and~$t'$. 
\end{observation}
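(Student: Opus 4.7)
The plan is to trace exactly which entries of the preliminary assignment tables can flip when moving between two consecutive translation representatives, leveraging the fact (from \cref{cor:CompuTranslationArrangement}) that only a single pairwise distance crosses the threshold~$\delta$.

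First, I will invoke \cref{cor:CompuTranslationArrangement} to fix the unique pair of indices $(k,l)\in[n]\times[n]$ such that $\mathbbm{1}_{|P(k)-Q_t(l)|\leq\delta} \neq \mathbbm{1}_{|P(k)-Q_{t'}(l)|\leq\delta}$, and for every other pair of vertex indices the indicator value is preserved when going from $t$ to $t'$. This single-flip property is the only input from the translation arrangement that I will need.

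Next, I will unfold the definition of the preliminary assignment: $\X_t(i)$ is the smallest index $j$ such that $|P(a_i)-Q_t(b_j)|\leq \delta$, so it depends only on the row of indicators $(\mathbbm{1}_{|P(a_i)-Q_t(b_j)|\leq\delta})_{j=1,\dots,\cpQ}$ associated with the extreme point $P(a_i)$. For any $i$ with $a_i\neq k$, none of the entries in this row change between $t$ and $t'$ by the preceding paragraph, hence $\X_t(i)=\X_{t'}(i)$. Consequently the only index at which $\X_t$ and $\X_{t'}$ can differ is the unique index $i^*$ with $a_{i^*}=k$ (if such an index appears in the extreme point sequence of $\pre{P}$), which is exactly the index participating in the vertex-vertex distance realizing the arrangement boundary between $t$ and $t'$. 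The proof for $\Y_t$ versus $\Y_{t'}$ is completely symmetric: $\Y_t(j)$ depends only on the column of indicators indexed by $b_j$, so only the index $j^*$ with $b_{j^*}=l$ can witness a change.

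I do not expect any real obstacle; the observation is essentially a direct consequence of the definitions combined with the single-flip property. The only subtlety worth spelling out is that $k$ (resp.~$l$) need not actually appear in the extreme point sequence of $\pre{P}$ (resp.~$\pre{Q}$), in which case $\X_t$ and $\X_{t'}$ (resp.\ $\Y_t$ and $\Y_{t'}$) agree everywhere and the statement holds vacuously.
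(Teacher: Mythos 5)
Your argument is correct and is exactly the reasoning the paper leaves implicit for this observation: by \cref{cor:CompuTranslationArrangement} only the single indicator $\mathbbm{1}_{|P(k)-Q_t(l)|\leq\delta}$ flips between consecutive representatives, and since each value $\X_t(i)$ (resp.\ $\Y_t(j)$) is determined solely by the indicators in the row of $P(a_i)$ (resp.\ column of $Q(b_j)$), only the index meeting the flipping pair can change. Your added remark that the flipping pair may not appear among the extreme points, making the claim vacuous, is a fine extra precaution and does not deviate from the paper's approach.
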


\begin{restatable}{lemma}{updateStepTrans}\label{lem:updateStepTrans}
%\begin{lemma}\label{lem:updateStepTrans}
    Let two consecutive translation representatives $t,t'$ be given. 
    Let $k$ be the index of $P$ and $l$ the index of $Q$ participating in the point-point distance inducing the arrangement boundary separating $t$ and $t'$. 
    Then, it holds that
    \begin{compactenum}
        \item $|Y_t(l) - Y_{t'}(l)| \leq 2$,
        \item $Y_t(l) = \infty$ and $Y_{t'}(l) \in \{p-1,p\}$, or
        \item $Y_{t'}(l) = \infty$ and $Y_t(l) \in \{p-1,p\}$.
    \end{compactenum}
    Similarly, it holds that 
    \begin{compactenum}
        \item $|X_t(k) - X_{t'}(k)| \leq 2$,
        \item $X_t(k) = \infty$ and $X_{t'}(k) \in \{q-1,q\}$, or
        \item $X_{t'}(k) = \infty$ and $X_t(k) \in \{q-1,q\}$.
    \end{compactenum}
%\end{lemma}
\end{restatable}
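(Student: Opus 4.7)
The plan is to reduce the claim to a single structural fact about extreme point sequences of range-preserved time series. By \Cref{obs:oneStepTrans} (derived from \Cref{cor:CompuTranslationArrangement}), as $t$ passes to $t'$ exactly one vertex-vertex predicate $|P(k')-Q_\cdot(l')|\leq\delta$ flips. Since $Y_\cdot(l)$ only depends on distances from $Q_\cdot(b_l)$ to the extreme points $\{P(a_{k''})\}_{k''}$ of $\pre{P}$, it changes only if $l'$ equals the vertex index $b_l$ of the $l$-th extreme point of $\pre{Q}$ and $k'$ is the vertex index $a_{k}$ of the $k$-th extreme point of $\pre{P}$; otherwise $Y_t=Y_{t'}$ and the claim holds trivially with difference $0$. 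So we may assume that the changing predicate is precisely $|P(a_k)-Q_\cdot(b_l)|\leq\delta$.

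Next I would carry out a case analysis on $Y_t(l)$ versus $k$ together with the flip direction. If $Y_t(l)<k$, the flipped predicate does not affect the minimum, so $Y_{t'}(l)=Y_t(l)$. If the flip is FALSE$\to$TRUE and $Y_t(l)>k$ (or $Y_t(l)=\infty$), then $Y_{t'}(l)=k$ since the predicates at indices below $k$ are unchanged and all FALSE. The only remaining case is TRUE$\to$FALSE with $Y_t(l)=k$, in which $Y_{t'}(l)$ is the smallest extreme-point index strictly greater than $k$ with TRUE predicate (using that predicates at all other indices are preserved).

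The heart of the argument is the following structural claim: \emph{if $P(a_k)\in B(Q_\tau(b_l),\delta)$ and $k\leq p-2$, then $P(a_{k+1})$ or $P(a_{k+2})$ also lies in $B(Q_\tau(b_l),\delta)$.} To prove it, observe that by the range-preserving property $\im(P[1,a_{k+2}])\subset\overline{P(a_{k+1})P(a_{k+2})}$, so $P(a_k)$ lies on the segment between $P(a_{k+1})$ and $P(a_{k+2})$ on the real line. Since $\pre{P}$ is contained in a $2\delta$-range, $|P(a_{k+1})-P(a_{k+2})|\leq 2\delta$; hence the $\delta$-neighborhoods of $P(a_{k+1})$ and $P(a_{k+2})$ together cover the $\delta$-neighborhood of $P(a_k)$, and $Q_\tau(b_l)\in B(P(a_k),\delta)$ must lie in at least one of them.

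Applying the structural claim with $\tau\in\{t,t'\}$ yields $|Y_t(l)-Y_{t'}(l)|\leq 2$ whenever $k\leq p-2$. When $k\in\{p-1,p\}$ the forward extreme points $a_{k+1}$ and $a_{k+2}$ may not exist, the structural claim no longer applies, and $Y_\tau(l)$ may take the value $\infty$; the remaining satisfying index is then exactly $k$, which lies in $\{p-1,p\}$, giving cases 2 and 3. The companion statement for $X_t(k)$ is obtained by interchanging the roles of $P$ and $Q$ and replacing $p$ by $q$. The main obstacle is identifying and proving the structural claim; once it is in hand, the rest is a mechanical case analysis.
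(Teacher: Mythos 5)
Your proof is correct, but it reaches the bound by a somewhat different mechanism than the paper. The paper's proof performs no case analysis on the flip at all: it encodes $Y_t(l)$ as the level $\min\{k'\mid Q(b_l)+t\in F_{k'}\}$ in the nested family $F_{k'}=\bigcup_{i'\leq k'}B(P(a_{i'}),\delta)$ (together with $F_\infty=\RR$), observes that each $F_{k'}$ is a contiguous interval and that $F_i$ and $F_j$ share no boundary when $i\leq j-2$, and concludes that between consecutive translation representatives the point $Q(b_l)+t$ crosses at most one boundary of the induced arrangement, so its level changes by at most two; the $\infty$-cases fall out of the same statement because only the boundaries of $F_{p-1}$ and $F_p$ can bound the complement of $F_p$. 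Your key lemma --- $B(P(a_k),\delta)\subseteq B(P(a_{k+1}),\delta)\cup B(P(a_{k+2}),\delta)$ for $k\leq p-2$, proved from range-preservation and the fact that $\pre{P}$ lies in a $2\delta$-range --- is essentially the local form of the paper's no-shared-boundary property (iterating it shows $F_{k+2}=B(P(a_{k+1}),\delta)\cup B(P(a_{k+2}),\delta)$), and your explicit FALSE$\to$TRUE / TRUE$\to$FALSE case analysis, using that all other vertex-vertex predicates are preserved between consecutive representatives, replaces the paper's geometric neighboring-cells argument. Your route is more elementary and makes the $\{p-1,p\}$ cases explicit (they arise exactly where the covering claim stops applying, which also rules out both values being finite and far apart), at the cost of some bookkeeping; the paper's nested-family formulation is terser and transfers verbatim to the scaling setting (\Cref{lem:updateStepScal}), where the identical argument is reused with $sQ(b_l)$ in place of $Q(b_l)+t$.
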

\begin{proof}
    We define the following nested family of closed sets \(\mathcal\{{F}_i=\bigcup_{i'\leq i}B(P(a_i),\delta)\mid 1\leq i\leq p\}\cup\{F_\infty=\RR\}\). 
    Let $\mathcal{A}$ be the arrangement of $\RR$ defined by these sets.
    By definition of $P$ and $a_1,\ldots,a_p$, each $F_i$ is a contiguous interval in $\RR$ and for $i\leq j-2$ it holds that $F_i$ and $F_j$ do not share a boundary. 
    Further, we have that 
    $Y_t(l) = \min\{k' \in [p]\cup\{\infty\}\mid  Q(b_l)+t\in F_{k'}\}$.  
    If $t,t'$ are two consecutive translation representatives, then $Q(b_l)+t$ and $Q(b_l)+t'$ must lie in neighboring intervals in $\mathcal{A}$ and thus their level in the set family can differ by at most two implying the claim for $Y_t$ and $Y_{t'}$. The claim for $X_t$ and $X_{t'}$ follows similarly.
\end{proof}

\begin{restatable}{lemma}{PrefixTranslation}\label{lem:PrefixTranslation}
%\begin{theorem}\label{lem:PrefixTranslation}
    There is an algorithm that correctly computes the set $\mathcal{T}$ of translations $t\in \mathfrak{T}$ for which $X_t$ and $Y_t$ do not form a deadlock in $\Oh(n^2\log n)$. This is the set of translations in $\mathfrak{T}$ for which the preliminary assignments of $\pre{P}$ and $\pre{Q_t}$ do not form a deadlock. Similarly, we can compute the set of translations in $\mathfrak{T}$ for which $\suf{P}$ and $\suf{Q_t}$ do not form a deadlock in $\Oh(n^2\log n)$ time.
%\end{theorem}
\end{restatable}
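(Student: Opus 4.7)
The plan is to sweep through the sorted translation representatives in $\mathfrak{T}$, incrementally maintaining the preliminary assignments $X_t$ and $Y_t$ together with a counter that tracks deadlocks. The key observation is that $X_t(1)=1 \iff Y_t(1)=1 \iff |P(1)-Q_t(1)|\leq\delta$; moreover, whenever $X_t(1)>1$ we also have $Y_t(1)>1$, so the pair $(k,l)=(1,1)$ witnesses a deadlock. For the remaining $t$ (those with $X_t(1)=1$), part~(iv) of \Cref{lem:preliminaryAssignmentStructure} implies that a deadlock exists if and only if some $k$ satisfies $X_t(k)>1$ and $Y_t(X_t(k)-1)>k$. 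Hence defining
\[c_X(t) := \big|\{k \in [\cpP] : X_t(k)>1 \text{ and } Y_t(X_t(k)-1)>k\}\big|,\]
we obtain $\mathcal{T} = \{t \in \mathfrak{T} : X_t(1)=1 \text{ and } c_X(t)=0\}$.

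For the first representative $t_1 \in \mathfrak{T}$, we would compute $X_{t_1}$ and $Y_{t_1}$ via binary search over the extreme point sequences of $\pre{P}$ and $\pre{Q}$ in $\Oh(n\log n)$ total time, and compute the initial value of $c_X$ by a single linear scan. The crux is the incremental update. By \Cref{obs:oneStepTrans} and \Cref{lem:updateStepTrans}, the transition from $t$ to its successor $t'$ in $\mathfrak{T}$ modifies $X$ at exactly one index $k_\mathrm{upd}$ and $Y$ at exactly one index $l_\mathrm{upd}$, with each value differing by at most two (or being replaced by $\infty$, in which case the other end lies in $\{\cpQ-1,\cpQ\}$ or $\{\cpP-1,\cpP\}$ respectively). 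For the $X$-update from $v$ to $v'$, only the contribution of $k_\mathrm{upd}$ itself can change, and the new indicator $\mathbbm{1}[v'>1 \wedge Y(v'-1)>k_\mathrm{upd}]$ is computed in $\Oh(1)$ time with two lookups into $Y$. For the $Y$-update from $v$ to $v'$, the contribution of an index $k$ is affected only if $X(k)=l_\mathrm{upd}+1$, and only changes when $k$ lies in the integer interval $[\min(v,v'),\max(v,v'))$. Since $|v-v'|\leq 2$ (and the $\infty$-case analogously yields $\Oh(1)$ integers in the relevant range), we enumerate those $\Oh(1)$ candidate $k$'s and check $X(k)$ in $\Oh(1)$ time each.

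Each of the $\Oh(n^2)$ update steps thus costs $\Oh(1)$, so together with the $\Oh(n^2 \log n)$ preprocessing from \Cref{cor:CompuTranslationArrangement}, the total running time is $\Oh(n^2 \log n)$. Note that the update rules keep $c_X$ correct as a purely numerical invariant regardless of whether $X_t(1)=1$, so even after excursions through representatives with $X_t(1)>1$ the counter remains accurate. The suffix case follows by applying the same algorithm to the reversed time series $\widehat{P}$ and $\widehat{Q}$ and invoking \Cref{obs:ConnectionSufPre} to translate the answer back to suffixes. The most delicate step will be verifying correctness of the counter update in the boundary cases where a preliminary assignment transitions to or from $\infty$, which requires a short but careful case analysis based on the bounds of \Cref{lem:updateStepTrans}.
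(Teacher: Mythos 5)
Your proposal is correct and follows essentially the same route as the paper: sweep over the translation representatives of \cref{cor:CompuTranslationArrangement}, maintain the preliminary assignments and a deadlock counter, and use \cref{obs:oneStepTrans}, \cref{lem:updateStepTrans} and \cref{lem:preliminaryAssignmentStructure}~(iv) to argue that each step changes the deadlock status of only $\Oh(1)$ indices, each checkable in $\Oh(1)$ time, with the suffix handled via \cref{obs:ConnectionSufPre}. The only (harmless) deviation is that you sweep all of $\mathfrak{T}$ and filter by the condition $X_t(1)=1$ afterwards, whereas the paper restricts the sweep to the contiguous interval $\{t\mid X_t(1)=1\}$ up front.
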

\begin{proof}
    We begin by computing the translation representatives $\mathfrak{T}$ of size $\Oh(n^2)$ and store it. This can be done in $\Oh(n^2\log n)$ time by~\cref{cor:CompuTranslationArrangement}. Next compute from the contiguous interval $[a,b]=\{t\in\RR|X_t(1) = 1\}$ in $\Oh(n^2)$ time the set $\mathcal{T}'=\mathfrak{T}\cap[a, b]$. 
    We define $\mathcal{D}_t=\{k\mid\exists l:\text{$X_t(k)$ and $Y_t(l)$ form a deadlock}\}$. 

    Initially, we compute $X_{t}$ and $Y_{t}$ for the smallest value $t\in \mathcal{T}'$. In addition, we store for every $1\leq k\leq \cpP$, whether $X_{t}(k)$ forms a deadlock with some $Y_{t}(l)$ or not, that is we decide whether $k\in\mathcal{D}_t$. Note that by \Cref{lem:preliminaryAssignmentStructure} (iv), it holds that $k\in \mathcal{D}_t$ if and only if $X_{t}(k)$ and $Y_{t}(X_{t}(k)-1)$ form a deadlock. Hence, this can be computed in $\mathcal{O}(n)$ time for all $k$. Further, we store the number $d_t = |\mathcal{D}_t|$. Then, $X_{t}$ and $Y_{t}$ do not form a deadlock if and only $d_t=0$.
    
    Now let the above data be given correctly for some translation $t\in\mathcal{T}'$ and let $t'$ be the next translation in $\mathcal{T}'$. Let $k\leq p$ and $l\leq q$ be the two indices that induced the arrangement boundary between $t$ and $t'$. We show how to update above data correctly.
    By \Cref{obs:oneStepTrans} and \Cref{lem:updateStepTrans}, for any $\hat{k}\in \mathcal{D}_t$ either $\hat{k} = k$ or $|\hat{k}-Y_t(l)|\leq 2$ or $\hat{k}\in \mathcal{D}_{t'}$. In the first two cases, we can check whether $\hat{k}\in\mathcal{D}_{t'}$ in $\Oh(1)$ time by \Cref{lem:preliminaryAssignmentStructure} (iv). Thus we can compute $d_{t'}$ given $d_t$ in time $\mathcal{O}(1)$.
    Repeating this update $\Oh(n^2)$ times sweeping over all translations in $\mathcal{T}'$ implies the claim. The claim for the suffix follows by~\Cref{obs:ConnectionSufPre}.
\end{proof}

\subsection{Solving the Decision Problem}
%\subsection{Computing the Fréchet Distance under Translation}
We are now ready to prove our result for the decision problem of the Fréchet distance under translation of time series.

\begin{lemma} \label{lem:fut_decider}
    We can decide the Fréchet distance under translation between two time series in time $\Oh(n^{8/3} \log^2 n)$.
\end{lemma}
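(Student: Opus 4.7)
The plan is to reduce the decision problem to offline dynamic grid reachability via the modified free-space matrix. By \Cref{lem:dfModMatEquiv}, $d_F(P, Q_t) \leq \delta$ if and only if the $(n, n)$ entry of the modified free-space matrix $M_\delta$ for $P$ and $Q_t$ is reachable from $(1, 1)$ through $1$-entries via axis-aligned or diagonal monotone steps. By \Cref{cor:CompuTranslationArrangement} it suffices to check reachability only at the $\Oh(n^2)$ translation representatives $\mathfrak{T}$. I therefore plan to sweep over $\mathfrak{T}$ in order, maintain $M_\delta$ through $\Oh(1)$ entry flips per transition, and feed the resulting sequence of $\Oh(n^2)$ updates into the data structure of \Cref{thm:offline_dyn_grid_reachability}.

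For the preprocessing, I first compute $\mathfrak{T}$ in $\Oh(n^2 \log n)$ time via \Cref{cor:CompuTranslationArrangement}. Since the extended $\delta$-signature is translation-invariant, the signature indices $i_2, i_{s_P-1}, j_2, j_{s_Q-1}$ are fixed, and the extreme point sequences of $\pre{P}, \suf{P}, \pre{Q}, \suf{Q}$ can be computed once in $\Oh(n)$ time. Using \Cref{lem:PrefixTranslation} I precompute in $\Oh(n^2 \log n)$ total time, for every $t \in \mathfrak{T}$, whether the preliminary assignments of $\pre{P}$ and $\pre{Q_t}$ (resp. $\suf{P}$ and $\suf{Q_t}$) form a deadlock. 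By \Cref{lem:(i_2j_2)} this determines the entries at $(i_2, j_2)$ and $(i_{s_P-1}, j_{s_Q-1})$ for cases~(d) and~(e), and by \Cref{lem:wip} it determines whether $\wpre, \vpre, \wsuf, \vsuf$ are finite; whenever they are, I compute them in $\Oh(\log n)$ per translation via \Cref{lem:searchprefixend}. Finally, I initialize $M_\delta$ for the smallest translation in $\Oh(n^2)$ time.

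For the sweep, consider a transition from $t$ to the next representative $t'$. \Cref{cor:CompuTranslationArrangement} guarantees that exactly one pair $(k, l)$ has its point-point distance cross $\delta$, contributing at most one entry flip in case~(b). The entries of cases~(d) and~(e) may flip based on the precomputed deadlock indicators, contributing $\Oh(1)$ flips. Each of the four case~(f) indices $\wpre, \vpre, \wsuf, \vsuf$ may change, but each change contributes at most one deactivation of the old entry and one activation of the new entry in case~(f). Hence each transition produces $\Oh(1)$ updates, yielding $U = \Oh(n^2)$ updates over the entire sweep. Applying \Cref{thm:offline_dyn_grid_reachability} then computes after each update whether $(n, n)$ is reachable from $(1, 1)$ in total time $\Oh(n^2 + n^2 \cdot n^{2/3} \log^2 n) = \Oh(n^{8/3} \log^2 n)$. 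Returning \enquote{yes} iff reachability holds at some step decides the Fréchet distance under translation within the same bound.

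The main obstacle will be the careful case analysis needed to verify that the updates fed to the data structure truly capture \emph{all} changes in $M_\delta$ at each transition, and that no other entries silently flip. This is controlled by \Cref{obs:oneStepTrans} and \Cref{lem:updateStepTrans}, which isolate precisely where the preliminary assignments can change between consecutive translations, together with the translation-invariance of the $\delta$-signature, which keeps the structural decomposition into prefix, middle, and suffix stable across the sweep.
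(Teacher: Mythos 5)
Your proposal is correct and follows essentially the same route as the paper's proof: the same sweep over the translation representatives of \cref{cor:CompuTranslationArrangement}, the same use of \cref{lem:PrefixTranslation}, \cref{lem:(i_2j_2)}, and \cref{lem:searchprefixend} to handle the entry types (a)--(f) of the modified free-space matrix with $\Oh(1)$ offline updates per translation, and the same application of \cref{thm:offline_dyn_grid_reachability} yielding $\Oh(n^{8/3}\log^2 n)$. No substantive differences to report.
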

\begin{proof}
The goal is to apply \cref{thm:offline_dyn_grid_reachability} to maintain reachability in the modified free-space matrix defined in \cref{def:mod_free_space_matrix} for all translations in $\mathfrak{T}$ from \cref{cor:CompuTranslationArrangement}.

In a pre-processing step, we first compute the signatures of both input time series and extreme point sequences of $\pre{P}, \pre{Q}, \suf{P}, \suf{Q}$, all in time $\Oh(n)$.
We use \cref{cor:CompuTranslationArrangement} to compute the set $\mathfrak{T}$ of candidate translations that we have to check reachability for, and fix the order in which we check them (e.g., from left to right).
Additionally, we invoke \cref{lem:PrefixTranslation} to compute the set $\mathcal{T} \subset \mathfrak{T}$ for which there are no deadlocks of the prefixes and suffixes.
The above steps take time $\Oh(n^2 \log n)$ each.
It now suffices to show that if we consider the the translations $\mathfrak{T}$ in order, there are only $\Oh(1)$ many updates per translation to the modified free-space matrix of \cref{def:mod_free_space_matrix}, for each type of entry (a) to (f). Furthermore, we need to determine all of these updates in time $\Oh(n^2 \log n)$.

Combining \cref{cor:CompuTranslationArrangement} and the information from the pre-computed signatures, we can pre-compute the $\Oh(1)$ many updates per translation in $\mathfrak{T}$ of type (a),(b), and (c) in time $\Oh(n^2 \log n)$.
For entries of type (d) and (e), we first check whether the translation at hand incurs a deadlock or not (i.e., we check for containment in the pre-computed set $\mathcal{T}$).
If it does not incur a deadlock, then we use \cref{lem:(i_2j_2)} to determine in constant time whether $d_F(\pre{P}, \pre{Q})\leq \delta$ and $d_F(\suf{P}, \suf{Q})\leq \delta$.
For type (f), we use the pre-computed signatures and \cref{lem:searchprefixend} to compute $\wpre, \vpre, \wsuf, \vsuf$ in $\Oh(\log n)$ time.

Note that by the above, each entry type incurs at most $\Oh(1)$ updates per translation, and we therefore have $\Oh(n^2)$ updates in total. All of these updates can be pre-determined, i.e., they are offline.
Consequently, we can apply~\Cref{thm:offline_dyn_grid_reachability} and obtain a running time of $\Oh(n^2 \cdot n^{2/3} \log^2 n) = \Oh(n^{8/3} \log^2 n)$.
\end{proof}

\subsection{Solving the Optimization Problem}\label{app:transoptimization}

Let us first consider a simple but inefficient variant to solve the optimization problem.
To that end, we define the set of all translations that align two vertices:
\[
T_{\text{align}} = \{ P(i) - Q(j) \mid i,j \in\{1, \dotso, n\} \}.
\]
By \cref{lem:pointpointtranslationevents}, for the optimal distance $\delta^*$ we have that there exist $t,t' \in T_{\text{align}}$ such that $|t-t'| = 2 \delta^*$.
Hence, for each pair $t, t' \in T_{\text{align}}$ we can test whether $d_F(P,Q + \frac{t+t'}{2}) \leq \frac{|t-t'|}{2}$, and the $t,t'$ with smallest $\frac{|t-t'|}{2}$ such that the decision query is positive is the Fréchet distance under translation $\delta^*$.
The above algorithm has running time $\Oh(|T_{\text{align}}|^2) = \Oh(n^4)$.
Fortunately, we can use parametric search to significantly reduce the running time and only incur an additional logarithmic factor compared to the running time of the decision problem.
We use a variant of Meggido's parametric search~\cite{Megiddo83} due to Cole~\cite{Cole87}, as was also used in~\cite{DBLP:journals/talg/BringmannKN21}.

For all $t_i \in T_{\text{align}}$, we define the functions $f_i^+(\delta) \coloneqq t_i + \delta$ and $f_i^-(\delta) \coloneqq t_i - \delta$.
These are the functions that we want to sort in our parametric search.
Note that, as argued above, the Fréchet distance under translation is realized for some $\delta^*$ such that there exist $i,j \in [|T_{\text{align}}|]$ with $f_i^+(\delta^*) = f_j^-(\delta^*)$.
To apply parametric search, we need two properties (see Section~3 in~\cite{Cole87}):
\begin{enumerate}
    \item Given any two functions $f_i^+(\delta), f_j^-(\delta)$, we can decide whether $f_i^+(\delta^*) \leq f_j^-(\delta^*)$ in time $T(n)$.\footnote{Note that we do not know the value of $\delta^*$!}
    \item Given a set of such comparisons, we can order them such that resolving any single comparison will resolve all comparisons before or after in the order. Also, determining the order of two \emph{comparisons} can be done in $\Oh(1)$ time.
\end{enumerate}

Consider the first property and assume $t_i < t_j$; all other cases are simple (as the comparison has the same result for all $\delta > 0$) or analogous.
Note that $f_i^+(\delta) = f_j^-(\delta)$ if and only if $\delta = \frac{t_j - t_i}{2}$, and $f_i^+(\delta) < f_j^-(\delta)$ for $\delta < \frac{t_j - t_i}{2}$ and $f_i^+(\delta) > f_j^-(\delta)$ for $\delta > \frac{t_j - t_i}{2}$.
Hence, using our decider of~\Cref{lem:fut_decider} to evaluate whether $\delta^* \leq \frac{t_j - t_i}{2}$, we can decide in which case we lie and hence decide $f_i^+(\delta^*) \leq f_j^-(\delta^*)$ in $T(n) = \Oh(n^{8/3} \log^2 n)$ time by~\Cref{lem:fut_decider} (without knowing $\delta^*$).

For the second property, consider that any pair of functions has a different threshold value at which the order flips.
Comparing the threshold value for two comparisons takes $\Oh(1)$ time.
If we now order the comparisons according to their threshold values, then deciding one of these comparisons will resolve either all comparisons before or after: if $f_1(\delta^*) \leq f_2(\delta^*)$, then this also holds for all comparisons with larger threshold value; if $f_1(\delta^*) > f_2(\delta^*)$, then this also holds for all comparisons with smaller threshold value.

Hence, we can apply parametric search as described in~\cite{Cole87} with a running time of $\Oh((|T_{\text{align}}| + T(n)) \cdot \log n) = \Oh((n^2 + n^{8/3}) \log^3 n) = \Oh(n^{8/3} \log^3 n)$, obtaining the following theorem:

\mainresulttranslations*

\section{1D Continuous Fréchet Distance Under Scaling}\label{app:scaling}
Our algorithm for the Fréchet distance under scaling works in a similar way as our algorithm for translations. An additional obstacle to overcome is that scaling a time series can change the $\delta$-signature.
We first describe the reduction of the 1D continuous \F under scaling to an offline dynamic grid graph reachability. Subsequently we describe how we solve this problem for two time series $P$ and $Q$ each of complexity $n$ in time $\Oh(n^{8/3}\log n)$.

\subsection{\texorpdfstring{\boldmath The $\Oh(n^2)$ Events}{The O(n\textasciicircum 2) Events}}

After translating a time series, the indices of the $\delta$-signature vertices remain the same. In contrast, the indices of the $\delta$-signature vertices can change when we scale a time series. Therefore, we begin by showing how the indices of the $\delta$-signature vertices change through scaling and show that there exists only $\Oh(n)$ different sets of indices of the $\delta$-signature vertices for scaled time series.

\begin{lemma}\label{lem:connectionSignature}
    Let $\langle Q(j_1), Q(j_2), \dots, Q(j_{s_Q})\rangle$ be the extended $\delta$-signature of $Q$. Then, the extended $s\delta$-signature of $sQ$ is $\langle sQ(j_1), sQ(j_2), \dots, sQ(j_{s_Q})\rangle$.
\end{lemma}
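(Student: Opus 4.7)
The plan is to verify directly, property by property, that the four defining conditions of an extended $\delta$-signature are preserved when we simultaneously multiply the time series and the scale parameter by a positive scalar $s$. Since scaling by a positive real number acts on $\mathbb{R}$ as an orientation-preserving linear homothety, every one-dimensional geometric quantity appearing in the signature definition (distances, containment in segments, containment in balls of radius $\delta$, $\delta$-monotonicity) is multiplied by $s$, and the threshold $\delta$ that it is compared against gets multiplied by $s$ as well. This makes each condition invariant.

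First I would assume $s>0$; the case $s=0$ is degenerate (the claim becomes vacuous since $sQ$ collapses to a constant curve) and can be excluded since the minimization in $d_F^S$ is attained at some $s>0$ whenever either input is non-constant. Then I would check the four conditions of the extended $\delta$-signature for the proposed sequence $\langle sQ(j_1),\dots,sQ(j_{s_Q})\rangle$ of $sQ$ against the threshold $s\delta$:
\begin{enumerate}[(a)]
    \item \emph{(non-degenerate)} $Q(j_k)\notin\overline{Q(j_{k-1})Q(j_{k+1})}$ iff $sQ(j_k)\notin\overline{sQ(j_{k-1})\,sQ(j_{k+1})}$, since multiplication by $s>0$ is an order-preserving bijection on $\mathbb{R}$.
    \item \emph{($2\delta$-monotone)} For any $u<u'$ in $[j_k,j_{k+1}]$, the inequality $Q(u)\leq Q(u')+2\delta$ is equivalent to $sQ(u)\leq sQ(u')+2s\delta$, and analogously for the decreasing case.
    \item \emph{(minimum edge length)} $|Q(j_k)-Q(j_{k+1})|>2\delta$ iff $|sQ(j_k)-sQ(j_{k+1})|=s\,|Q(j_k)-Q(j_{k+1})|>2s\delta$.
    \item \emph{(range)} Both $\im(Q[j_k,j_{k+1}])=\overline{Q(j_k)Q(j_{k+1})}$ and the interval containments $\im(Q[1,j_2])\subset[Q(j_2),Q(j_2)+2\delta]$ (and its symmetric/suffix variants) are preserved under applying the homothety $x\mapsto sx$, which maps intervals to intervals of $s$ times the length and maps the radius-$2\delta$ ball around a point to the radius-$2s\delta$ ball around its image.
\end{enumerate}

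Each of these is a one-line verification, so the "proof" is really just a clean enumeration. I do not anticipate any genuine obstacle; the only mild care needed is to note that the set of indices is the same, which is the content of the claim, and to handle the boundary/prefix/suffix range clause of item~(d) along with the interior clause. If one wanted to allow $s<0$ (which the paper does not), one would additionally swap the roles of $2\delta$-monotone increasing and decreasing and swap the two options in each range-containment bullet, using $|s|$ in place of $s$; this is not needed under the stated definition.
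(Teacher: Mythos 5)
Your proposal is correct and follows essentially the same route as the paper, whose proof is simply the one-line observation that $s|Q(j)-Q(j')| = |sQ(j)-sQ(j')|$, so every condition in the definition of the extended $\delta$-signature scales along with the threshold; your version just spells out the four conditions explicitly (and adds a harmless remark about $s=0$, which the paper tacitly ignores).
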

\begin{proof}
    This lemma follows directly by the definition of the extended $\delta$-signature, since $s|Q(j)-Q(j')|=|sQ(j)-sQ(j')|$ for all $j, j'\in[n]$. 
\end{proof}

\begin{restatable}{lemma}{lemorderSignature}\label{lemma:orderSignatures}
    We can compute in $\mathcal{O}(n\log n)$ time all values $s_j$ with $j=1, \dots, n$ such that the following holds. 
    The vertex $sQ(j)$ is a $\delta$-signature vertex of $sQ$ if and only if $s>s_j$.
\end{restatable}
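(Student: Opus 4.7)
The plan is to reduce the problem to computing, for each vertex $Q(j)$, a critical threshold at which $Q(j)$ enters the signature as the signature scale decreases, and then to collect these thresholds efficiently.

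First, I would apply \cref{lem:connectionSignature} in reverse to observe that $sQ(j)$ is a $\delta$-signature vertex of $sQ$ if and only if $Q(j)$ is a $(\delta/s)$-signature vertex of $Q$. Indeed, applying the lemma to the time series $sQ$ with scaling factor $1/s$ shows that forming a signature commutes with scaling, provided that the threshold is rescaled correspondingly, so the set of signature indices of $sQ$ at threshold $\delta$ equals that of $Q$ at threshold $\delta/s$. Computing the $s_j$ therefore reduces to computing, for each $j$, the critical value $\tau_j \in [0,\infty]$ such that $Q(j)$ is a $\delta'$-signature vertex of $Q$ precisely when $\delta' < \tau_j$, and then setting $s_j = \delta/\tau_j$ with the obvious conventions at the boundary values.

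Next, I would establish the monotonicity needed for $\tau_j$ to be well-defined: the set of signature indices of $Q$ shrinks monotonically as $\delta'$ grows. Conditions (b) and (d) of the extended $\delta$-signature become only easier with increasing $\delta'$, while the minimum-edge-length condition (c) forces short oscillations to be pruned, which can only remove indices. Vertices that are not local extrema of $Q$ receive $\tau_j = 0$, and hence $s_j = \infty$, since they can never be signature vertices by the non-degeneracy and range constraints.

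Finally, I would compute all $\tau_j$ in $\Oh(n \log n)$ time by a sweep over decreasing $\delta'$, maintaining the current signature as a doubly linked list with an associated priority queue keyed by the next threshold at which a vertex is forced to be inserted. The relevant $\delta'$ at which a vertex enters can be read off from a constant-size neighborhood of its two adjacent current signature vertices, since it is determined by local violations of the minimum-edge-length and range constraints, and each insertion spawns only a constant number of new events for its immediate neighbors. Hence the total number of events is $\Oh(n)$, and the priority queue yields the claimed running time. The main obstacle will be making the monotonicity argument fully rigorous for the precise signature definition used in the paper and fixing a deterministic tie-breaking rule so that the $\tau_j$, and hence the $s_j$, are well-defined in degenerate configurations, handled by the same symbolic perturbation device as elsewhere in the paper.
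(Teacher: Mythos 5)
Your core reduction is exactly the paper's: apply \cref{lem:connectionSignature} with the roles reversed to get that $sQ(j)$ is a $\delta$-signature vertex of $sQ$ if and only if $Q(j)$ is a $(\delta/s)$-signature vertex of $Q$, and set $s_j=\delta/\delta_j$ where $\delta_j$ is the critical threshold at which $Q(j)$ leaves the signature. The difference is in how the thresholds $\delta_j$ are obtained: the paper simply imports them, citing Lemma~6.1 of~\cite{DBLP:conf/soda/DriemelKS16} for their existence (the hierarchical/monotone structure of signatures, i.e.\ that $Q(j)$ is a $\delta'$-signature vertex precisely for $\delta'<\delta_j$) and the proof of Theorem~6.1 of~\cite{DBLP:conf/soda/DriemelKS16} for their computation in $\Oh(n\log n)$ time, whereas you attempt to re-derive both from scratch.

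Your re-derivation has two weak points. First, the monotonicity/nestedness of signature vertex sets in $\delta'$ is precisely the content of the cited hierarchy lemma; your argument that conditions (b) and (d) ``become easier'' and (c) ``only removes indices'' is a plausibility sketch, not a proof, and you acknowledge this yourself --- since the result is already available in~\cite{DBLP:conf/soda/DriemelKS16}, the cleaner move is to cite it. Second, your sweep over \emph{decreasing} $\delta'$ with insertions is problematic as described: the claim that the threshold at which a new vertex enters ``can be read off from a constant-size neighborhood of its two adjacent current signature vertices'' does not hold, because at a coarse scale the gap between two adjacent signature vertices contains many candidate vertices, and determining which of them enters next (and at what $\delta'$) depends on the whole subcurve in that gap, not on $\Oh(1)$ local data. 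The standard fix is to run the sweep in the opposite direction --- increasing $\delta'$, repeatedly \emph{removing} signature vertices, with the priority queue keyed by current signature edge lengths, where each removal does affect only $\Oh(1)$ neighboring events --- which is essentially the algorithm behind Theorem~6.1 of~\cite{DBLP:conf/soda/DriemelKS16} that the paper invokes. With that correction (or with the citations substituted for your sketches), your argument matches the paper's proof.
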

\begin{proof}
    By Lemma 6.1 of \cite{DBLP:conf/soda/DriemelKS16}, there exists values $\delta_j$ with $j\in [n]$ such that $Q(j)$ is a $\delta$-signature vertex of $Q$ for every $\delta<\delta_j$ and no $\delta$-signature vertex for every $\delta\geq \delta_j$. Those values can be computed in $\Oh(n\log n)$ time by the proof of Theorem 6.1 of \cite{DBLP:conf/soda/DriemelKS16}. We set $s_j=\delta/\delta_j$ for every $j\in [n]$. By~\cref{lem:connectionSignature}, it holds that $sQ(j)$ is a $\delta=s\delta'$-signature vertex of $sQ(j)$ if and only if $Q(j)$ is a $\delta'$-signature vertex of $Q$. Hence, the claim follows.
\end{proof}

By~\cref{lemma:orderSignatures}, we get the following:
\begin{corollary}[coarse arrangement]\label{obs:coarseArragement}
    There exists an ordered set $\mathfrak{S'}$ of $\Oh(n)$ distinct intervals computable in $\Oh(n\log n)$ time such that the following holds.
    For any two $s,s'$ in the same interval of $\mathfrak{S}'$, the set of indices defining the $\delta$-signature vertices of $sQ$ and $s'Q$ agree. 
\end{corollary}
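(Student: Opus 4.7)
The plan is to directly leverage \cref{lemma:orderSignatures}, which furnishes, in time $\Oh(n\log n)$, the threshold values $s_1,\ldots,s_n$ with the property that $sQ(j)$ is a $\delta$-signature vertex of $sQ$ if and only if $s>s_j$. Crucially, the set of indices defining the $\delta$-signature of $sQ$ depends on $s$ only through the comparisons $s>s_j$ for $j=1,\ldots,n$; as long as none of these comparisons flip, the signature index set cannot change.

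First I would invoke \cref{lemma:orderSignatures} to obtain the values $s_1,\ldots,s_n$, which costs $\Oh(n\log n)$. Next I would sort these values (breaking ties arbitrarily, since ties correspond to simultaneous threshold crossings that can be handled by symbolic perturbation analogous to \cref{app:symbpert}), again in $\Oh(n\log n)$ time. Let $\sigma_1\le\sigma_2\le\cdots\le\sigma_n$ denote the sorted sequence. These thresholds partition $\RR_{\geq 0}$ into at most $n+1$ half-open intervals of the form $[0,\sigma_1]$, $(\sigma_1,\sigma_2]$, $\ldots$, $(\sigma_n,\infty)$, and I take $\mathfrak{S}'$ to be this ordered collection of intervals.

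To verify the required property, fix any two scaling factors $s,s'$ in the same interval of $\mathfrak{S}'$. By construction, $s>\sigma_k\iff s'>\sigma_k$ for every $k$, and hence $s>s_j\iff s'>s_j$ for every $j\in[n]$. By \cref{lemma:orderSignatures}, this means that $sQ(j)$ is a $\delta$-signature vertex of $sQ$ if and only if $s'Q(j)$ is a $\delta$-signature vertex of $s'Q$, so the index sets agree.

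The main (mild) obstacle is handling degenerate cases where two or more thresholds coincide, which would cause an interval to contain a threshold crossing in its interior; as noted above, this is resolved in the same way as in \cref{app:symbpert} by breaking ties via symbolic perturbation. The total running time is dominated by computing and sorting the $s_j$'s, giving $\Oh(n\log n)$, and the number of distinct intervals is at most $n+1=\Oh(n)$, as required.
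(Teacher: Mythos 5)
Your proposal is correct and follows essentially the same route as the paper, which derives the corollary directly from \cref{lemma:orderSignatures} by taking the $\Oh(n)$ threshold values $s_j$, sorting them, and using the resulting intervals as $\mathfrak{S}'$. Your extra care about coincident thresholds is harmless but not really needed, since the truth values of the comparisons $s>s_j$ are constant on each (possibly empty) half-open interval regardless of ties.
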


With the definition of the coarse arrangement, we can now proceed in a similar way to \Cref{sec:ArrangTrans}, to compute a set of scaling values that are sufficient to decide whether the \F under scaling is at most $\delta$.

\begin{lemma}\label{lem:pointpointscaleevents}
    Let $M(P,Q)=(m_{i,j})_{(i,j)\in[n]\times[n]}$ be the matrix where $m_{i,j}=\mathbbm{1}_{|P(i)-Q(j)|\leq\delta}$. Let $s,s'$ be in the same interval in $\mathfrak{S}'$. If $M(P,sQ)=M(P,s'Q)$, then ${d_F(P,sQ)\leq\delta}\iff d_F(P,s'Q)\leq\delta$.
\end{lemma}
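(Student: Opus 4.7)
My plan is to mirror the proof of \Cref{lem:pointpointtranslationevents} for the translation case, using \Cref{lem:dfModMatEquiv} to reduce the statement to an equality of modified free-space matrices. Concretely, by \Cref{lem:dfModMatEquiv} it suffices to show that the modified free-space matrix of $P$ and $sQ$ coincides with that of $P$ and $s'Q$. Unrolling \Cref{def:mod_free_space_matrix}, this reduces to verifying the following six conditions:
\begin{compactenum}
    \item $\forall (i,j)\in[n]\times[n]:M(P,sQ)_{i,j}=M(P,s'Q)_{i,j}$,
    \item the set of indices defining the $\delta$-signature vertices of $sQ$ and $s'Q$ agree,
    \item the \ePQ{\pre{P}}{\pre{sQ}} is the \ePQ{\pre{P}}{\pre{s'Q}},
    \item the \ePQ{\pre{sQ}}{\pre{P}} is the \ePQ{\pre{s'Q}}{\pre{P}},
    \item the \ePQ{\suf{P}}{\suf{sQ}} is the \ePQ{\suf{P}}{\suf{s'Q}}, and
    \item the \ePQ{\suf{sQ}}{\suf{P}} is the \ePQ{\suf{s'Q}}{\suf{P}}.
\end{compactenum}

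The first condition is given by assumption. The second condition is exactly the content of \Cref{obs:coarseArragement} applied to the assumption that $s$ and $s'$ lie in the same interval of $\mathfrak{S}'$. For conditions 3--6, I would invoke \Cref{lem:wip}, which characterizes the minimal matcher of $\pre{P}$ on $\pre{Q}$ entirely in terms of the point-point distance matrix $M(\pre{P},\pre{Q})$ (together with the identity of the last signature vertex of each prefix, which is fixed by condition 2). Since $M(\pre{P},\pre{sQ})$ and $M(\pre{P},\pre{s'Q})$ are the same submatrix of $M(P,sQ)=M(P,s'Q)$ restricted to the indices picked out by condition 2, the minimal matchers coincide. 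Conditions 4--6 follow symmetrically (with the suffix cases relying on \Cref{obs:ConnectionSufPre} to reduce to prefixes of the reversed curves).

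I do not foresee a genuine obstacle here: the only subtlety compared to the translation setting is that scaling could in principle alter the signature index set, which is precisely why we restrict attention to $s,s'$ in a common interval of $\mathfrak{S}'$. Once that is handled by \Cref{obs:coarseArragement}, the prefix/suffix structure and the \emph{indices} of the extreme point sequences (defined purely from the shape of the range-preserving pieces) are identical for $sQ$ and $s'Q$, so every ingredient feeding into the modified free-space matrix depends only on $M(P,sQ)$, which is assumed equal to $M(P,s'Q)$. The lemma then follows immediately from \Cref{lem:dfModMatEquiv}.
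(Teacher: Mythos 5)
Your proposal is correct and follows essentially the same route as the paper's own proof: reducing via \Cref{lem:dfModMatEquiv} to equality of the modified free-space matrices, splitting into the same six conditions, and discharging them with the assumption, \Cref{obs:coarseArragement}, and \Cref{lem:wip} (with \Cref{obs:ConnectionSufPre} for the suffix cases). No gaps to report.
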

\begin{proof}
    By \Cref{lem:dfModMatEquiv}, it suffices to show that if $M(P,sQ)=M(P,s'Q)$ and $M(sQ,sQ)=M(s'Q,s'Q)$ then the modified free-space matrices of $P$ and $sQ$, and $P$ and $s'Q$ agree. Observe that these modified free-space matrices do agree if the following conditions hold
    \begin{compactenum}
        \item $\forall(i,j)\in[n]\times[n]:M(P,sQ)_{i,j}= M(P,s'Q)_{i,j}$,\label{equivScale:item1}
        \item the set of indices %$(j_{1,s},\ldots)$ and $(j_{1,s'},\ldots)$ 
        defining the $\delta$-signature vertices of $sQ$ and $s'Q$ agree,\label{equivScale:item2}
        \item the \ePQ{\pre{P}}{\pre{sQ}} is the \ePQ{\pre{P}}{\pre{s'Q}},\label{equivScale:item3}
        \item the \ePQ{\pre{sQ}}{\pre{P}} is the \ePQ{\pre{s'Q}}{\pre{P}},\label{equivScale:item4}
        \item the \ePQ{\suf{P}}{\suf{sQ}} is the \ePQ{\suf{P}}{\suf{s'Q}}, and
        \item the \ePQ{\suf{Q_t}}{\suf{P}} is the \ePQ{\suf{s'Q}}{\suf{P}}.\label{equivScale:item6}
    \end{compactenum}
    Observe that \Cref{equivScale:item1} holds as $M(P,sQ)=M(P,s'Q)$. Further, by definition of $\mathfrak{S}'$ and \Cref{obs:coarseArragement},
    %definition the extended $\delta$-signature of a time series depends only on inter-point distances of vertices of that time series. Thus, as $M(sQ,sQ)=M(s'Q,s'Q)$, 
    \Cref{equivScale:item2} holds. 
    \Cref{equivScale:item3} holds as by \Cref{lem:wip} the \ePQ{\pre{P}}{\pre{sQ}} only depends on $M(\pre{P},\pre{sQ})$ and hence it agrees with \ePQ{\pre{P}}{\pre{s'Q}} as $M(\pre{P},\pre{sQ})=M(\pre{P},\pre{s'Q})$ as they are submatrices of $M(P,sQ)$ and $M(P,s'Q)$ respectively. Similarly \Cref{equivScale:item4} to \Cref{equivScale:item6} hold implying the claim.
\end{proof}

\begin{observation}\label{obs:arrangement1}
    Let $P$ and $Q$ be two time series. There are at most $\Oh(n^2)$ many values $s$ such that there are values $i\in[n]$ and $j\in[m]$ such that $|P(i)-sQ(j)|=\delta$. These values are precisely the interval boundaries of intervals of the form $\left[\frac{P(i)-\delta}{Q(j)},\frac{P(i)+\delta}{Q(j)}\right]$.
\end{observation}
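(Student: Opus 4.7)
The plan is a direct case analysis. For each fixed pair of indices $(i,j)\in[n]\times[m]$ I would solve the equation $|P(i)-sQ(j)|=\delta$ in the scaling variable $s$, and then union the resulting critical values over all $(i,j)$.

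First, fix $(i,j)$ and consider the function $g_{i,j}(s)=P(i)-sQ(j)$. This is an affine function of $s$, so the equation $|g_{i,j}(s)|=\delta$ has at most two solutions, which (whenever $Q(j)\neq 0$) are precisely $s=\frac{P(i)-\delta}{Q(j)}$ and $s=\frac{P(i)+\delta}{Q(j)}$. The set of scaling parameters for which $|P(i)-sQ(j)|\leq\delta$ is therefore the closed interval with these two numbers as endpoints, which up to the orientation dictated by the sign of $Q(j)$ is exactly $\left[\frac{P(i)-\delta}{Q(j)},\frac{P(i)+\delta}{Q(j)}\right]$, as claimed.

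Next, I would dispose of the degenerate case $Q(j)=0$. Then $g_{i,j}(s)=P(i)$ is constant in $s$, so either $|P(i)|\leq\delta$ for all $s$ or for no $s$, and no value of $s$ on its own satisfies the equality in an isolated fashion, contributing no new events to the arrangement. Hence such pairs can be ignored.

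Finally, I would count: there are at most $nm$ pairs $(i,j)$ with $Q(j)\neq 0$, and each contributes at most two critical values, giving $\Oh(nm)=\Oh(n^2)$ total values of $s$ at which some pair $(P(i),sQ(j))$ realizes distance exactly $\delta$. These are exactly the interval boundaries listed in the statement. The main step to be careful with is just the sign handling for $Q(j)$ and the degenerate cases $Q(j)=0$ (and, if one is pedantic, also $P(i)=\pm\delta$); none of these affects the $\Oh(n^2)$ bound, but they must be acknowledged so that the stated interval formula is interpreted correctly.
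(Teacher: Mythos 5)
Your argument is correct and is exactly the straightforward counting the paper intends: the statement is left as an unproved observation precisely because, for each of the $\Oh(n^2)$ vertex pairs $(i,j)$, the affine function $s\mapsto P(i)-sQ(j)$ yields at most two critical scalings, namely the stated interval endpoints. Your handling of the degenerate case $Q(j)=0$ and of the endpoint order when $Q(j)<0$ is a sensible (and harmless) addition that does not change the bound.
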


We assume that for every $s\in\RR$ exactly two values $i\neq j\in[n]$ exist such that ${|P(i)-sQ(j)|=\delta}$. Otherwise the set of intervals $\left[\frac{P(i)-\delta}{Q(j)},\frac{P(i)+\delta}{Q(j)}\right]$ has two intervals that intersect in exactly one point. Let $\{[a_1,b_1],\ldots\}$ be this set of closed intervals. Introducing symbolic perturbation on every interval boundary $a_i$ and $b_i$ results, similarly to the case of translations, in a well-defined ordering of all the interval boundaries. These symbolically perturbed values are such that for any $\hat{s}$ exactly two values $i\neq j\in[n]$ exist such that $|P(i)-\hat{s}Q(j)|=\delta$.  Moreover, for any value $s\in\RR$ with some interval membership in intervals $\{[a_i,b_i],\ldots\}$ there is a symbolically perturbed version of $\hat{s}$ that has the same membership in perturbed versions of the intervals $\{[\hat{a}_i,\hat{b}_i],\ldots\}$.

\begin{corollary}[scaling representatives] \label{cor:CompuScalingArrangement}
    There exists a sorted set $\mathfrak{S}\subset \RR$ containing $\Oh(n^2)$ points and computable in $\Oh(n^2\log n)$ time with the following properties. 

    \begin{itemize}
    \item Every value $s_i$ of \Cref{lemma:orderSignatures} is contained in $\mathfrak{S}$.
        \item It holds that $d_F^S(P,Q)\leq\delta$ if and only if $\exists s\in\mathfrak{S}$ such that $d_F(P,sQ)\leq\delta$. 
    \item For two consecutive $s,s'$ in $\mathfrak{S}$, there exist only $\Oh(1)$ pairs of indices $k, l\in [n]$ such that $|P(k)-sQ(l)|\leq \delta$ and $|P(k)-s'Q(l)|> \delta$ or the other way round. Further, for the set of all pairs of consecutive $s, s'$ in $\mathfrak{S}$, these indices can be computed in $\Oh(n^2\log n)$ time.
    \end{itemize}
    
\end{corollary}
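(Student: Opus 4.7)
The plan is to construct $\mathfrak{S}$ as the union of two sets of critical scaling values: (i) the $\Oh(n)$ signature-change scalings $\{s_j\}_{j\in[n]}$ obtained from \cref{lemma:orderSignatures}, and (ii) the $\Oh(n^2)$ interval boundaries $\frac{P(i)\pm\delta}{Q(j)}$ identified in \cref{obs:arrangement1}. After collecting and sorting these at most $\Oh(n^2)$ values, I would pick one representative from each open interval between consecutive critical values. To deal with degenerate coincidences between events, I would apply the symbolic perturbation scheme described just after \cref{obs:arrangement1}, guaranteeing that two critical values never coincide after perturbation. Sorting dominates the running time at $\Oh(n^2 \log n)$.

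For the first bullet, each $s_j$ is explicitly inserted by construction. For the second bullet, let $s^* \in \RR$ be arbitrary, and let $s \in \mathfrak{S}$ be the representative in the same cell of the combined arrangement as $s^*$. Since this cell does not cross any $s_j$, the scalings $s$ and $s^*$ lie in the same interval of $\mathfrak{S}'$ from \cref{obs:coarseArragement}, and since the cell does not cross any boundary $\frac{P(i)\pm\delta}{Q(j)}$ we have $M(P, s^* Q) = M(P, s Q)$. Then \cref{lem:pointpointscaleevents} yields $d_F(P, s^* Q) \leq \delta \iff d_F(P, s Q) \leq \delta$, which implies the claimed equivalence with $d_F^S(P,Q) \leq \delta$.

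For the third bullet, consecutive representatives $s, s'$ in $\mathfrak{S}$ are separated by exactly one symbolically perturbed critical value. If that value is a signature-change $s_j$, the distance matrices $M(P, sQ)$ and $M(P, s'Q)$ agree, so no index pair flips. If it is a point-point boundary $\frac{P(k)\pm\delta}{Q(l)}$, then only the single pair $(k,l)$ flips its sign of distance. Either way, only $\Oh(1)$ pairs change between $s$ and $s'$. The identity of the flipped pair, if any, can be stored alongside each boundary during the construction, so that when we sort the critical values we automatically obtain the pair $(k,l)$ for each consecutive gap, without any additional asymptotic work.

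The main obstacle will be ensuring that the combined list of events is processed cleanly under symbolic perturbation, especially at scalings where a signature-change and a point-point event coincide; in such cases we declare the signature-change event to come first (or use any consistent tie-breaking rule), and both event types can be handled in the subsequent sweep because each independently contributes only $\Oh(1)$ changes. The remainder of the argument is analogous to the translation case of \cref{cor:CompuTranslationArrangement}, with the additional bookkeeping required because we have two event types rather than one.
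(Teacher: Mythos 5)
Your proposal is correct and follows essentially the same route as the paper: overlay the coarse signature-change arrangement of \cref{obs:coarseArragement} with the point-point boundaries of \cref{obs:arrangement1}, take sorted representatives (with the $s_j$ included), handle coincidences by symbolic perturbation, and invoke \cref{lem:pointpointscaleevents} for the decision equivalence, recording the flipped pair per gap during the sweep. The only slight looseness --- the precise tie-breaking rule when a signature threshold $s_j$ coincides with a point-point boundary --- is treated at the same level of detail as in the paper itself.
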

We call the set $\mathfrak{S}$ of \cref{cor:CompuScalingArrangement} the \emph{scaling representatives}.
\begin{proof}
    \Cref{obs:coarseArragement} and \Cref{obs:arrangement1} imply that there is an arrangement of $\RR$ such that for every neighboring cells of the arrangement such that the following holds. There exists indices $k$ and $l$ such that the arrangement boundary is induced by $|P(k)-sQ(l)|=\delta$ or $sQ(l)$ is not a $\delta$-signature vertex of $s Q$ and $s'Q(l)$ is a $\delta$-signature vertex of $s' Q$ for all $s'>s$.
    %and $s' Q(l)$ not of $s'Q$ or the other way round.
    %exactly one inter-point distance of $P$ and $sQ$ differs or the set of $\delta$-signature indices differs by one index. Namely, there exists  $k$ and $l$ such that the arrangement boundary is induced by $|P(k)-sQ(l)|=\delta$ or $l$ and $l'$ such that the arrangement boundary is induced by $|sQ(l)-sQ(l')|=\delta$. 
    As this arrangement and a point from each interval in the arrangement can be computed in $\Oh(n^2\log n)$ the claim follows by \Cref{lem:pointpointscaleevents}.
\end{proof}

\subsection{Prefix and Suffix under Scaling}\label{sec:PrefScaling}
Similar to \Cref{sec:PrefTranslation}, we show how to keep track of whether a deadlock exists while scaling the time series $Q$ with the values in $\mathfrak{S}$.

\begin{definition}[scaled preliminary assignment]
    Let $P$ and $Q$ be two time series and let $s\in \RR$. 
    Define the \emph{scaled preliminary assignment} $X_s$ of $P$ and $Q$ to be the preliminary assignment of $\pre{P}$ and $\pre{sQ}$. 
    Let similarly $\Y_s$ be the preliminary assignment of $\pre{sQ}$ and $\pre{P}$. 
    We say that $\X_s(k)$ and $\Y_s(l)$ form a \emph{deadlock}, if $l<{\X_s(k)}$ and $k<{\Y_s(l)}$. 
\end{definition}

Given two time series $P$ and $Q$, we show that we can compute the set $\mathcal{S}\subset\mathfrak{S}$ of scaling factors for which the scaled preliminary assignments $X_s$ and $Y_s$ do not form a deadlock in total time $\Oh(n^2\log n)$.

\begin{observation}
    Let $s, s'$ lie in the same interval of $\mathfrak{S}'$. Then, the extreme point sequences of $\pre{sQ}$ is the same as of $\pre{s'Q}$.
\end{observation}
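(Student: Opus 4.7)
The plan is to exploit two orthogonal facts: first, that the coarse arrangement $\mathfrak{S}'$ was built precisely so that the index set of $\delta$-signature vertices of $sQ$ is locally constant in $s$, so that $\pre{sQ}$ and $\pre{s'Q}$ span the same index interval $[1,j_2]$; second, that the two defining conditions of an extreme point sequence (extremality and range-preservedness) depend only on the order of values, which is preserved by multiplication by a positive scalar. Since the scaling representatives are positive, this will suffice.

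Concretely, I would first invoke \Cref{obs:coarseArragement} to conclude that $sQ$ and $s'Q$ have the same indices $j_1,\ldots,j_{s_Q}$ of $\delta$-signature vertices; in particular $\pre{sQ}$ and $\pre{s'Q}$ are the curves $sQ[1,j_2]$ and $s'Q[1,j_2]$, defined on the same domain $[1,j_2]$. Let $1=a_1<\ldots<a_{\cpQ}=j_2$ be any extreme point sequence of $\pre{sQ}$. I would then check that $(a_k)$ is also an extreme point sequence of $\pre{s'Q}$ by verifying each property. For the extreme point property, $sQ(a_k)=\min(sQ[1,a_k])$ is equivalent to $Q(a_k)=\min(Q[1,a_k])$ because multiplication by $s>0$ is strictly order-preserving on $\RR$; the same equivalence therefore yields $s'Q(a_k)=\min(s'Q[1,a_k])$, and similarly for maxima. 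For the range-preserving property, scalar multiplication commutes with taking images and convex hulls of line segments, so
\[
\im(s'Q[1,a_{k+1}]) \;=\; s'\cdot \im(Q[1,a_{k+1}]) \;=\; s'\cdot \overline{Q(a_k)\,Q(a_{k+1})} \;=\; \overline{s'Q(a_k)\,s'Q(a_{k+1})},
\]
where the middle equality uses the fact that $(a_k)$ is range-preserving for $Q[1,j_2]$ (itself a consequence of it being range-preserving for $sQ[1,j_2]$ by the same argument applied to $s>0$). Hence $(a_k)$ is a valid extreme point sequence of $\pre{s'Q}$ as well, and by symmetry the two collections of valid extreme point sequences coincide.

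The main (mild) subtlety is that extreme point sequences are not a priori unique, so the observation should be read as: one may canonically choose identical extreme point sequences for $\pre{sQ}$ and $\pre{s'Q}$. The argument above delivers exactly this, since it shows the validity condition is literally identical on the two curves, so any canonical tie-breaking rule (for instance, always picking the lexicographically smallest valid sequence, or the one produced by the linear-time algorithm of \cite{DBLP:conf/soda/DriemelKS16}) produces the same sequence for $sQ$ and $s'Q$. Since we only ever apply the observation in the subsequent sweep to transfer the preliminary assignment and deadlock data from $s$ to $s'$, any such fixed choice is sufficient for the downstream argument.
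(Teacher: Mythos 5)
Your proposal is correct and matches the argument the paper leaves implicit (the statement is an unproved observation there): within one interval of $\mathfrak{S}'$ the $\delta$-signature indices, and hence the domain $[1,j_2]$ of the prefix, are fixed by \cref{obs:coarseArragement}, and both the extremality and range-preserving conditions are purely order-theoretic, hence invariant under multiplication by a positive scalar. Your remark on non-uniqueness of extreme point sequences (fixing one canonical choice, which transfers verbatim between $sQ$ and $s'Q$) is a sensible clarification and exactly how the observation is used downstream.
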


\begin{observation}
    $\mathcal{S}\subset \{s\in\RR\mid X_s(1) = 1\} = \{s\in\RR\mid Y_s(1) = 1\}$.
\end{observation}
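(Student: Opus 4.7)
The plan is to establish the two parts of the observation separately, both by reducing to the single condition $|P(1) - sQ(1)| \leq \delta$ and then exploiting the definition of a deadlock with the trivial witnesses $k = l = 1$.

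First I would prove the equality $\{s \in \RR \mid X_s(1) = 1\} = \{s \in \RR \mid Y_s(1) = 1\}$. By the definition of an extreme point sequence, the first element of both $a_1, \ldots, a_\cpP$ (for $\pre{P}$) and $b_1, \ldots, b_\cpQ$ (for $\pre{sQ}$) is the index $1$. Unpacking the preliminary assignment, $X_s(1) = 1$ means that $1$ is the smallest index $i$ with $|P(a_1) - sQ(b_i)| \leq \delta$, which holds precisely when $|P(1) - sQ(1)| \leq \delta$. The same unfolding shows $Y_s(1) = 1 \iff |sQ(1) - P(1)| \leq \delta$, and since the two inequalities are identical, the sets coincide.

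Next I would show the containment $\mathcal{S} \subset \{s \mid X_s(1) = 1\}$ by contraposition. Suppose $X_s(1) \neq 1$, so in particular $X_s(1) > 1$; by the equivalence just established (applied contrapositively), also $Y_s(1) > 1$. Setting $k = l = 1$, we then have $l = 1 < X_s(1)$ and $k = 1 < Y_s(1)$, which is exactly the definition of $X_s(1)$ and $Y_s(1)$ forming a deadlock. Hence $s \notin \mathcal{S}$, completing the proof.

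There is no real obstacle here: the entire argument is a direct unwinding of definitions. The only subtle point to flag is that the claim implicitly uses $X_s(1) < \infty$ to make the inequality $X_s(1) > 1$ meaningful; but if $X_s(1) = \infty$ then in particular $X_s(1) > 1$ and $Y_s(1) > 1$ as well (since $|P(1) - sQ(1)| > \delta$), so the deadlock argument goes through unchanged with $k = l = 1$.
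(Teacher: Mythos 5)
Your proof is correct and is exactly the argument the paper leaves implicit (the statement is given as an observation without proof): unwinding the definitions gives $X_s(1)=1 \iff |P(1)-sQ(1)|\leq\delta \iff Y_s(1)=1$, and if this fails then $k=l=1$ witnesses a deadlock, so $s\notin\mathcal{S}$. Your handling of the $X_s(1)=\infty$ case is also fine, since the deadlock condition only needs $1<X_s(1)$ and $1<Y_s(1)$.
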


The next observation and the next lemma show that the preliminary assignments of two consecutive scaling representatives that lie in the same interval of $\mathfrak{S'}$ differ only slightly.
\begin{observation}\label{obs:oneStepScal}
    Let two consecutive scalings representatives $s,s'$ %in the scaling arrangement $\mathfrak{S}$ 
    that lie in the same interval of $\mathfrak{S'}$ be given. Then, $X_s$ and $X_{s'}$ (resp. $Y_s$ and $Y_{s'}$) agree everywhere except at the index that participates in the point-point distance inducing the arrangement boundary separating~$s$ and~$s'$.
\end{observation}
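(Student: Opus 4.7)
The plan is to essentially mirror the reasoning behind Observation~\ref{obs:oneStepTrans}, with an additional step to rule out signature-vertex changes. First I would invoke Corollary~\ref{obs:coarseArragement}: since $s$ and $s'$ lie in the same interval of $\mathfrak{S'}$, the sets of indices defining the $\delta$-signature vertices of $sQ$ and $s'Q$ coincide, so $\pre{sQ}$ and $\pre{s'Q}$ share an extreme point sequence $b_1 < \cdots < b_q$, as stated in the preceding observation. The extreme point sequence $a_1 < \cdots < a_p$ of $\pre{P}$ is independent of the scaling factor. Consequently, all four preliminary assignments $X_s, X_{s'}, Y_s, Y_{s'}$ are computed relative to the same ambient index sets, and any discrepancy between $X_s$ and $X_{s'}$ (or $Y_s$ and $Y_{s'}$) must be caused exclusively by a change in the Boolean pattern $|P(a_k) - sQ(b_\ell)| \leq \delta$ versus $|P(a_k) - s'Q(b_\ell)| \leq \delta$.

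Next, I would appeal to the symbolic perturbation set up just before Corollary~\ref{cor:CompuScalingArrangement}, which ensures that at any scaling value exactly one pair of indices $(i^*, j^*)$ realizes the boundary equation $|P(i^*) - sQ(j^*)| = \delta$. Combined with the fact that no signature-event can occur within a single interval of $\mathfrak{S'}$, passing from $s$ to the next scaling representative $s'$ flips the $\delta$-membership of exactly one pair $(i^*, j^*)$. Writing $a_{k^*} = i^*$ and $b_{\ell^*} = j^*$, for every $k \neq k^*$ the set $\{\ell : |P(a_k) - sQ(b_\ell)| \leq \delta\}$ is identical under $s$ and $s'$, and hence its minimum, namely $X_s(k)$, equals $X_{s'}(k)$. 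The analogous argument with the roles of $P$ and $Q$ swapped yields $Y_s(\ell) = Y_{s'}(\ell)$ for every $\ell \neq \ell^*$.

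The only subtlety I anticipate is that Corollary~\ref{cor:CompuScalingArrangement} only guarantees $\Oh(1)$ flips per consecutive pair; the sharper single-flip statement we need here relies on both the symbolic perturbation and the hypothesis that $s$ and $s'$ lie in a common interval of $\mathfrak{S'}$, which together rule out coincident distance-events as well as simultaneous signature-events. Once this is pinned down, the observation follows with essentially no further work, exactly paralleling Observation~\ref{obs:oneStepTrans} and setting the stage for a scaling analogue of Lemma~\ref{lem:updateStepTrans} in the subsequent exposition.
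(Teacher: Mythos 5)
Your argument is correct and matches the paper's intended reasoning: the paper states this as an observation without an explicit proof, and the implicit justification is exactly what you give — within an interval of $\mathfrak{S}'$ the signature (hence the extreme point sequences) is fixed, the preliminary assignments depend only on the pattern of $\delta$-memberships $|P(a_k)-sQ(b_\ell)|\leq\delta$, and symbolic perturbation ensures exactly one vertex-vertex membership flips between consecutive representatives, so only the assignment at the participating index can change. Your closing remark correctly pins down the one subtlety (single flip versus the $\Oh(1)$ bound of \cref{cor:CompuScalingArrangement}), so nothing is missing.
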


\begin{lemma}\label{lem:updateStepScal}
    Let two consecutive scaling representatives $s,s'$ be given that lie in the same interval of $\mathfrak{S'}$. Let $a_1, \dots, a_p$ be an extreme point sequence of $\pre{P}$ and $b_1, \dots, b_q$ of $\pre{sQ}$.
    Let $k$ be the index of $P$ and $l$ be the index of $Q$ participating in the point-point distance inducing the arrangement boundary separating $s$ and $s'$. 
    Then, it holds that
    \begin{compactenum}
        \item $|Y_s(l) - Y_{s'}(l)| \leq 2$,
        \item $Y_s(l) = \infty$ and $Y_{s'}(l) \in \{p-1,p\}$, or
        \item $Y_{s'}(l) = \infty$ and $Y_s(l) \in \{p-1,p\}$.
    \end{compactenum}
    Similarly, it holds that
    \begin{compactenum}
        \item $|X_s(k) - X_{s'}(k)| \leq 2$,
        \item $X_s(k) = \infty$ and $X_{s'}(k) \in \{q-1,q\}$, or
        \item $X_{s'}(k) = \infty$ and $X_s(k) \in \{q-1,q\}$.
    \end{compactenum}
\end{lemma}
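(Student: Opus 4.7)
The plan is to mirror the proof of \Cref{lem:updateStepTrans} for translations essentially verbatim, with the only substantive change being that the translated query point $Q(b_l)+t$ is replaced by the scaled query point $sQ(b_l)$. The crucial prerequisite, supplied by the assumption that $s, s'$ lie in the same interval of $\mathfrak{S}'$, is that the $\delta$-signature index sets of $sQ$ and $s'Q$ agree (by \Cref{obs:coarseArragement} and \Cref{lem:connectionSignature}), and hence so do the extreme point sequences of $\pre{sQ}$ and $\pre{s'Q}$, so that the conclusion is well-posed.

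For the $Y_s$ claims, I would define the nested family of closed sets $\mathcal{F}_i = \bigcup_{i'\leq i} B(P(a_{i'}),\delta)$ for $1 \leq i \leq p$ together with $\mathcal{F}_\infty=\RR$, and let $\mathcal{A}$ be the arrangement of $\RR$ they induce. Since $\pre{P}$ lies in a $2\delta$-range and consecutive extreme points alternate as new strict minima and maxima, each $\mathcal{F}_i$ is a contiguous interval and $\mathcal{F}_i, \mathcal{F}_j$ do not share a boundary whenever $|i-j|\geq 2$; consequently neighboring cells of $\mathcal{A}$ differ in level by at most $2$. One then writes $Y_s(l) = \min\{k'\in [p]\cup\{\infty\} \mid sQ(b_l)\in \mathcal{F}_{k'}\}$ and observes that the map $s \mapsto sQ(b_l)$ is affine, so by \Cref{cor:CompuScalingArrangement} it crosses at most one boundary of $\mathcal{A}$ between $s$ and $s'$. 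This yields $|Y_s(l)-Y_{s'}(l)|\leq 2$; when one side is $\infty$ the boundary crossed lies on $\partial \mathcal{F}_p$, which is formed by the balls around the two global extrema $P(a_{p-1}), P(a_p)$ of $\pre{P}$, giving the $\{p-1,p\}$ case.

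For the $X_s$ claims, the construction is analogous but now with the $s$-dependent family $\mathcal{G}_j^s = \bigcup_{l'\leq j} B(sQ(b_{l'}),\delta)$; since $b_1,\ldots,b_q$ is an extreme point sequence of the $2\delta$-range-bounded $\pre{sQ}$, this family is again nested, each $\mathcal{G}_j^s$ is a contiguous interval, and the same non-boundary-sharing property holds. Then $X_s(k) = \min\{l'\in [q]\cup\{\infty\} \mid P(a_k)\in \mathcal{G}_{l'}^s\}$, and between $s$ and $s'$, \Cref{cor:CompuScalingArrangement} ensures that exactly one ball of the family changes its containment of $P(a_k)$; equivalently, $P(a_k)$ crosses exactly one boundary of the (now $s$-dependent) arrangement. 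The main subtlety of the proof is precisely that this arrangement itself moves with $s$, but because only a single ball flips its membership the same level-change argument carries over, yielding $|X_s(k)-X_{s'}(k)|\leq 2$ or the analogous $\infty$-to-$\{q-1,q\}$ case (the outer boundary of $\mathcal{G}_q^s$ is formed by $B(sQ(b_{q-1}),\delta)$ and $B(sQ(b_q),\delta)$, the balls around the two extremes of $\pre{sQ}$).
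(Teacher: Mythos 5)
Your proof is correct and follows essentially the same route as the paper: the fixed nested family of $\delta$-balls around the extreme points of $\pre{P}$ with the moving point $sQ(b_l)$ for the $Y$-claims, and the symmetric argument for the $X$-claims (which the paper dispatches with \enquote{follows similarly}, while you spell out the $s$-dependent family and the single-ball-flip justification). Your extra care with the $\infty$ cases and with the moving arrangement for $X_s$ only makes explicit what the paper leaves implicit; no gap.
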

\begin{proof}
    We define the following nested family of closed sets \(\mathcal\{{F}_i=\bigcup_{i'\leq i}B(P(a_i),\delta)\mid 1\leq i\leq p\}\cup\{F_\infty=\RR\}\). 
    Let $\mathcal{A}$ be the arrangement of $\RR$ defined by these sets.
    By definition of $P$ and $a_1,\ldots,a_p$, each $F_i$ is a contiguous interval in $\RR$ and for $i\leq j-2$ it holds that $F_i$ and $F_j$ do not share a boundary. % if $F_i$ and $F_j$ differ by two or more levels (in the nested family), then $F_i$ and $F_j$ do not share a boundary. 
    Further, we have that 
    $Y_s(l) = \min\{k' \in [p]\cup\{\infty\}\mid  sQ(b_l)\in F_{k'}\}$.

    If $s$, $s'$ are two consecutive scaling representatives then $sQ(b_l)$ and $s'Q(b_l)$ must lie in neighboring intervals in $\mathcal{A}$ and thus their level in the set family can differ by at most two, implying the claim for $Y_s$ and $Y_{s'}$. The claim for $X_s$ and $X_{s'}$ follows similarly.
\end{proof}

\begin{lemma}\label{lem:PrefixScaling}
    There is an algorithm that correctly computes the subset $\mathcal{S}\subset\mathfrak{S}$ of scalings for which~$X_s$ and~$Y_s$ do not form a deadlock in $\Oh(n^2\log n)$. This is the set of scalings in~$\mathfrak{S}$ for which the preliminary assignments of~$\pre{P}$ and~$\pre{sQ}$ do not form a deadlock. Similarly, we can compute the set of scalings in $\mathfrak{S}$ for which $\suf{P}$ and $\suf{sQ}$ do not form a deadlock in $\Oh(n^2\log n)$ time.
\end{lemma}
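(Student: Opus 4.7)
The plan is to follow the template of Lemma~\ref{lem:PrefixTranslation} as closely as possible, reusing the deadlock sweep but inserting additional bookkeeping to cope with the $\Oh(n)$ events at which the $\delta$-signature of $sQ$ changes. First I would compute the scaling representatives $\mathfrak{S}$ in $\Oh(n^2\log n)$ time via Corollary~\ref{cor:CompuScalingArrangement}, and restrict attention to the contiguous subrange $\mathcal{S}' \subset \mathfrak{S}$ on which $X_s(1)=1$ (equivalently $Y_s(1)=1$). For the smallest scaling representative in $\mathcal{S}'$ I would compute the $\delta$-signature of $sQ$, the extreme point sequences of $\pre{P}$ and $\pre{sQ}$, the preliminary assignments $X_s$ and $Y_s$, the deadlock set $\mathcal{D}_s = \{k \mid \exists\, l \text{ such that } X_s(k) \text{ and } Y_s(l) \text{ form a deadlock}\}$, and the counter $d_s = |\mathcal{D}_s|$. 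By Lemma~\ref{lem:preliminaryAssignmentStructure}(iv) it suffices to test each $k$ against $Y_s(X_s(k)-1)$, so this initialization takes $\Oh(n\log n)$ time.

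Next I would sweep through consecutive pairs $s, s'$ of scaling representatives in $\mathcal{S}'$ in sorted order. If both endpoints lie inside the same interval of the coarse arrangement $\mathfrak{S}'$, then the extreme point sequences of $\pre{P}$ and $\pre{sQ}$ persist and, by Observation~\ref{obs:oneStepScal} together with Lemma~\ref{lem:updateStepScal}, the assignments $X_s, Y_s$ differ from $X_{s'}, Y_{s'}$ only at a single pair of indices and by at most two positions each. Exactly as in the translation proof, only indices $\hat{k} \in \mathcal{D}_s$, $\hat{k}$ equal to the changing index, or $\hat{k}$ within distance two of $Y_s(l)$ can flip their deadlock status, and each check takes $\Oh(1)$ via Lemma~\ref{lem:preliminaryAssignmentStructure}(iv). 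This contributes $\Oh(n^2)$ to the total cost.

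The distinctive step is the treatment of the $\Oh(n)$ boundaries between consecutive intervals of $\mathfrak{S}'$, at which a single vertex $sQ(j)$ enters or leaves the $\delta$-signature of $sQ$ by Lemma~\ref{lemma:orderSignatures}. If this vertex lies strictly between the current second signature vertex and the penultimate one, then $\pre{sQ}$ is unchanged and the incremental update above still applies verbatim. Otherwise the prefix boundary index $j_2$ shifts, $\pre{sQ}$ changes length, and its extreme point sequence must be edited. Since there are only $\Oh(n)$ such events across the entire sweep, I would simply recompute the signature, both extreme point sequences, $X, Y, \mathcal{D}, d$ from scratch at each one in $\Oh(n\log n)$ time, for a total contribution of $\Oh(n^2\log n)$. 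The suffix case is handled identically after applying Observation~\ref{obs:ConnectionSufPre} to pass to the reversed time series.

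The main obstacle will be verifying that the bookkeeping remains consistent across a signature-change event, i.e.\ that the notion of deadlock defined against the new extreme point sequence of $\pre{sQ}$ still faithfully captures the decision criterion of Lemma~\ref{l:startAndEnd} for the updated prefix, and that no deadlock appearing at a scaling representative in $\mathfrak{S}$ is missed because it arises precisely at a signature-change event. Because each such event is processed from scratch this reduces to a correctness rather than a complexity question, and I would isolate it as a short structural claim describing exactly how adding or removing a single signature vertex modifies the extreme point sequence of $\pre{sQ}$ before invoking it at each of the $\Oh(n)$ boundary events. Summing all contributions yields the claimed $\Oh(n^2\log n)$ running time.
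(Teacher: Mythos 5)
Your proposal is correct and follows essentially the same route as the paper: the paper likewise restricts to the contiguous range where $X_s(1)=1$, maintains the deadlock count $d_s$ incrementally within each interval of the coarse arrangement $\mathfrak{S}'$ via Observation~\ref{obs:oneStepScal}, Lemma~\ref{lem:updateStepScal} and Lemma~\ref{lem:preliminaryAssignmentStructure}(iv), and pays $\Oh(n\log n)$ to reinitialize at each of the $\Oh(n)$ coarse-interval boundaries (your \enquote{recompute from scratch at signature-change events}), summing to $\Oh(n^2\log n)$, with the suffix handled by Observation~\ref{obs:ConnectionSufPre}.
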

\begin{proof}
    We begin by computing the scaling representatives $\mathfrak{S}$ and store it as $\Oh(n^2)$ sorted intervals in $\Oh(n^2\log n)$ time. Next, compute from the contiguous interval $[a,b]=\{s\in\RR\mid X_s(1) = 1\}$ the set $\mathcal{S}'=\mathfrak{S}\cap [a,b]$, in $\Oh(n^2)$ time. Then, $\mathcal{S}'$ consists of $\mathcal{O}(n^2)$ values $s_1<\ldots< s_z$. 

    Let $I\in\mathfrak{S}'$ (the coarse scaling arrangement). Then, for all $s, s'\in I\cap \mathcal{S}'$, the domain of $X_s$ and $Y_s$ does not change.
    We handle each of these interval $I$ separately. Let $k_I$ be the cardinality of $I\cap \mathcal{S}'$.
    We proceed in the same way as in the proof of~\cref{lem:PrefixTranslation}.
    First, define $\mathcal{D}_s=\{k\mid\exists l:\text{$X_s(k)$ and $Y_s(l)$ form a deadlock}\}$.
    Initially, we compute $X_{s}$ and $Y_{s}$ for the smallest value $s\in \mathcal{S}'\cap I$. In addition, we store for every $1\leq k\leq \cpP$, whether $X_{s}(k)$ forms a deadlock with some $Y_{s}(l)$ or not, that is we decide whether $k\in\mathcal{D}_s$. Note that by \Cref{lem:preliminaryAssignmentStructure} (iv), it holds that $k\in \mathcal{D}_s$ if and only if $X_{s}(k)$ and $Y_{s}(X_{s}(k)-1)$ form a deadlock. Hence, this can be computed in $\mathcal{O}(n)$ time for all~$k$. Further, we store the number $d_s = |\mathcal{D}_s|$. Then, $X_{s}$ and $Y_{s}$ do not form a deadlock if and only $d_s=0$.
    
    Now let the above data be given for some scaling factor $s\in\mathcal{S}'$ and let $s'$ be the next scaling in $\mathcal{S}'\cap I$. Let $k\leq p$ and $l\leq q$ be the two indices that induced the arrangement boundary between $s$ and $s'$. We show how to update above data correctly.
    By \Cref{obs:oneStepScal} and \Cref{lem:updateStepScal}, for any $\hat{k}\in \mathcal{D}_s$ either $\hat{k} = k$ or $|\hat{k}-Y_s(l)|\leq 2$ or $\hat{k}\in \mathcal{D}_{s'}$. In the first two cases, we can check whether $\hat{k}\in\mathcal{D}_{s'}$ in $\Oh(1)$ time by \Cref{lem:preliminaryAssignmentStructure} (iv). Thus we can compute $d_{s'}$ given $d_s$ in time $\mathcal{O}(1)$. 
    Hence, we compute in $\mathcal{O}(n\log n+k_I)$ time the scaling factors $s\in I\cap \mathcal{S}'$ for which $X_s$ and $Y_s$ do not form a deadlock. 
    It holds that $\sum_{I\in \mathcal{S}'}k_I\in \mathcal{O}(n^2)$ as $\mathfrak{S}'$ consists of disjoint intervals.
    As we do this for all intervals in $\mathfrak{S}'$, the total running time is in $\sum_{i\in \mathfrak{S'}}\mathcal{O}(n\log n+k_I)=\mathcal{O}(n^2\log n)$.
    The claim for the suffix follows by~\Cref{obs:ConnectionSufPre}.
\end{proof}

\subsection{Solving the Decision Problem}
We are now ready to prove our result for the decision problem of the Fréchet distance under scaling of time series.

\begin{lemma} \label{lem:fus_decider}
    We can decide the Fréchet distance under scaling between two time series in time $\Oh(n^{8/3} \log^2 n)$.
\end{lemma}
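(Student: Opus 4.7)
The plan is to mirror the proof of \Cref{lem:fut_decider}, using the scaling-specific counterparts developed earlier in this section. The added difficulty compared to the translation case is that the $\delta$-signature of $sQ$ changes with $s$, so entries of the modified free-space matrix can change not only because of vertex-vertex alignment events but also because of signature changes at the $\Oh(n)$ coarse boundaries from \Cref{obs:coarseArragement}.

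In a pre-processing step, I would compute the signature and extreme-point sequences of $P$ and of $sQ$ for the initial scaling in $\Oh(n)$ time. Invoking \cref{cor:CompuScalingArrangement} yields the sorted set $\mathfrak{S}$ of scaling representatives (together with, for each consecutive pair, the $\Oh(1)$ index pairs whose alignment status changes) in $\Oh(n^2 \log n)$ time; by construction, $\mathfrak{S}$ already contains all coarse signature-change values from \Cref{lemma:orderSignatures}. I would then invoke \Cref{lem:PrefixScaling} (and its symmetric version for suffixes via \Cref{obs:ConnectionSufPre}) to pre-compute the subset $\mathcal{S} \subset \mathfrak{S}$ of scalings for which the preliminary assignments of the prefixes and suffixes do not form a deadlock, again in $\Oh(n^2 \log n)$ time.

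Next, I would sweep over $\mathfrak{S}$ in order and maintain reachability in the modified free-space matrix of \Cref{def:mod_free_space_matrix} using \Cref{thm:offline_dyn_grid_reachability}. To bound the total number of offline updates $U$, I would split the events into two classes. Between consecutive representatives lying in the same interval of the coarse arrangement $\mathfrak{S}'$, the signature indices are fixed and only $\Oh(1)$ vertex-vertex alignments change, so the corresponding updates for entries of types (a)-(f) can be pre-computed in $\Oh(\log n)$ time per event using membership in $\mathcal{S}$, \Cref{lem:(i_2j_2)}, and \Cref{lem:searchprefixend}. This contributes $\Oh(n^2)$ updates across all of $\mathfrak{S}$. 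At each of the $\Oh(n)$ coarse boundaries, a single vertex of $sQ$ is inserted into or removed from the signature; this can flip at most $\Oh(n)$ entries of the corresponding row/column between types (a) and (b), and additionally shift the special indices $i_2,j_2,i_{s_P-1},j_{s_Q-1}$ and the associated matcher positions $\wpre,\vpre,\wsuf,\vsuf$, accounting for $\Oh(n)$ further updates per boundary.

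The main obstacle is precisely handling these signature-change events within the offline paradigm, since each one triggers $\Oh(n)$ updates; the crucial observation is that there are only $\Oh(n)$ of them, so they contribute only $\Oh(n^2)$ updates in total, and all of these can be determined offline in $\Oh(n^2 \log n)$ time from the pre-computed schedule of signature insertions/deletions in $\mathfrak{S}$. Combined with the $\Oh(n^2)$ updates from alignment events, we obtain $U = \Oh(n^2)$ offline updates. Applying \Cref{thm:offline_dyn_grid_reachability} then gives a total running time of $\Oh(n^2 + U \cdot n^{2/3} \log^2 n) = \Oh(n^{8/3} \log^2 n)$, proving the lemma.
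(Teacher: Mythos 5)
Your proposal is correct and follows essentially the same route as the paper's proof: the same decomposition into signature-change events at the $\Oh(n)$ coarse boundaries (each costing $\Oh(n)$ offline updates) versus $\Oh(1)$ alignment updates per scaling representative, handled with \Cref{cor:CompuScalingArrangement}, \Cref{lem:PrefixScaling}, \Cref{lem:(i_2j_2)}, \Cref{lem:searchprefixend}, and \Cref{thm:offline_dyn_grid_reachability}. The only cosmetic difference is that, by \Cref{lemma:orderSignatures}, signature vertices are only ever added (never removed) as $s$ increases, which simplifies but does not change your accounting.
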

\begin{proof}
% reduction to dynamic grid reachability problem
The goal is to apply \cref{thm:offline_dyn_grid_reachability} to maintain reachability in the modified free-space matrix defined in \cref{def:mod_free_space_matrix} for all scalings $\mathfrak{S}$ from \cref{cor:CompuScalingArrangement}. This set can be computed in $\Oh(n^2 \log n)$ time.
We scale the time series $Q$ with scaling factors in $\mathfrak{S}$ in increasing order. 

In contrast to the translation setting, for scalings, the signature of the scaled time series~$Q$ changes and hence also its prefix and suffix.
In a pre-processing step, we compute the $n$ different scalings where a new vertex is added to the signature of the scaled time series~$Q$ (which then stays in the signature until the end). This can be done in $\Oh(n \log n)$ time by \cref{lemma:orderSignatures}. Maintaining the extreme point sequence of $\pre{Q}, \suf{Q}$ over all scalings therefore also only takes $\Oh(n\log n)$ time.
We invoke \cref{lem:PrefixScaling} to compute the set $\mathcal{S} \subset \mathfrak{S}$ for which there are no deadlocks of the prefixes and suffixes using $\Oh(n^2 \log n)$ time.

The remaining proof is divided into two parts.
First, we argue that the changes to the $\delta$-signature (and therefore also to the prefix and suffix) that occur due to the scaling incur a quadratic number of updates in total to the modified free-space matrix.
Second, we show that all remaining updates (which are similar to the ones in the translation case) amount to $O(n^2)$ in total as well.

First, let us consider the updates that we have to perform on the modified free-space matrix of \cref{def:mod_free_space_matrix} for each type of entry (a) to (f) if the $\delta$-signature changes.
If, due to scaling, a new $\delta$-signature vertex is created, we might have to flip some entries that were previously of type (a); these amount to at most $\Oh(n)$ per new signature vertex, hence $\Oh(n^2)$ in total.
Also, if the vertex $j_2$ (resp. $j_{s_Q-1}$) becomes vertex $j_3$ (resp. $j_{s_Q-2}$) due to a new $\delta$-signature vertex, this potentially introduces some entries of type (b); these again amount to at most $O(n)$ entries per new signature vertex.
Since we recompute the entries of type (d) to (f) for each new scaling, we treat them in the next case.

For the second case, it remains to be shown that if we consider the scalings $\mathfrak{S}$ in order and we already applied the updates due to changes in the $\delta$-signature, then there are only $\Oh(1)$ many remaining updates per scaling to the modified free-space matrix of \cref{def:mod_free_space_matrix}, for each type of entry (a) to (f). Furthermore, we need to determine all of these updates in time $\Oh(n^2 \log n)$.
Combining \cref{cor:CompuScalingArrangement} and the information from the pre-computed signatures, we can pre-compute the $\Oh(1)$ many updates per scaling in $\mathfrak{S}$ of type (a),(b), and (c) in time $\Oh(n^2 \log n)$.
For entries of type (d) and (e), we first check whether the translation at hand incurs a deadlock or not (i.e., we check for containment in the pre-computed set $\mathcal{S}$).
If it does not incur a deadlock, then we use \cref{lem:(i_2j_2)} to determine in constant time whether $d_F(\pre{P}, \pre{Q})\leq \delta$ and $d_F(\suf{P}, \suf{Q})\leq \delta$.
For type (f), we use the pre-computed signatures and \cref{lem:searchprefixend} to compute $\wpre, \vpre, \wsuf, \vsuf$ in $\Oh(\log n)$ time.
Note that in this step we do not need to explicitly compute the scaled prefix and suffix.
By the above, each entry type incurs at most $\Oh(1)$ updates per translation, and we therefore have $\Oh(n^2)$ updates in total.

All of the updates above can be pre-determined, i.e., they are offline. Consequently, we can apply~\Cref{thm:offline_dyn_grid_reachability} and obtain a running time of $\Oh(n^2 \cdot n^{2/3} \log^2 n) = \Oh(n^{8/3} \log^2 n)$.
\end{proof}

\subsection{Solving the Optimization Problem}

Applying parametric search to the scaling variant is very similar to the translation variant. We nevertheless include a proof for completeness.
We again use the variant of Meggido's parametric search~\cite{Megiddo83} due to Cole~\cite{Cole87}, as was also used in~\cite{DBLP:journals/talg/BringmannKN21}.

% Let us again first consider a simple but inefficient variant to solve the optimization problem.
% To that end, we define the set of all scalings that align a pair of vertices within distance $\delta$ (see \cref{obs:arrangement1}):
% \[
% S_\delta = \left\{ \frac{P(i) \pm \delta}{Q(j)} \mid i,j \in\{1, \dotso, n\} \right\}.
% \]
% For the Fréchet distance under scaling $\delta^*$ we have that there exist $s(\,s' \in S_{\delta^*}$ such that $s = s'$.\andre{dumb way to phrase with a set}
% Hence, for each pair $s, s' \in S$ we can test what the corresponding $\delta$ is and use the decider on that value.
% The $s,s'$ with smallest $\delta$ such that the decision query is positive is the Fréchet distance under scaling $\delta^*$.
% The above algorithm has running time $\Oh(|T|^2) = \Oh(n^4)$.
% \andre{What, that's bullshit! The decider running time is not taken into account here.}
%
% Fortunately, we can use parametric search to significantly reduce the running time and only incur an additional logarithmic factor compared to the running time of the decision problem.

For all $i,j \in [n]$, we define the functions $f_{i,j}^+(\delta) \coloneqq \frac{P(i) + \delta}{Q(j)}$ and $f_{i,j}^-(\delta) \coloneqq \frac{P(i) - \delta}{Q(j)}$.
These are the functions that we want to sort in our parametric search.
Let
\[
F = \{f_{i,j}^+(\delta) \mid i,j \in [n]\} \cup \{f_{i,j}^-(\delta) \mid i,j \in [n]\}
\]
be the set of all these functions.
Note that the Fréchet distance under scaling is realized for some $\delta^*$ such that there exist two distinct $f_1,f_2 \in F$ with $f_1(\delta^*) = f_2(\delta^*)$.
To apply parametric search, we need two properties (see Section~3 in~\cite{Cole87}):
\begin{enumerate}
    \item Given any two functions $f_1(\delta), f_2(\delta) \in F$, we can decide whether $f_1(\delta^*) \leq f_2(\delta^*)$ in time $T(n)$.
    \item Given a set of such comparisons, we can order them such that resolving any single comparison will resolve all comparisons before or after in the order. Also, determining the order of two \emph{comparisons} can be done in $\Oh(1)$ time.
\end{enumerate}

Consider the first property and consider $f_{i,j}^+(\delta), f_{k,\ell}^-(\delta)$ with $\frac{P(i)}{Q(j)} < \frac{P(k)}{Q(\ell)}$; all other cases are simple (as the comparison has the same result for all $\delta > 0$) or analogous.
By simple calculations, we obtain that $f_{i,j}^+(\delta) = f_{k,\ell}^-(\delta)$ if and only if
\[
\delta = \delta_{i,j,k,\ell} \coloneqq \frac{P(k)Q(j) - P(i)Q(\ell)}{Q(j) + Q(\ell)},
\]
and $f_{i,j}^+(\delta) < f_{k,\ell}^-(\delta)$ for $\delta < \delta_{i,j,k,\ell}$ and $f_{i,j}^+(\delta) > f_{k,\ell}^-(\delta)$ for $\delta > \delta_{i,j,k,\ell}$.
Hence, using our decider of~\Cref{lem:fus_decider} to evaluate whether $\delta^* \leq \delta_{i,j,k,\ell}$, we can decide in which case we lie and hence decide $f_i^+(\delta^*) \leq f_j^-(\delta^*)$ in $T(n) = \Oh(n^{8/3} \log^2 n)$ time by~\Cref{lem:fus_decider} (without knowing $\delta^*$).

For the second property, consider that any pair of functions has a different threshold value at which the order flips.
Comparing the threshold value for two comparisons takes $\Oh(1)$ time.
If we now order the comparisons according to their threshold values, then deciding one of these comparisons will resolve either all comparisons before or after: if $f_{i,j}^+(\delta^*) \leq f_{k,\ell}^-(\delta^*)$, then this also holds for all comparisons with larger threshold value; if $f_{i,j}^+(\delta^*) > f_{k,\ell}^-(\delta^*)$, then this also holds for all comparisons with smaller threshold value.

Hence, we can apply parametric search as described in~\cite{Cole87} with a running time of $\Oh((|F| + T(n)) \cdot \log n) = \Oh((n^2 + n^{8/3}) \log^3 n) = \Oh(n^{8/3} \log^3 n)$, obtaining the following theorem:

\mainresultscaling*

\bibliography{main}

\end{document}